\documentclass[11pt,letter]{article}
%

\usepackage{float}
\usepackage{hyperref}
\usepackage{lineno}


\usepackage{epsfig}
\usepackage{amssymb}
\usepackage{lscape,graphicx}
\usepackage{array}
\usepackage{setspace}
\usepackage{fullpage}
\usepackage{centernot}
\usepackage{bbm}
\usepackage{chemarr}
\usepackage[margin=1in]{geometry}
\usepackage{authblk}



\makeatletter
\renewcommand{\@biblabel}[1]{\quad#1.}
\makeatother

      \newcommand {\mm}[1] {\ifmmode{#1}\else{\mbox{\(#1\)}}\fi}
      \newcommand{\real} {\mm{{\mathbb R}}}

      \newcolumntype{^}{>{\currentrowstyle}}

\newcommand{\utwi}[1]{\mbox{\boldmath $ #1$}}

\newcommand{\bp}{{\utwi{p}}}

\newcommand{\bs}{{\utwi{s}}}

\newcommand{\bx}{{\utwi{x}}}
\newcommand{\by}{{\utwi{y}}}

\newcommand{\bA}{{\utwi{A}}}
\newcommand{\bB}{{\utwi{B}}}

\newcommand{\bP}{{\utwi{P}}}
\newcommand{\bQ}{{\utwi{Q}}}
\newcommand{\bR}{{\utwi{R}}}

\newcommand{\bpi}{{\utwi{\pi}}}

\newcommand{\diag} {\textrm{diag}}

\DeclareMathOperator{\Err}{Err}

\DeclareMathAlphabet{\mathpzc}{OT1}{pzc}{m}{it}

\newtheorem{theorem}{Theorem}
\newtheorem{lemma}{Lemma}

\newenvironment{proof}[1][Proof]{\begin{trivlist}
\item[\hskip \labelsep {\bfseries #1}]}{\end{trivlist}}

\title{State space truncation with quantified errors for accurate solutions to discrete Chemical Master Equation}
\author[1,2]{Youfang Cao\thanks{ycao@lanl.gov}}
\author[1]{Anna Terebus}
\author[1,3]{Jie Liang\thanks{jliang@uic.edu}}
\affil[1]{Department of Bioengineering, University of Illinois at Chicago, Chicago IL}
\affil[2]{Current address: Theoretical Biology and Biophysics (T-6), Center for Nonlinear Studies (CNLS), Los Alamos National Laboratory, Los Alamos, NM}
\affil[3]{Corresponding author}

\begin{document}
\date{}
\maketitle

\begin{abstract}
The discrete chemical master equation (dCME) provides a general
framework for studying stochasticity in mesoscopic reaction networks.
Since its direct solution rapidly becomes intractable due to the
increasing size of the state space, truncation of the state space is
necessary for solving most dCMEs. It is therefore important to assess
the consequences of state space truncations so errors can be
quantified and minimized.  Here we describe a novel method for state
space truncation.  By partitioning a reaction network into multiple
molecular equivalence groups (MEG), we truncate the state space by
limiting the total molecular copy numbers in each MEG.  We further
describe a theoretical framework for analysis of the truncation error
in the steady state probability landscape using reflecting boundaries.
By aggregating the state space based on the usage of a MEG and
constructing an aggregated Markov process, we show that the truncation
error of a MEG can be asymptotically bounded by the probability of
states on the reflecting boundary of the MEG.  Furthermore, truncating
states of an arbitrary MEG will not undermine the estimated error of
truncating any other MEGs.  We then provide an overall error estimate
for networks with multiple MEGs.  To rapidly determine the appropriate
size of an arbitrary MEG, we also introduce an {\it a priori}\/ method to
estimate the upper bound of its truncation error. This {\it a
  priori}\/ estimate can be rapidly computed from reaction rates of
the network, without the need of costly trial solutions of the dCME.
As examples, we show results of applying our methods to the four
stochastic networks of 1) the birth and death model, 2) the single
gene expression model, 3) the genetic toggle switch model, and 4) the
phage lambda bistable epigenetic switch model.  We demonstrate how
truncation errors and steady state probability landscapes can be
computed using different sizes of the MEG(s) and how the results
validate out theories.  Overall, the novel state space truncation and
error analysis methods developed here can be used to ensure accurate
direct solutions to the dCME for a large number of stochastic
networks.

\end{abstract}

\section*{Introduction}

Biochemical reaction networks are intrinsically 
stochastic~\cite{Stewart2012,Qian2012}. 
Deterministic models based on chemical mass action kinetics cannot
capture the stochastic nature of these
networks~\cite{McAdams1999,Wilkinson2009,Cao2010}.  Instead, the
discrete Chemical Master Equation (dCME) that describes the
probabilistic reaction jumps between discrete states 
provides a general framework for fully characterizing
mesoscopic stochastic processes in a well mixed
system~\cite{Gillespie_JPC77,Gillespie-PhysicaA-1992,vanKampen2007,Beard2008,Gillespie2009-jcp}.
The steady state and time-evolving probability landscapes over discrete states
governed by the dCME  provide detailed information of these
dynamic stochastic processes.  However, the dCME cannot be solved
analytically, except for a few very simple
cases~\cite{Darvey-1966-jcp,McQuarrie-1967-JAppProb,Taylor1998,Laurenzi-2000-jcp,Vellela2007}.

 The dCME can be approximated using the Fokker-Planck equation (FPE)
 and the chemical Langevin equation (CLE). These approximations are
 not applicable when copy numbers are small~\cite{Gillespie_JCP2000},
 as relatively large copy numbers of molecules are required for
 accurate
 approximation~\cite{VanKampen-1961,Gillespie_JCP2000,Gillespie2002,Haseltine2002,Gardiner2004-book}.
 Recent studies provided assessment of errors in these approximations
 for several reaction
 networks~\cite{Grima-2011-jcp,Grima-2013-bcmgenomics}, as well as
 numerical demonstration in which the CLE of a 13-node lysogeny-lysis
 decision network of phage-lambda was found to fail to converge to the
 correct steady state probability landscape (see appendix of
 ref~\cite{Cao2010}). However, consequences of such approximations
 involving many molecular species and with complex reaction schemes
 are generally not known.

  A widely used approach to
  study stochasticity is that of stochastic simulation algorithm (SSA)
  It generates reaction trajectories following the underlying
  dCME~\cite{Gillespie_JPC77}, and the stochastic properties of the
  network can then be inferred through analysis of a large number of
  simulation trajectories. However, convergence of such simulations is
  difficult to determine, and the errors in the sampled steady state
  probability landscape are unknown.

Directly solving the dCME offers another attractive approach.  By
computing the probability landscape of a stochastic network
numerically, its properties, such as those involving rare events, can
be studied accurately in details.  The finite state projection (FSP)
method is among several methods that have been developed to solve dCME
directly~\cite{Munsky2006,CaoBMCSB08,MacNamara2008a,MacNamara2008b,Cao2010,Wolf2010,Jahnke2011}.
The FSP is based on a truncated projection of the state space and
uses numerical techniques to compute the time-evolving probability
landscapes, which are solutions to the
dCME~\cite{Sidje1998,Munsky2006,Munsky2007}.  Although the error due
to state space truncation can be calculated for the time-evolving
probability landscape~\cite{Munsky2006}, the use of an absorbing
boundary, to which all truncated states are projected, will lead to
the accumulation of errors as time proceeds, and 
eventually trap all probability mass.  The FSP method was designed to
study transient behavior of stochastic networks, and is not well
suited to study the long-term behavior and the steady state
probability landscape of a network.

A bottleneck problem for solving the dCME directly is to have an
efficient and adequate account of the discrete state space.  As the
copy number of each of the $n$ molecular species takes an integer
value, conventional hypercube-based methods of state enumeration
incorporate all vertices in a $n$-dimensional hypercube non-negative
integer lattice, which has an overall size of $ O
(\prod_{i=1}^{n}{b_i})$, where $b_i$ is the maximally allowed copy
number of molecular species $i$.  State enumeration rapidly becomes
intractable, both in storage and in computing time. This makes the
direct solution of the dCME impossible for many realistic problems.
To address this issue, the finite buffer discrete CME (fb-dCME) method
was developed for efficient enumeration of the state
space~\cite{CaoBMCSB08}.  This algorithm is provably optimal in both
memory usage and in time required for enumeration. It introduces a
buffer queue with a fixed number of molecular tokens to keep track of
the remaining number of states that can be enumerated.  States with
depleted buffer do not absorb probability mass but reflect them to
states already enumerated, with the overall probability mass
conserved.  Further, instead of including every states in a hypercube,
it examines only states that can be reached from a given initial
state.  It can be used to compute the exact steady state and
time-evolving probability landscape of a closed network, or an open
network when the net gain in newly synthesized molecules does not
exceed the predefined finite buffer capacity.

State-space truncation eventually occurs in all methods that directly
solve the dCME.  For example, it occurs in open systems when no new
states can be enumerated, therefore synthesis reaction cannot proceed.
However, it is unclear how accurate the probability landscape computed
using a truncated state space is. Furthermore, it is unclear how to
minimize truncation errors, thus limiting the scope of applications of
direct methods such as the fb-dCME method.

In this study, we develop a new method for state space truncation and
provide a general theoretical framework for characterizing the error
due to state space truncation.  We start by partitioning the molecular
species in a reaction network into a number of {\it molecular
  equivalent groups}\/ (MEG) according to their chemical
compositions. The state space is then truncated by limiting the
maximum copy number of each MEG instead of individual molecular
species.  States with exactly the maximum copy number of a MEG form
the reflecting boundary of the state space. We further discuss
networks with a single reflecting boundary in the truncated state
space.  We then show that the total probability of the boundary states
can be used as an upper bound of the truncation error in computed
steady state probability landscape. This is then generalized to
networks with an arbitrary number of reflecting boundaries.  We
further develop an {\it a priori}\/ method derived from stochastic
ordering for rapid estimation of the truncation errors of the steady
state probability landscape for a given truncated state space.  The
required maximum copy number of each MEG for a pre-defined error
tolerance can also be determined without computing costly trial
solutions to the dCME.  Overall, the method of state space truncation
and the upper bounds of truncation errors established in this study
enables accurate quantification of errors in numerical solutions of
the dCME, and can help to design strategies so probability landscapes
with small and controlled errors can be computed for a large class of
biological problems which are previously infeasible.

This paper is organized as follows.  We first review basic concepts of
the discrete chemical master equation and issues associated with the
finite discrete state space. We then describe how to partition a
reaction network into molecular equivalent groups and how to truncate
the discrete state space.  We further discuss truncation errors of the
steady state probability landscape and how to construct upper bounds
of the truncation errors.  This is followed by detailed studies of the
single gene expression system and the genetic toggle switch system.
We examine the \textit{a priori} estimated error bound, the computed
error, and the true error for different state truncations.  We end
with discussions and conclusions.

\section*{Methods}

\subsection*{Theoretical Framework}

\subsubsection*{Reaction Network, State Space and Probability Landscape}

In a well-mixed biochemical system with constant volume and
temperature, there are $n$ molecular species, denoted as ${\mathcal X}
= \{ X_1 , X_2, \cdots, X_n \}$, and $m$ reactions, denoted as
${\mathcal R} = \{ R_1, R_2, \cdots, R_m \}$. Each reaction $R_k$ has
an intrinsic reaction rate constant $r_k$.  The microstate of the
system at time $ t $ is given by the non-negative integer column
vector $\bx(t) \in \mathbb{Z}_+^n$ of copy numbers of each molecular
species: $\bx (t) = (x_1(t), x_2(t), \cdots, x_ n(t) )^T$, where
$x_i(t)$ is the copy number of molecular species $X_i$ at time $t$. An
arbitrary reaction $R_k$ with intrinsic rate $r_k$ takes the general
form of
$$
c_{1k}X_1 + c_{2k}X_2 + \cdots + c_{nk}X_n
\overset{r_k}{\rightarrow} c'_{1k}X_1 + c'_{2k}X_2 + \cdots +
c'_{nk}X_n,
$$ which brings the system from a microstate $\bx_j$ to $\bx_i$.  
The difference between $\bx_i$ and $\bx_j$ is the
stoichiometry vector $\bs_k$ of reaction $R_k$: $ \bs_k = \bx_i -
\bx_j = (s_{1k}, s_{2k},\cdots, s_{nk})^T =( c'_{1k}-c_{1k},\,
c'_{2k}-c_{2k},\, \cdots,\, c'_{nk}-c_{nk})^T \in { \mathbb Z}^n.$ 
The rate $A_k(\bx_i, \bx_j)$ of reaction $R_k$ that brings the microstate
from $\bx_j$ to $\bx_i$ is determined by $r_k$ and the combination
number of relevant reactants in the current microstate $\bx_j$:
$$
A_k(\bx_i, \bx_j) = A_k (\bx_j) = r_k \prod_{l=1}^n
\binom{x_l}{c_{lk}},
$$ 
assuming the convention $\binom{0}{0} = 1$.

All possible microstates that a system can visit from a given
initial condition form the state space:
${\Omega} = \{\bx(t) | \bx(0), \, t \in (0, \, \infty)\}.$ We
denote the probability of each microstate at time $t$ as $p(\bx(t))$,
and the probability distribution at time $t$ over the full state
space as ${\bp}(t) = \{(p(\bx(t)) | \bx(t) \in \Omega) \}.$ We
also call ${\bp}(t)$ the {\it probability landscape\/} of the
network~\cite{Cao2010}.

\subsubsection*{Discrete Chemical Master Equation}

The discrete chemical master equation (dCME) can be written as a set
of linear ordinary differential equations describing the change in probability
of each discrete state over time:
\begin{equation}
\frac{d p(\bx, t)}{dt} = \sum_{\bx',\, \bx' \neq \bx  } 
[ A(\bx, \bx') p (\bx', t) - A(\bx', \bx) p (\bx, t)],
\label{eqn:dcme1}
\end{equation}
Note that $p(\bx, t)$ is
continuous in time, but is discrete over the state space. 
In matrix form, the dCME can be written as: 
\begin{equation}
\frac{d \bp(t)}{dt} = \bA \bp(t), 
\label{eqn:dcme2}
\end{equation}
where $\bA  \in 
\real^{|\Omega| \times |\Omega|}$
is the transition rate matrix formed by the collection of
all $A(\bx_i, \bx_j)$, which describes the overall 
reaction rate from state $\bx_j$ to state $\bx_i$: 
\begin{equation}
 A(\bx_i,\bx_j) = \left\{ 
  \begin{array}{l l}
    \sum_{k=1}^m A_k(\bx_i, \bx_j) & \quad \text{if $\bx_i \neq \bx_j$ and $\bx_j \stackrel{R_k}{\longrightarrow} \bx_i$}, \\
    - \sum_{\substack{\bx' \in \Omega, \\ \bx' \neq \bx_j}} A(\bx', \bx_j) & \quad \text{if $\bx_i = \bx_j$}, \\
		0 & \quad \text{otherwise}. \\
  \end{array} \right.
\label{eqn:matA}
\end{equation}

\subsubsection*{Molecular Equivalent Groups and Independent Birth-Death Processes}

In an open reaction network, synthesis reactions are the only ones
that generate new molecules and increase the total mass of the system.
Degradation reactions are the only ones that destroy molecules and
remove mass from the system.  The net copy numbers of various
molecular species in an open network gives its total mass.  For a
given microstate, the mass for each molecular species is defined.  The
total mass in a network can increase to infinity if synthesis
reactions persist.  The truncation of the infinite state space of such
an open network, which is inevitable due to the limited computing
capacity, can lead to errors in computing the probability landscapes
of a dCME.

Here we introduce the concept of \textit{Molecular Equivalence Groups}
(MEG), which will be useful for state space truncation.  Specifically,
molecular species $X_i$ and $X_j$ belong to the same MEG if $X_i$ can
be transformed into $X_j$ or $X_j$ can be transformed into $X_i$
through one or more \textbf{mass-balanced reactions}.  A stochastic
network can have one or more  Molecular Equivalent Groups.  The total
mass of a Molecular Equivalent Group for a specific microstate is
defined as the total copy number of the most elementary equivalent
molecular species in the Molecular Equivalent Group.

\begin{equation}
	\begin{aligned}
	A &\rightarrow B; \quad 
	2A &\rightarrow C; \quad 
	B + C &\rightarrow D; \quad 
	2X &\rightarrow Y; \quad 
	Y &\rightarrow Z; \quad  
	\end{aligned}
	\label{eqn:emc}
\end{equation}

For example, the reaction network shown in Eqn.~(\ref{eqn:emc}) has
two Molecular Equivalent Groups, {\it i.e.},  $MEG_1 = \{ A, B, C, D \}$
and $MEG_2 = \{ X, Y, Z \}$.  The most elementary molecular
species in $MEG_1$ and $MEG_2$ are $A$ and $X$, respectively.  For
any specific microstate of the network $\bx = \{ a, b, c, d, x, y,
z \}$, the total net copy number of the Molecular Equivalent Group
$MEG_1$ is calculated as $n_{MEG_1} (\bx) = a + b
+ 2c + 3d$, and the total net copy number of $MEG_2$ can
be calculated as $n_{MEG_2} (\bx) = x + 2y +2z$, where
the $a$, $b$, $c$, $d$, $x$, $y$, and $z$ are copy numbers of
corresponding molecular species.

We are interested in MEGs containing synthesis and degradation
reactions.  The set of reactions associated with such an open MEG is
called an \textit{independent Birth-Death process} (iBD).  Reactions
in an iBD can increase or decrease the total net copy number of
molecules in the associated MEG.

\subsubsection*{State Space Truncation by Molecular Equivalent Group}

Here we introduce a novel state truncation method.  Instead of
truncating the state space by specifying a maximum allowed copy number
$B$ for each molecular species, we specify a maximum allowed molecular
copy number $B$ for the $j$-th MEG.  Assume the $j$-th MEG contains
$n_j$ distinct molecular species, and conservatively ignore the
effects of stoichiometry, the number of all possible states for the
$j$-th MEG is then that of the volume of an $n_j$-dimensional
orthogonal corner simplex, with $B$ the length of all edges with the
origin as a vertex.  The number of integer lattice nodes in this
$n_j$-dimensional simplex gives the precise number of states of the
$j$-th MEG, which is in turn exactly given by the multiset number $\binom{B +
  n_j}{n_j}$.  The size of the state space is therefore much smaller
than the size of the state space $B^{n_j}$ that would be
generated by the hypercube method, with a reduction factor of roughly
$n_j !$ factorial.  Note that under the constraint of mass
conservation, each molecular species in this MEG can still have a
maximum of $B$ copies of molecules.

We further conservatively assume that different MEGs are independent,
and each can have maximally $B$ copies of molecules.  The size of the
overall truncated state space is then $O(\prod_j \binom{B +
  n_j}{n_j})$.  This is much smaller than the $n$-dimensional
hypercube, which has an overall size of $O( \prod_j B^{n_j}) =
O(B^n)$, with $n$ the total number of molecular species in the
network.  Overall, the size of state space generated by MEG truncation
can be dramatically smaller than that generated using the hypercube
method.

\subsubsection*{State Space Aggregation According to the Net Copy
Number in Molecular Equivalent Group}

We first consider the stochastic network with only one Molecular
Equivalent Group.  We truncate the state space by fixing the maximum amount of
total mass in the network.  We are interested in estimating the errors
due to such a state truncation.  To do so, we first factor states in
the original state space $\Omega^{(\infty)}$ of infinite size
according to the total net copy number of the MEG in each state.  The
infinite state space $\Omega^{(\infty)}$ can be partitioned into
disjoint groups of subsets $\tilde{\Omega}^{(\infty)} \equiv \{
\mathcal{G}_0, \mathcal{G}_1, \cdots, \mathcal{G}_N, \cdots \}$,
where states in each aggregated subset $\mathcal{G}_s$ have exactly
the same $s$ total copies of equivalent elementary molecular species
of the MEG.  The total steady state probability
$\tilde{\pi}_s^{(\infty)}$ on microstates in each group
$\mathcal{G}_s$ can then be written as:
\begin{equation}
\tilde{\pi}_s^{(\infty)} \equiv
\sum_{\bx \in \mathcal{G}_s}
\pi^{(\infty)}(\bx) =
\sum_{\bx \in \mathcal{G}_s}
p^{(\infty)}(\bx,\, t=\infty). 
\label{eqn:computedError}
\end{equation}

Based on the state space partition $\tilde{\Omega}^{(\infty)}$, we can re-construct
a transition rate matrix $\tilde{\bA}$, which is a permutation of the original dCME
matrix $\bA$ in Eqn.~(\ref{eqn:dcme2}):
\begin{equation}
\tilde{\bA} = \left( {\begin{array}{*{20}c}
   {\bA_{i,j}} 
\end{array}} \right), \quad 0 \leq i,j \leq \infty
\label{eqn:Aaggreg1}
\end{equation}
where each block sub-matrix $\bA_{i,\,j}$ includes all transitions
from states in group $\mathcal{G}_j$ to states in $\mathcal{G}_i$.

In continuous time Markov model of mesoscopic systems, reactions occur
instantaneously, and the synthesis and degradation reactions always
generate or destroy one molecule at a time.  This also applies to
oligomers, which are assumed to form only upon association of monomers
already synthesized, and dissociate into monomers first before full
degradation.  The re-constructed matrix $\tilde{\bA}$ is thus a tri-diagonal
block matrix, \textit{i.e.}, $\bA_{i,\,j}$ is all 0s if $|i - j| > 1$.
Moreover, synthesis reactions always appear as lower blocks
$\bA_{i+1,\,i}$, and degradation reactions always as upper blocks
$\bA_{i,\,i+1}$.  Diagonal blocks $\bA_{i,\,i}$ contains all coupling
reactions that do not alter the net number of synthesized
molecules. Note that every $\bA_{i+1,\,i}$ block and $\bA_{i,\,i+1}$
block only includes synthesis and degradation reactions associated
with the current MEG.  For analysis of networks with multiple MEGs, we
assume at this time there is no limit on the total mass of other MEGs,
therefore the state space is not truncated on these MEGs.  These other
MEGs do not alter the total net copy number of molecular species in the
current MEG.

Note that the assumption of the stoichiometric coefficient of 1 for
synthesis and degradation is only for constructing the
proofs of the theorems.  In computation, there is no condition on the
stoichiometry of any reaction, and our method is general and can be
applied to any reaction network.

We can obtain the steady state probability $\tilde{\pi}_s^{(\infty)}$ 
on aggregated states without solving the dCME. 
It is tempting to lump all microstates in each group $\mathcal{G}_j$
into one state and replace the original $|\Omega^{(\infty)}| \times |\Omega^{(\infty)}|$
rate matrix $\tilde{\bA}$ with an aggregated matrix 
to study the dynamic changes of the probability landscape
on this aggregated state space.  However, stringent requirements
must be satisfied for such lumped states to follow a Markov
process~\cite{Tian2006,Truffet1997,Buchholz1994,Stewart1994,Vantilborgh1985,Kemeny1976}. 
Specifically, a transition rate matrix $\bA$ for a continuous Markov process is
lumpable with respect to a partition $\tilde{\Omega}^{(\infty)}$ if and only 
if for all pairs of $\mathcal{G}_s, \, \mathcal{G}_t \in \tilde{\Omega}^{(\infty)}$, 
the condition
\begin{equation}
\sum_{\bx_k \in \mathcal{G}_t} A_{ik} = \sum_{\bx_k \in \mathcal{G}_t} A_{jk}
\label{eqn:lumpcond}
\end{equation}
holds for all $\bx_i, \, \bx_j \in \mathcal{G}_s$~\cite{Tian2006}.  In
other words, every state in $\mathcal{G}_s$ must have the same total
transition rate to group $\mathcal{G}_t$, and this must be true for
all $\mathcal{G}_s$ and $\mathcal{G}_t$~\cite{Tian2006}.

While $\tilde{\bA}$ does not satisfy this strong condition in general,
we can instead construct a lumped transition matrix $\bB$, which is
associated with the aggregated state space derived from the partition
$\tilde{\Omega}^{(\infty)}$, such that the aggregated steady state
probability distribution on the partition $\tilde{\Omega}^{(\infty)}$
computed from the lumped matrix $\bB$ is equal to that derived from
the steady state distribution computed from the original matrix $\bA$.
That is, steady state probabilities on partitioned groups in
$\tilde{\Omega}^{(\infty)}$ are identical using either $\bB$ or the
original $\bA$.

Assume the steady state probability distribution $\tilde{\bpi}(\bx)$
over the partitioned state space $\tilde{\Omega}^{(\infty)}$ is known,
the aggregated synthesis rate $\alpha^{(\infty)}_i$ for the group
$\mathcal{G}_i$ and the aggregated degradation rate
$\beta^{(\infty)}_{i+1}$ for the group $\mathcal{G}_{i+1}$ at the
steady state are two constants (Fig~\ref{fig:bfbd}) defined as
\begin{equation}
\alpha^{(\infty)}_i = \left( \mathbbm{1}^T \bA_{i+1,i} \right) \cdot \frac{
\tilde{\bpi}(\mathcal{G}_{i})}{\mathbbm{1}^T \tilde{\bpi}(\mathcal{G}_{i})} \quad \text{and} \quad
\beta^{(\infty)}_{i+1} = \left( \mathbbm{1}^T \bA_{i,i+1} \right)
\cdot \frac{\tilde{\bpi}(\mathcal{G}_{i+1})}{\mathbbm{1}^T \tilde{\bpi}(\mathcal{G}_{i+1})},
\label{eqn:abdef}
\end{equation}
where $\tilde{\bpi}(\mathcal{G}_{i})$ and
$\tilde{\bpi}(\mathcal{G}_{i+1})$ are the steady state probability
vector over microstates in the lumped states $\mathcal{G}_{i}$ and
$\mathcal{G}_{i+1}$, respectively.  The term $\mathbbm{1}^T
\bA_{i+1,i}$ is the row vector of column-summed rates from
$\bA_{i+1,i}$ for microstates in $\mathcal{G}_i$, and $\frac{
  \tilde{\bpi}(\mathcal{G}_{i})}{\mathbbm{1}^T
  \tilde{\bpi}(\mathcal{G}_{i})}$ is the steady state probability
vector $\tilde{\bpi}(\mathcal{G}_{i})$ over microstates in
$\mathcal{G}_i$ normalized by the total steady state probability on
$\mathcal{G}_i$.  Similarly, $\mathbbm{1}^T \bA_{i,i+1}$ is the row
vector of column-summed rates from $\bA_{i,i+1}$ for microstates in
$\mathcal{G}_{i+1}$, and $\frac{
  \tilde{\bpi}(\mathcal{G}_{i+1})}{\mathbbm{1}^T
  \tilde{\bpi}(\mathcal{G}_{i+1})}$ is the steady state probability
vector $\tilde{\bpi}(\mathcal{G}_{i+1})$ over microstates in
$\mathcal{G}_{i+1}$ normalized by the total steady state probability
on $\mathcal{G}_{i+1}$.  We can construct an aggregated transition
rate matrix $\bB$ from $\tilde{\bA}$ based on the following Lemma:
\begin{lemma}
\label{lm:rma}
(Rate Matrix Aggregation.) If an Molecular Equivalence Group has no
limit on the total copy number, it generates an infinite state space
$\Omega^{(\infty)}$ and the rate matrix $\bA$ is of infinite
dimension.  For any homogeneous continuous-time Markov process with
such a rate matrix $\bA$, an aggregated continuous-time Markov process
with an infinite rate matrix $\bB^{(\infty)}$ can be constructed on
the partition $\tilde{\Omega}^{(\infty)} = \{ \mathcal{G}_0,
\mathcal{G}_1, \cdots, \mathcal{G}_N, \cdots \}$ with respect to the
total net copy number of molecules in the network, such that it gives
the same steady state probability distribution for each partitioned
group $\{\mathcal{G}_s\}$ as that given by the original matrix $\bA$,
\textit{i.e.}, $\pi(\mathcal{G}_s) = \sum_{\bx \in \mathcal{G}_s}
\bpi(\bx)$ for all $s = 0, 1, \cdots$, where $\bpi(\Omega^{(\infty)})$
is the steady state probability distribution associated with $\bA$.
Specifically, the infinite transition rate matrix $\bB^{(\infty)}$ can
be constructed as a tridiagonal matrix: 
\begin{equation}
\bB = \left( {\begin{array}{c}
   \boldsymbol\alpha^{(\infty)}, \boldsymbol\gamma^{(\infty)}, \boldsymbol\beta^{(\infty)}
\end{array}} \right),
\label{eqn:bdmatinf}
\end{equation}
with the lower off-diagonal vector $\boldsymbol\alpha^{(\infty)} = (\alpha^{(\infty)}_i)$,
the upper off-diagonal vector $\boldsymbol\beta^{(\infty)} = (\beta^{(\infty)}_{i+1})$,
and the diagonal vector $\boldsymbol\gamma^{(\infty)} = (\gamma^{(\infty)}_i) = (-\alpha^{(\infty)}_i - \beta^{(\infty)}_{i}), \, i = 0,\cdots,\infty$.
This is equivalent to transforming the corresponding infinite transition
rate matrix $\tilde{\bA}$ in Eqn.~(\ref{eqn:Aaggreg1}) 
into $\bB^{(\infty)}$ by substituting
each block sub-matrix $\bA_{i+1,\, i}$ of synthesis reactions with the
corresponding aggregated synthesis rate $\alpha^{(\infty)}_i$, and each
block $\bA_{i,\, i+1}$ of degradation reactions with the aggregated
degradation rate $\beta^{(\infty)}_{i+1}$, respectively, with 
$\alpha^{(\infty)}_i$ and $\beta^{(\infty)}_{i+1}$ 
defined in Eqn.~(\ref{eqn:abdef}). 
\end{lemma}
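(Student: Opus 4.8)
The plan is to reduce the lemma to a one-dimensional detailed-balance identity for the stationary distribution $\bpi$ of $\bA$, and then to verify that the aggregated group probabilities $\pi(\mathcal{G}_s) = \mathbbm{1}^T\tilde{\bpi}(\mathcal{G}_s)$ solve the global balance equations of the tridiagonal generator $\bB^{(\infty)}$ of Eqn.~(\ref{eqn:bdmatinf}). Throughout I would use that $\bpi$ is a genuine, summable stationary distribution (so $\bA\bpi = 0$, the steady-state form of Eqn.~(\ref{eqn:dcme2})) and that every occupied group carries positive mass, $\mathbbm{1}^T\tilde{\bpi}(\mathcal{G}_s) > 0$, which is what makes the conditional rates $\alpha_i^{(\infty)}, \beta_{i+1}^{(\infty)}$ of Eqn.~(\ref{eqn:abdef}) well defined.

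First I would record the structural fact that drives everything. Because synthesis and degradation change the net MEG copy number by exactly one, $\tilde{\bA}$ is block tridiagonal (Eqn.~(\ref{eqn:Aaggreg1})), so the only transitions leaving the low-copy set $S_i \equiv \bigcup_{k\le i}\mathcal{G}_k$ are synthesis moves from $\mathcal{G}_i$ to $\mathcal{G}_{i+1}$, and the only transitions entering $S_i$ are degradation moves from $\mathcal{G}_{i+1}$ to $\mathcal{G}_i$. Summing the componentwise balance equations $(\bA\bpi)_{\bx}=0$ over all $\bx\in S_i$, every flux internal to $S_i$ appears once as an inflow and once as an outflow and cancels; since all rates and probabilities are nonnegative, this rearrangement of the infinite sums is legitimate by Tonelli. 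What survives is the cut identity equating upward and downward flux across the boundary between $\mathcal{G}_i$ and $\mathcal{G}_{i+1}$:
\begin{equation}
\mathbbm{1}^T \bA_{i+1,i}\, \tilde{\bpi}(\mathcal{G}_i)
= \mathbbm{1}^T \bA_{i,i+1}\, \tilde{\bpi}(\mathcal{G}_{i+1}).
\label{eqn:cutplan}
\end{equation}

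Next I would connect Eqn.~(\ref{eqn:cutplan}) to the aggregated rates. Multiplying the definition of $\alpha_i^{(\infty)}$ in Eqn.~(\ref{eqn:abdef}) by $\pi(\mathcal{G}_i) = \mathbbm{1}^T\tilde{\bpi}(\mathcal{G}_i)$ collapses the normalization and gives $\alpha_i^{(\infty)}\,\pi(\mathcal{G}_i) = \mathbbm{1}^T\bA_{i+1,i}\,\tilde{\bpi}(\mathcal{G}_i)$, and likewise $\beta_{i+1}^{(\infty)}\,\pi(\mathcal{G}_{i+1}) = \mathbbm{1}^T\bA_{i,i+1}\,\tilde{\bpi}(\mathcal{G}_{i+1})$. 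Hence Eqn.~(\ref{eqn:cutplan}) is exactly the detailed-balance relation $\alpha_i^{(\infty)}\,\pi(\mathcal{G}_i) = \beta_{i+1}^{(\infty)}\,\pi(\mathcal{G}_{i+1})$ for $\bB^{(\infty)}$. Substituting this into the tridiagonal stationary equation $\alpha_{i-1}^{(\infty)}\pi(\mathcal{G}_{i-1}) - (\alpha_i^{(\infty)}+\beta_i^{(\infty)})\pi(\mathcal{G}_i) + \beta_{i+1}^{(\infty)}\pi(\mathcal{G}_{i+1}) = 0$ makes it vanish term by term, so $(\pi(\mathcal{G}_s))_s$ is a stationary vector of $\bB^{(\infty)}$. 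Since $\bB^{(\infty)}$ is an irreducible birth-death generator, its stationary distribution is unique up to normalization; as $(\pi(\mathcal{G}_s))_s$ is already normalized (being the image under aggregation of the probability vector $\bpi$), it is the stationary distribution of $\bB^{(\infty)}$, establishing $\pi(\mathcal{G}_s) = \sum_{\bx\in\mathcal{G}_s}\bpi(\bx)$ for all $s$.

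The delicate points here are analytic rather than algebraic, so the main obstacle I expect is the infinite dimension: I must ensure a summable stationary distribution $\bpi$ exists (which the lemma implicitly grants by positing it), that every occupied group carries positive probability so the conditional rates are well defined, and that the termwise cancellation in the cut argument is valid for infinite sums, the last being harmless because every term is nonnegative. A secondary point worth stating explicitly is the irreducibility of $\bB^{(\infty)}$, equivalently $\alpha_i^{(\infty)}, \beta_{i+1}^{(\infty)} > 0$ along the occupied chain, since that is what licenses the uniqueness step; if some group were unreachable the construction would simply be restricted to the reachable sub-chain.
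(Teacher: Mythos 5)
Your proposal is correct and follows essentially the same route as the paper's proof: the cut identity in your Eqn.~(\ref{eqn:cutplan}) is exactly the paper's flux-balance relation (its Eqn.~(\ref{eqn:ss1})), which the paper derives by induction on the group-summed steady-state equations rather than by your one-shot telescoping sum over $S_i$, and the subsequent identification with $\alpha_i^{(\infty)},\beta_{i+1}^{(\infty)}$ and verification against the tridiagonal generator $\bB^{(\infty)}$ coincide with the paper's argument. Your added attention to the analytic points (Tonelli for the rearrangement, positivity of group masses, and uniqueness of the stationary distribution of the irreducible birth-death chain) makes explicit what the paper leaves implicit, but does not change the underlying approach.
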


Proof can be found in the Appendix.

\subsubsection*{Analytical Solution of Steady State Probability of Aggregated States}

The system associated with the aggregated rate matrix $\bB$ can be
viewed as a birth-death process controlled by a pair of 
``synthesis'' and ``degradation'' transitions between aggregated states 
associated with different net copy number of the MEG. 
It takes the form: 
\begin{equation}
\emptyset \overset{\alpha^{(\infty)}_i}{\underset{\beta^{(\infty)}_{i+1}}{\rightleftharpoons}} \mathsf{E}, 
\label{eqn:bfbd}
\end{equation}
where $\mathsf{E}$ represents the elementary molecular species in the MEG, 
with its copy number  the total net copy number of the MEG. 
The rates $\alpha_i^{(\infty)}$ and $\beta_{i+1}^{(\infty)}$ are the aggregated 
``synthesis'' and ``degradation'' rates for this MEG. 
The aggregated state space and transitions between them are illustrated in
Fig.~\ref{fig:bfbd}.  The steady state probability distribution over the 
aggregated states are governed by $\bB \tilde{\bpi}^{(\infty)} = \utwi{0}$.

The aggregated rates $\alpha^{(\infty)}_i$ and
$\beta^{(\infty)}_{i+1}$ in $\bB$ are from summations of all entries
in the non-negative block matrices $\bA_{i+1,\, i}$ and $\bA_{i,\,
  i+1}$.  As long as there is one or more microstates in $\bA_{i+1,\,
  i}$ or $\bA_{i,\, i+1}$ with non-zero copies of reactants,
$\alpha^{(\infty)}_i$ or $\beta^{(\infty)}_i$ will be non-zero.  We next
examine the most general case when $\alpha^{(\infty)}_i \ne 0$ and
$\beta^{(\infty)}_{i+1} \ne 0$ for all $i = 0, \, 1,\, \cdots$.  We
simplify our notation and use $\tilde{\pi}_i^{(\infty)}$ for
$\tilde{\pi}^{(\infty)}(\mathcal{G}_i)$.  Following the well-known results
on analytical solution of the steady state distribution of the
birth-death processes~\cite{Taylor1998,Vellela2007}, the steady state solution for
$\tilde{\pi}_i^{(\infty)}$ and $\tilde{\pi}_0^{(\infty)}$ can be
written  as:
\begin{equation}
\tilde{\pi}_i^{(\infty)} = 
  \prod\limits_{k = 0}^{i-1}
    \frac{\alpha^{(\infty)}_{k}}{\beta^{(\infty)}_{k+1}} \tilde{\pi}_0^{(\infty)}, 
\label{eqn:pinii}
\end{equation}
and
\begin{equation}
\tilde{\pi}_0^{(\infty)} = 
\frac{1}
{1 + {\sum\limits_{j = 1}^{\infty} {\prod\limits_{k = 0}^{j-1}
     {\frac{\alpha^{(\infty)}_{k}}{\beta^{(\infty)}_{k+1}}}}
    }
}, 
\label{eqn:pin0}
\end{equation}
Therefore, the steady state probability $\tilde{\pi}_{i}^{(\infty)}$ 
of an arbitrary group $\mathcal{G}_{i}$ can be written as:
\begin{equation}
\tilde{\pi}_i^{(\infty)} = \frac{\prod\limits_{k = 0}^{i-1}
    \frac{\alpha^{(\infty)}_{k}}{\beta^{(\infty)}_{k+1}}}
{1 + {\sum\limits_{j = 1}^{\infty} {\prod\limits_{k = 0}^{j-1}
     {\frac{\alpha^{(\infty)}_{k}}{\beta^{(\infty)}_{k+1}}}} 
    }
}. 
\label{eqn:pininf}
\end{equation}
Once $\alpha^{(\infty)}_{k}$ and $\beta^{(\infty)}_{k+1}$ are known,
the total probability of any aggregated state $\mathcal{G}_i$ at the
steady state can be easily computed. We will introduce a method in
later sections for easy {\it a priori\/} calculation of error
estimates based on Eqn.~(\ref{eqn:pininf}) and values of
$\alpha^{(\infty)}_{k}$ and $\beta^{(\infty)}_{k+1}$, which are
directly obtained from reaction rate constants of the network model,
without the need of solving the dCME.

\subsubsection*{Truncation Error Is Bounded  Asymptotically by
Probability of Boundary States }
\label{sec:convergence}

When the maximum total net copy number of the MEG is limited to $N$, states 
with a total net copy number larger than $N$ will not be included, resulting in 
a truncated state space $\Omega^{(N)}$. Those microstates with exactly $N$ total net 
copies of molecules in the network are the \textit{boundary states}, 
because neighboring states with one additional molecule are truncated.  
The true error for the steady state $\Err^{(N)}$ due to truncating states
beyond those with $N$ net copies of molecules 
is the summation of true probabilities over microstates that have been truncated 
from the original infinite state space: 
\begin{equation}
\Err^{(N)} = \sum_{\bx \in \Omega^{(\infty)}, \, \bx \notin \Omega^{(N)}} \pi^{(\infty)}(\bx)
= 1 - \sum_{\bx \in \Omega^{(N)}} \pi^{(\infty)}(\bx). 
\label{eqn:trueerr}
\end{equation}

The true error $\Err^{(N)}$ is unknown, as it
requires knowledge of $\pi^{(\infty)}(\bx)$ for all $\bx \in
\Omega^{(\infty)}$.  In this section, we show that $\Err^{(N)}$
asymptotically converges to $\tilde{\pi}_N^{(\infty)}$ as the maximum 
net copy number limit $N$ increases. 
If $N$ is sufficiently large, the true error
$\Err^{(N)}$ is bounded by the true boundary probability
$\tilde{\pi}_N^{(\infty)}$ times a constant.  
First, we have:
\begin{lemma} (Finite Biological System.)
For any  biological system in which the total amount of mass is finite, the aggregated
synthesis rate $\alpha^{(\infty)}_{i}$ becomes smaller than the 
aggregated degradation rate $\beta^{(\infty)}_{i+1}$ 
when the total molecular copy number $N$ is sufficiently large:
\begin{equation}
\lim_{N \rightarrow \infty} \sup\limits_{i > N} \frac{\alpha^{(\infty)}_{i}}{\beta^{(\infty)}_{i+1}} < 1.
\label{eqn:fbs}
\end{equation}
\label{lm:fbs}
\end{lemma}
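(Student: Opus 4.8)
The plan is to establish the bound by controlling numerator and denominator separately: I expect $\alpha^{(\infty)}_i$ to remain bounded above by a constant while $\beta^{(\infty)}_{i+1}$ diverges as $i \to \infty$, so that the ratio tends to $0$ and its limit superior over $i > N$ drops below $1$ for large $N$. For the synthesis bound, note from Eqn.~(\ref{eqn:abdef}) that $\alpha^{(\infty)}_i$ is a convex combination of the per-microstate total synthesis rates $\mathbbm{1}^T \bA_{i+1,i}$, weighted by the normalized conditional steady-state vector $\tilde{\bpi}(\mathcal{G}_i)/\mathbbm{1}^T\tilde{\bpi}(\mathcal{G}_i)$. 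In a finite biological system every synthesis reaction is driven by a bounded resource --- a fixed number of gene copies or catalytic templates, or a zeroth-order source --- so each per-microstate synthesis rate is at most some constant $\alpha_{\max}$ independent of $i$. A convex combination of quantities bounded by $\alpha_{\max}$ is itself bounded by $\alpha_{\max}$, giving $\alpha^{(\infty)}_i \le \alpha_{\max}$ for every $i$.

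For the denominator, I would use mass conservation together with the first-order form of degradation. Every microstate $\bx \in \mathcal{G}_{i+1}$ satisfies $n_{MEG}(\bx) = i+1$, so the weighted molecular content of the MEG grows linearly in $i$. Since at least one degradation channel is first order, with rate $r_k x_l$ proportional to the copy number of a degradable species, and mass conservation forces that species to carry a growing share of the $i+1$ elementary units on average, the probability-weighted total degradation rate $\beta^{(\infty)}_{i+1}$ is bounded below by a quantity that increases without bound, so $\beta^{(\infty)}_{i+1} \to \infty$. Dividing the two estimates then yields $\alpha^{(\infty)}_i/\beta^{(\infty)}_{i+1} \le \alpha_{\max}/\beta^{(\infty)}_{i+1} \to 0$, and therefore $\lim_{N\to\infty}\sup_{i>N}\alpha^{(\infty)}_i/\beta^{(\infty)}_{i+1} = 0 < 1$, as required.

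The hard part will be the degradation estimate, precisely because $\beta^{(\infty)}_{i+1}$ is a conditional steady-state average rather than a per-microstate minimum. A single microstate in $\mathcal{G}_{i+1}$ can hold all of its mass in oligomeric, non-degrading form and so have a total degradation rate near zero; a bound resting on the worst microstate therefore fails. The argument must instead exploit mass conservation to guarantee that some directly degradable (monomeric) species has conditional mean copy number growing with $i$, which in turn requires invoking the monomer--oligomer dissociation equilibrium that redistributes mass among species within each group $\mathcal{G}_{i+1}$. Making this averaging argument rigorous --- showing that the steady-state weight does not concentrate entirely on the low-degradation microstates --- is the crux, and is exactly where the physical assumption of a finite biological system enters.
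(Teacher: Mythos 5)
There is a genuine gap here, and you have identified it yourself: your entire argument hinges on showing that the conditional steady-state average degradation rate $\beta^{(\infty)}_{i+1}$ diverges (or at least eventually exceeds your constant $\alpha_{\max}$), and this step is asserted from physical reasoning but never proven. As you note, $\beta^{(\infty)}_{i+1}$ is an average over $\mathcal{G}_{i+1}$ weighted by the conditional steady-state distribution, which is unknown; ruling out concentration of that weight on microstates whose mass is locked in non-degrading (oligomeric or bound) form is essentially as hard as the lemma itself, and your appeal to a monomer--oligomer dissociation equilibrium is not a proof. Your proposal also quietly imports hypotheses the lemma does not grant: a uniform per-microstate bound $\alpha_{\max}$ on synthesis rates, the existence of a first-order degradation channel, and growth of the mean copy number of a degradable species. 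The lemma's only hypothesis is that the total mass of the system is finite.

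The paper's proof takes a completely different and much shorter route that avoids estimating $\beta^{(\infty)}_{i+1}$ altogether: it argues by contradiction using the aggregated birth--death balance relation of Eqn.~(\ref{eqn:abe}), $\alpha^{(\infty)}_{i}\,\mathbbm{1}^T\tilde{\bpi}^{(\infty)}(\mathcal{G}_{i}) = \beta^{(\infty)}_{i+1}\,\mathbbm{1}^T\tilde{\bpi}^{(\infty)}(\mathcal{G}_{i+1})$, which holds regardless of how probability is distributed inside each group. If the limit superior in Eqn.~(\ref{eqn:fbs}) were $\geq 1$, then $\beta^{(\infty)}_{i+1} \leq \alpha^{(\infty)}_{i}$ for all $i$ beyond some $N'$, and the balance relation forces $\tilde{\pi}^{(\infty)}_{i+1} \geq \tilde{\pi}^{(\infty)}_{i}$ for all $i > N'$; the aggregated probabilities are then non-decreasing in $i$, so they cannot sum to one --- probability mass escapes to infinity as in a pure-birth process, contradicting finiteness of the system. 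This is where the finiteness hypothesis actually enters, through normalizability of the steady state, not through micro-level rate structure. Note also that your intended conclusion (the ratio tends to $0$) is strictly stronger than what the lemma asserts; the paper states that stronger property only as an informal remark about typical networks, precisely because it requires the extra modeling assumptions you invoke. To repair your proof, replace the unproven divergence of $\beta^{(\infty)}_{i+1}$ with the contradiction argument above.
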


Proof can be found in the Appendix.

Note that in most biological reaction networks, the stronger condition 
$\lim_{N \rightarrow \infty} \sup\limits_{i > N} \frac{\alpha^{(\infty)}_{i}}{\beta^{(\infty)}_{i+1}} = 0$
should hold, as synthesis reactions usually have constant rates,
while degradation reactions have increasing rates when 
the copy number of the molecule increases.  
When the net copy number $i$ is sufficiently large, 
the ratio approaches zero.

According to the Eqn.~(\ref{eqn:trueerr}) and
$\tilde{\pi}^{(\infty)}_{i+1} < \tilde{\pi}^{(\infty)}_{i}$ as
discussed above, 
when the
total net molecular copy number $N$ increases to infinity,  
the true error $\Err^{(N)}$ converges to zero. 
For a
finite system, the series of the boundary probability
$\{{\tilde{\pi}^{(\infty)}_{N}}\}$ (Eqn.~(\ref{eqn:pininf})) also
converges to $0$, since the sequence of its partial sums converges to
$1$. That is, the $N$-th member $\ {\tilde{\pi}^{(\infty)}_{N}}$ of
this series converges to $0$ and the residual sum of this series $
\sum\limits_{i = N + 1}^\infty {\tilde{\pi}_i^{(\infty )}} \equiv
\Err^{(N)}$ converges to $0$.  We now study the convergence behavior
of the ratio of $\Err^{(N)}$ and $\tilde{\pi}^{(\infty)}_N$.
\begin{theorem} 
(Asymptotic Convergence of Error.)
For a truncated state space with a maximum net molecular copy number $N$ 
in the network, 
the true error  $\Err^{(N)}$ 
follows the inequality below when $N$ increases to infinity:
\begin{equation}
\Err^{(N)} \leq 
\frac{\frac{\alpha^{(\infty)}_{M}}{\beta^{(\infty)}_{M+1}}}{1-\frac{\alpha^{(\infty)}_{M}}{\beta^{(\infty)}_{M+1}}} \tilde{\pi}^{(\infty)}_N, 
\label{eqn:ace}
\end{equation}
where $M$ is an integer selected from $N, \cdots, \infty$ to satisfy 
$
\frac{\alpha^{(\infty)}_{M}}
     {\beta^{(\infty)}_{M+1}}
=\mathop {\sup }\limits_{k \ge N}
\left\{
   \frac{\alpha^{(\infty)}_{k}}
        {\beta^{(\infty)}_{k+1}}
\right\}$.
\label{thm:ace}
\end{theorem}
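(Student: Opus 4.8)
The plan is to rewrite the error sum as a tail anchored at the boundary index $N$ and then dominate it term by term by a genuine geometric series whose common ratio is the supremum $\rho := \sup_{k \ge N} \alpha_k^{(\infty)}/\beta_{k+1}^{(\infty)} = \alpha_M^{(\infty)}/\beta_{M+1}^{(\infty)}$; the bound in Eqn.~(\ref{eqn:ace}) then falls out of summing that geometric series.

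First I would re-express the closed form Eqn.~(\ref{eqn:pinii}) relative to $\tilde{\pi}_N^{(\infty)}$ instead of $\tilde{\pi}_0^{(\infty)}$. Taking the ratio of the formulas for an index $i>N$ and for $N$, the factor $\tilde{\pi}_0^{(\infty)}$ and the shared head of the product $\prod_{k=0}^{N-1}$ cancel, leaving
\[
\tilde{\pi}_i^{(\infty)} = \tilde{\pi}_N^{(\infty)} \prod_{k=N}^{i-1} \frac{\alpha_k^{(\infty)}}{\beta_{k+1}^{(\infty)}}, \qquad i > N.
\]
This isolates $\tilde{\pi}_N^{(\infty)}$ as the multiplicative constant that appears on the right-hand side of Eqn.~(\ref{eqn:ace}).

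Next I would invoke Lemma~\ref{lm:fbs} to guarantee that, for $N$ sufficiently large, $\rho < 1$; by the hypothesis of the theorem the supremum is attained at the index $M$, so $\rho = \alpha_M^{(\infty)}/\beta_{M+1}^{(\infty)} < 1$. Since every factor in the product above has index $k \ge N$ and is therefore bounded by $\rho$, I get the term-wise estimate $\tilde{\pi}_i^{(\infty)} \le \tilde{\pi}_N^{(\infty)} \rho^{\,i-N}$ for $i > N$. Substituting into Eqn.~(\ref{eqn:trueerr}) and reindexing by $j = i-N$ yields a convergent geometric series (its convergence is exactly what $\rho<1$ buys us):
\[
\Err^{(N)} = \sum_{i=N+1}^{\infty} \tilde{\pi}_i^{(\infty)} \le \tilde{\pi}_N^{(\infty)} \sum_{j=1}^{\infty} \rho^{\,j} = \frac{\rho}{1-\rho}\, \tilde{\pi}_N^{(\infty)},
\]
which is precisely Eqn.~(\ref{eqn:ace}) after substituting $\rho = \alpha_M^{(\infty)}/\beta_{M+1}^{(\infty)}$.

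The routine manipulations---product cancellation and summing a geometric series---are not where the difficulty lies; the crux is the appeal to Lemma~\ref{lm:fbs}, and this is exactly where the word ``asymptotic'' in the statement enters. That lemma plays a double role: it is what forces $\rho < 1$ once $N$ is large enough (without which the dominating series diverges and the bound is vacuous), and it secures the existence of a finite maximizing index $M$. If one wishes to be scrupulous about attainment of the supremum, I would observe that when no finite index achieves it, one can simply use $\rho$ itself in place of the maximizing ratio---since $x \mapsto x/(1-x)$ is increasing on $[0,1)$, this is the sharpest bound of this geometric form---or pick $M$ with $\alpha_M^{(\infty)}/\beta_{M+1}^{(\infty)}$ within an arbitrary tolerance of $\rho$; either choice delivers the claimed inequality in the limit.
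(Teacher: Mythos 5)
Your proposal is correct and follows essentially the same route as the paper's own proof: both cancel the shared head of the product $\prod_{k=0}^{N-1}\alpha_k^{(\infty)}/\beta_{k+1}^{(\infty)}$ to express the tail probabilities relative to $\tilde{\pi}_N^{(\infty)}$, bound each remaining factor by $\rho=\sup_{k\ge N}\alpha_k^{(\infty)}/\beta_{k+1}^{(\infty)}$, invoke Lemma~\ref{lm:fbs} to get $\rho<1$ for large $N$, and sum the resulting geometric series to obtain $\rho/(1-\rho)$. The only difference is cosmetic (you bound term by term and then sum, the paper forms the ratio $\Err^{(N)}/\tilde{\pi}_N^{(\infty)}$ first), and your closing remark on non-attainment of the supremum is a small rigor improvement over the paper, which tacitly assumes a maximizing index $M$ exists.
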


Proof can be found in the Appendix.

According to Theorem~\ref{thm:ace}, the true error $\Err^{(N)}$ is 
asymptotically bounded by the boundary probability $\tilde{\pi}^{(\infty)}_N$ 
multiplied by a simple function 
of the aggregated synthesis rates $\alpha^{(\infty)}_M$ and degradation 
rates $\beta^{(\infty)}_{M+1}$.  We can therefore use Inequality
(\ref{eqn:ace}) 
to construct an upper-bound for $\Err^{(N)}$.  We examine three cases: 
(1) If $\alpha^{(\infty)}_{M}/\beta^{(\infty)}_{M+1} < 0.5$, 
the true error is always smaller than the boundary
probability: $\Err^{(N)} < \pi^{(\infty)}_N$, 
when the maximum net molecular copy number $N$ is sufficiently large. 
(2) If $\alpha^{(\infty)}_{M}/\beta^{(\infty)}_{M+1} = 0.5$, the true error
converges asymptotically to $\pi^{(\infty)}_N$.  
(3) If $ 0.5 < \frac{\alpha^{(\infty)}_{N}}{\beta^{(\infty)}_{N+1}} < 1.0 $, 
the error is bounded by $\pi^{(\infty)}_N$ multiplied by a constant 
$
C \equiv 
\frac{
       \alpha^{(\infty)}_{M}/\beta^{(\infty)}_{M+1}
     }
{
       1-{\alpha^{(\infty)}_{M}}/{\beta^{(\infty)}_{M+1}}
}
$ according to  
Inequality~(\ref{eqn:ace}).

In realistic biological reaction networks, case (1) is most applicable.
As rates of synthesis reactions usually are constant, whereas
rates of degradation reactions depend on the copy number of net
molecules in the network, the ratio between aggregated synthesis rate
and degradation rate decreases monotonically with increasing net
molecular copy numbers $N$.  We therefore conclude that the boundary probability
$\pi_N^{(\infty)}$ indeed provides an upper bound to the state space
truncation error.  In addition, in case (1) $M = N$, and
$\alpha^{(\infty)}_{N}/\beta^{(\infty)}_{N+1} =
\alpha^{(\infty)}_{M}/\beta^{(\infty)}_{M+1}$.
Therefore Inequality~(\ref{eqn:ace}) can be further rewritten as: 
\begin{equation}
\Err^{(N)} \leq 
\frac{\frac{\alpha^{(\infty)}_{N}}{\beta^{(\infty)}_{N+1}}}{1-\frac{\alpha^{(\infty)}_{N}}{\beta^{(\infty)}_{N+1}}} \tilde{\pi}^{(\infty)}_N. 
\label{eqn:ace1}
\end{equation}

\subsubsection*{Computed Probability of Boundary States on Truncated State Space Bounds the True Boundary Probability}

It is not practical to compute the true boundary probability
  $\pi_N^{(\infty)}$ on the original infinite state space.  In this
  section, we show that the probability of boundary states
  $\pi^{(N)}_N$ is larger than $\pi^{(\infty)}_N$.  Therefore, we can
  use $\pi^{(N)}_N$ on truncated state space as an upper bound for
  $\Err^{(N)}$. That is, the steady state probability $\pi^{(N)}_{N}$
  computed using the truncated state space over the boundary states can be used to bound
$\Err^{(N)}$.

We first show that the truncated state space and its rate matrix can also be 
aggregated according to the net copy number of molecules in MEG 
following Lemma~\ref{lm:rmat}, which is similar to Lemma~\ref{lm:rma}: 

\begin{lemma}
\label{lm:rmat}
A Molecular Equivalent Group with a maximum of $N$ total copy number of elementary molecular species 
gives a truncated state space $\Omega^{(N)}$ and a truncated rate matrix $\bA^{(N)}$. 
For any homogeneous continuous-time Markov process with such a rate matrix $\bA^{(N)}$, 
an aggregated continuous-time Markov 
process with a rate matrix $\bB^{(N)}$ can be constructed
on the partition $\tilde{\Omega}^{(N)} = \{ \mathcal{G}_0,
\mathcal{G}_1, \cdots, \mathcal{G}_N \}$ with respect to 
the total net copy number of molecules in the network, such that it gives the same steady
state probability distribution for each partitioned group
$\{\mathcal{G}_s\}$ as that given by the original matrix $\bA^{(N)}$,
\textit{i.e.}, $\pi(\mathcal{G}_s) = \sum_{\bx \in \mathcal{G}_s}
\bpi(\bx)$ for all $s = 0, 1, \cdots, N$,
where $\bpi(\bx)$ is the steady state probability distribution associated with
$\bA^{(N)}$.

Specifically, the rate matrix $\bB^{(N)}$ can be constructed as: 
\begin{equation}
\bB^{(N)} = \left( {\begin{array}{c}
   \boldsymbol\alpha^{(N)}, \boldsymbol\gamma^{(N)}, \boldsymbol\beta^{(N)}
\end{array}} \right),
\label{eqn:bdmatf}
\end{equation}
with the lower off-diagonal vector $$\boldsymbol\alpha^{(N)} = (\alpha^{(N)}_i), \, i = 0,\cdots,N-1.$$
the upper off-diagonal vector $$\boldsymbol\beta^{(N)} = (\beta^{(N)}_{i}), \, i = 1,\cdots,N.$$
and the diagonal vector $$\boldsymbol\gamma^{(N)} = (\gamma^{(N)}_i) = (-\alpha^{(N)}_i - \beta^{(N)}_{i}), \, i = 0,\cdots,N.$$
It is equivalent to substituting 
the block sub-matrices $\bA_{i+1,\, i}$ and $\bA_{i,\, i+1}$ in the 
original rate matrix $\tilde{\bA}$ with the 
corresponding aggregated synthesis rate $\alpha^{(N)}_i$ and 
degradation rate $\beta^{(N)}_{i+1}$, respectively. 
The aggregated rates on the truncated state space are: 
\begin{equation}
\alpha^{(N)}_i = \mathbbm{1}^T \bA_{i+1,i} \frac{
\tilde{\bpi}(\mathcal{G}_{i})}{\mathbbm{1}^T \tilde{\bpi}(\mathcal{G}_{i})} \quad \text{and} \quad
\beta^{(N)}_{i+1} = \mathbbm{1}^T \bA_{i,i+1} \frac{\tilde{\bpi}(\mathcal{G}_{i+1})}{\mathbbm{1}^T \tilde{\bpi}(\mathcal{G}_{i+1})}.
\label{eqn:abdeff}
\end{equation}

\end{lemma}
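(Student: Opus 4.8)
The plan is to mirror the proof of Lemma~\ref{lm:rma}, adapting it to the finite partition $\tilde{\Omega}^{(N)} = \{\mathcal{G}_0, \ldots, \mathcal{G}_N\}$. I would start from the steady-state condition for the original truncated process, $\bA^{(N)}\bpi = 0$, written as one balance equation per microstate. After permuting states so that $\bA^{(N)}$ takes the block-tridiagonal form $\tilde{\bA}$ of Eqn.~(\ref{eqn:Aaggreg1}), I would aggregate these equations by summing over each group $\mathcal{G}_s$, i.e.\ premultiplying the block rows by the all-ones vector $\mathbbm{1}^T$. Because only the blocks $\bA_{s,s-1}$, $\bA_{s,s}$, and $\bA_{s,s+1}$ act on group $s$, the summed balance equation reads $\mathbbm{1}^T[\bA_{s,s-1}\tilde{\bpi}(\mathcal{G}_{s-1}) + \bA_{s,s}\tilde{\bpi}(\mathcal{G}_s) + \bA_{s,s+1}\tilde{\bpi}(\mathcal{G}_{s+1})] = 0$.

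The key algebraic step exploits the fact that $\bA^{(N)}$ is a valid generator, so its columns sum to zero. Applied to the columns indexed by group $s$, this yields the identity $\mathbbm{1}^T\bA_{s,s} = -\mathbbm{1}^T\bA_{s-1,s} - \mathbbm{1}^T\bA_{s+1,s}$, which lets me eliminate the within-group block. Substituting this into the summed balance equation and recognising the defining relations of Eqn.~(\ref{eqn:abdeff}), namely $\mathbbm{1}^T\bA_{i+1,i}\tilde{\bpi}(\mathcal{G}_i) = \alpha^{(N)}_i \tilde{\pi}_i$ and $\mathbbm{1}^T\bA_{i,i+1}\tilde{\bpi}(\mathcal{G}_{i+1}) = \beta^{(N)}_{i+1}\tilde{\pi}_{i+1}$, collapses every block product to a scalar. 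The result is the familiar birth--death balance $\alpha^{(N)}_{s-1}\tilde{\pi}_{s-1} - (\alpha^{(N)}_s + \beta^{(N)}_s)\tilde{\pi}_s + \beta^{(N)}_{s+1}\tilde{\pi}_{s+1} = 0$, which is precisely the $s$-th row of $\bB^{(N)}\tilde{\bpi}^{(N)} = 0$ with $\bB^{(N)}$ as in Eqn.~(\ref{eqn:bdmatf}). Hence the aggregated probabilities $\tilde{\pi}_s = \mathbbm{1}^T\tilde{\bpi}(\mathcal{G}_s)$ solve the lumped system, establishing the claimed equality $\pi(\mathcal{G}_s) = \sum_{\bx\in\mathcal{G}_s}\bpi(\bx)$.

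The main obstacle, and the only genuine departure from Lemma~\ref{lm:rma}, is the treatment of the reflecting boundary group $\mathcal{G}_N$. Here there is no group $\mathcal{G}_{N+1}$, so the synthesis block $\bA_{N+1,N}$ is absent; the finite-buffer construction redirects the boundary synthesis flux back into $\mathcal{G}_N$ so that $\bA^{(N)}$ remains a proper generator with zero column sums. I would verify that under this reflecting convention the column-sum identity at $s=N$ reduces to $\mathbbm{1}^T\bA_{N,N} = -\mathbbm{1}^T\bA_{N-1,N}$, so that the aggregated top equation becomes $\alpha^{(N)}_{N-1}\tilde{\pi}_{N-1} - \beta^{(N)}_N\tilde{\pi}_N = 0$, consistent with setting $\alpha^{(N)}_N = 0$ and $\gamma^{(N)}_N = -\beta^{(N)}_N$ in $\bB^{(N)}$; the bottom equation at $s=0$ is handled symmetrically with $\beta^{(N)}_0 = 0$. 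Finally, since $\bB^{(N)}$ is a finite irreducible tridiagonal generator (all $\alpha^{(N)}_i, \beta^{(N)}_{i+1} > 0$ for $0 \le i \le N-1$), its stationary distribution is unique up to normalisation, so the solution $\{\tilde{\pi}_s\}$ constructed above is the steady-state distribution of the aggregated chain, completing the proof.
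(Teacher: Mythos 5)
Your proposal is correct and takes essentially the same route as the paper: the paper proves this lemma simply by invoking the argument of Lemma~\ref{lm:rma} (group-wise summation of the steady-state balance equations, elimination of the diagonal blocks via the zero-column-sum property of the generator, and identification of the aggregated rates $\alpha^{(N)}_i$, $\beta^{(N)}_{i+1}$). Your explicit treatment of the reflecting boundary, verifying that $\mathbbm{1}^T\bA_{N,N} = -\mathbbm{1}^T\bA_{N-1,N}$ collapses the last aggregated equation to $\alpha^{(N)}_{N-1}\tilde{\pi}_{N-1} - \beta^{(N)}_N\tilde{\pi}_N = 0$, is the one detail the paper leaves implicit in its ``Same as Lemma~\ref{lm:rma}'' proof, and you handle it correctly.
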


\begin{proof}
Same as Lemma~\ref{lm:rma}. 
\end{proof}

Similar to the case of infinite state space, we can write out in  analytic form 
the total steady state probability 
$\tilde{\pi}_i^{(N)}$
over each aggregated group
$\mathcal{G}_i$ as: 
\begin{equation}
\tilde{\pi}_i^{(N)} = \frac{\prod\limits_{k = 0}^{i-1}
    \frac{\alpha^{(N)}_{k}}{\beta^{(N)}_{k+1}}}
{1 + {\sum\limits_{j = 1}^{N} {\prod\limits_{k = 0}^{j-1}
     {\frac{\alpha^{(N)}_{k}}{\beta^{(N)}_{k+1}}}} 
    }
}, \quad i = 0, 1, 2, \cdots, N. 
\label{eqn:pini}
\end{equation}
Specifically, the total steady state probability $\tilde{\pi}_N^{(N)}$
over the group of aggregated boundary states $\mathcal{G}_N$ is: 
\begin{equation}
\tilde{\pi}_N^{(N)} = \frac{\prod\limits_{k = 0}^{N-1}
    \frac{\alpha^{(N)}_{k}}{\beta^{(N)}_{k+1}}}
{1 + {\sum\limits_{j = 1}^{N} {\prod\limits_{k = 0}^{j-1}
     {\frac{\alpha^{(N)}_{k}}{\beta^{(N)}_{k+1}}}} 
    }
}. 
\label{eqn:pinn}
\end{equation}

We now study how state space truncation
affects the steady state probabilities over the aggregated groups. 
\begin{theorem}
\label{thm:ipt}
(Boundary Probability Increases after State Space Truncation)
The total steady state probability $\tilde{\pi}_i^{(N)}$ of an aggregated state 
group $\mathcal{G}_i$, for all $i = 0, 1, \cdots, N$, on the truncated state space 
$\tilde{\Omega}^{(N)}$ with a maximum net molecular copy number $N$, 
is greater than or equal to the non-truncated probability 
$\tilde{\pi}_i^{(\infty)}$ over the same group $\mathcal{G}_i$
obtained using the original 
state space $\tilde{\Omega}^{(\infty)}$ of infinite size, 
i.e., $\tilde{\pi}_i^{(\infty)} \leq \tilde{\pi}_i^{(N)}$. 
\end{theorem}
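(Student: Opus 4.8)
The plan is to reduce the theorem to a comparison of the two closed-form expressions for the aggregated steady-state probabilities, and then to observe that the truncation changes only the normalizing constant. By Lemma~\ref{lm:rma} the infinite aggregated chain has $\tilde{\pi}_i^{(\infty)}$ given by Eqn.~(\ref{eqn:pininf}), and by Lemma~\ref{lm:rmat} the truncated aggregated chain has $\tilde{\pi}_i^{(N)}$ given by Eqn.~(\ref{eqn:pini}). Introducing the partial products $\rho_i^{(\infty)} \equiv \prod_{k=0}^{i-1}\alpha_k^{(\infty)}/\beta_{k+1}^{(\infty)}$ and $\rho_i^{(N)} \equiv \prod_{k=0}^{i-1}\alpha_k^{(N)}/\beta_{k+1}^{(N)}$, with $\rho_0^{(\infty)}=\rho_0^{(N)}=1$, both expressions take the form $\tilde{\pi}_i = \rho_i / D$, where $D^{(N)} = \sum_{j=0}^{N}\rho_j^{(N)}$ and $D^{(\infty)} = \sum_{j=0}^{\infty}\rho_j^{(\infty)}$. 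The entire argument then rests on relating the numerators $\rho_i$ and the denominators $D$ between the two systems.

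The central step I would establish is that the partial products agree below the boundary, namely $\rho_i^{(N)} = \rho_i^{(\infty)}$ for every $i = 0, 1, \ldots, N$, equivalently $\alpha_k^{(N)}/\beta_{k+1}^{(N)} = \alpha_k^{(\infty)}/\beta_{k+1}^{(\infty)}$ for $k = 0, \ldots, N-1$. Granting this, the numerators in the two formulas coincide for $i \le N$, while the denominators satisfy $D^{(\infty)} = D^{(N)} + \sum_{j=N+1}^{\infty}\rho_j^{(\infty)} \ge D^{(N)}$, since every $\rho_j^{(\infty)} \ge 0$. Therefore, for each $i \le N$, $\tilde{\pi}_i^{(N)} = \rho_i / D^{(N)} \ge \rho_i / D^{(\infty)} = \tilde{\pi}_i^{(\infty)}$, which is exactly the claimed inequality, with the gap controlled by the truncated tail mass $\sum_{j>N}\rho_j^{(\infty)}$. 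This final comparison is routine; only the equality of the partial products requires real work.

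The main obstacle, then, is justifying $\rho_i^{(N)} = \rho_i^{(\infty)}$ for $i \le N$. The difficulty is that the aggregated rates in Eqns.~(\ref{eqn:abdef}) and~(\ref{eqn:abdeff}) are the raw level-to-level block rates $\mathbbm{1}^T \bA_{k+1,k}$ and $\mathbbm{1}^T \bA_{k,k+1}$ weighted by the within-group conditional steady-state vectors $\tilde{\bpi}(\mathcal{G}_k)/\mathbbm{1}^T\tilde{\bpi}(\mathcal{G}_k)$, and these conditional weights could a priori shift when the boundary at level $N$ is imposed. The blocks $\bA_{k+1,k}$ and $\bA_{k,k+1}$ are intrinsic to the network and hence identical in both systems, so everything reduces to showing that the conditional weights entering the ratios for $k < N$ are unchanged. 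Here I would exploit the reflecting, mass-conserving nature of the truncation (in contrast to an absorbing FSP boundary), which deletes only the synthesis transitions leaving $\mathcal{G}_N$ and retains every state and every transition within and among $\mathcal{G}_0,\ldots,\mathcal{G}_N$, together with the stoichiometry-1 assumption that makes the aggregated chain a genuine nearest-neighbour birth--death chain on the level index.

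Concretely, I would attack the conditional-weight invariance through the steady-state cut balance across each level interface, $\tilde{\pi}_k\,\alpha_k = \tilde{\pi}_{k+1}\,\beta_{k+1}$, combined with the observation that imposing a reflecting boundary leaves the balance relations at all interior levels $k < N$ formally identical; if the conditional weights nonetheless shift, I would instead seek a coupling or stochastic-ordering argument directly between the truncated and infinite birth--death chains (the same stochastic-ordering machinery the paper invokes later), showing that the induced perturbation does not alter the adjacent-level flow ratios entering $\rho$. I expect this invariance of the below-boundary rate ratios to be the genuinely hard part of the proof, since the denominator comparison that follows is elementary.
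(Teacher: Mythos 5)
Your reduction via the closed forms (Eqns.~(\ref{eqn:pininf}) and (\ref{eqn:pini})) and the denominator comparison is sound as far as it goes, but the proposal never proves its central claim, namely $\alpha_k^{(N)}/\beta_{k+1}^{(N)} = \alpha_k^{(\infty)}/\beta_{k+1}^{(\infty)}$ for $k < N$, and that claim is not a technical lemma that can be deferred: it is the entire content of the theorem, in fact something stronger. Combined with the closed forms it asserts that the truncated aggregated distribution is exactly the renormalized restriction of the infinite one, $\tilde{\pi}_i^{(N)} = \tilde{\pi}_i^{(\infty)}/\sum_{j\le N}\tilde{\pi}_j^{(\infty)}$. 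Neither of your two fallback routes closes this gap. The cut-balance route is circular: the relation $\tilde{\pi}_k\,\alpha_k = \tilde{\pi}_{k+1}\,\beta_{k+1}$ holds separately inside each system but with each system's \emph{own} rates, because $\alpha_k^{(N)}$ and $\alpha_k^{(\infty)}$ (Eqns.~(\ref{eqn:abdef}) and (\ref{eqn:abdeff})) weight the same block $\mathbbm{1}^T\bA_{k+1,k}$ by that system's own within-group conditional steady-state vector; the formal identity of the balance relations says nothing about equality of those weights. And the weights do shift in general: for a multi-species, non-reversible network, the stationary distribution of the reflecting truncation is \emph{not} the microstate-level renormalized restriction of the infinite stationary distribution --- what survives truncation is only the aggregate flow balance across the cut at $\mathcal{G}_N$, not the per-state balance (synthesis outflow of each $\bx \in \mathcal{G}_N$ equal to degradation inflow into $\bx$ from $\mathcal{G}_{N+1}$) that conditional-weight invariance would require. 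Finally, the stochastic-ordering machinery the paper uses elsewhere compares $\underline{\bB}$, $\bB$, $\overline{\bB}$, all built on the \emph{same} finite partition; it is not set up to compare a truncated chain against the infinite one, so invoking it here is not a plan but a placeholder.

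The paper's proof avoids comparing rates between the two systems altogether, which is why it succeeds where your sketch stalls. It compares consecutive truncations $N$ and $N+1$: subtracting the aggregated block steady-state equations (identical blocks below the boundary), and using the column-sum identities (Eqn.~(\ref{eqn:stOnPartition})) together with Eqns.~(\ref{eqn:prpt1})--(\ref{eqn:prpt2}) and the $(N+1)$-st equation to close the system at the boundary block, it shows the difference vectors ${\Delta\bpi}_{i} = \bpi_i^{(N)} - \bpi_i^{(N+1)}$ satisfy the same flow-balance recurrence $\mathbbm{1}^T \bA_{i,i-1}{\Delta\bpi}_{i-1} = \mathbbm{1}^T \bA_{i-1,i}{\Delta\bpi}_{i}$ as a steady state, hence are sign-consistent; normalization of both distributions gives $\sum_i \Delta\tilde{\pi}_i = \tilde{\pi}_{N+1}^{(N+1)} \ge 0$, forcing every $\Delta\tilde{\pi}_i \ge 0$, and induction in $N$ yields $\tilde{\pi}_i^{(N)} \ge \tilde{\pi}_i^{(N+1)} \ge \cdots \ge \tilde{\pi}_i^{(\infty)}$. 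Note that this difference argument is precisely the mechanism that delivers (between consecutive truncations) the proportionality your approach has to assume; so to repair your proof you would either have to reproduce that argument, at which point the closed-form detour becomes unnecessary, or find an independent proof of the conditional-weight invariance, which your proposal does not contain.
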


Proof can be found in the Appendix.

In summary, the boundary probability increases when the state space is
truncated $\tilde{\pi}_N^{(N)} \geq \tilde{\pi}_N^{(\infty)}.$ From
Theorem~\ref{thm:ace}, we always have $\Err^{(N)} \leq C
\tilde{\bpi}_N^{(\infty)}$.  Therefore, we can bound $\Err^{(N)}$ by
the boundary probability $\pi^{(N)}_N$ computed using the truncated
state space when $\alpha^{(N)}_i \ne 0$ and $\beta^{(N)}_{i+1} \ne 0$.

\subsubsection*{From One to Multiple MEGs}

In complex reaction networks, multiple MEGs occur.  Since different
MEGs are pairwise disjoint, we can aggregate the same state space and
re-construct the permuted the rate matrix according to different MEG one at a time.
Lemmas~\ref{lm:rma}, \ref{lm:fbs}, and \ref{lm:rmat}, and
Theorems~\ref{thm:ace} and \ref{thm:ipt} are all valid for each
individual MEG.  That is, the true error of truncating one MEG is
bounded by the boundary probability computed using the state space
truncated in that particular MEG, while all other MEGs have infinite
net molecular copy numbers.  However, it is not possible to compute
the solution of dCME with infinite molecules in any MEG. Below we
study how error bounds can be constructed when states in all MEGs are
truncated simultaneously.

\subsubsection*{From Truncating One to Truncating All MEGs}

We use $\mathcal{I} = (\infty, \cdots, \infty)$ to denote the vector of  
infinite  net copy numbers for all MEGs in the network. 
$\mathcal{I}$ corresponds to the original infinite 
state space $\Omega^{(\mathcal{I})}$ without any truncation. 
We use $\bA^{(\mathcal{I})}$ and $\bpi^{(\mathcal{I})}$ to 
denote the transition rate matrix and the steady state probability 
distribution over $\Omega^{(\mathcal{I})}$, respectively. Furthermore, we have 
$\bA^{(\mathcal{I})} \bpi^{(\mathcal{I})} = 0$.

We use $\mathcal{I}_j = (\infty, \cdots, N_j, \cdots, \infty)$ to
denote the vector of maximum copy numbers with only the $j$-th MEG
limited to a finite copy number $N_j$ and all other MEGs with
infinite copy numbers.  The corresponding state space is
denoted  $\Omega^{(\mathcal{I}_j)}$, the transition rate matrix
$\bA^{(\mathcal{I}_j)}$, and the steady state probability distribution
$\bpi^{(\mathcal{I}_j)}$. At the steady state, we also have
$\bA^{(\mathcal{I}_j)} \bpi^{(\mathcal{I}_j)} = 0$.

We now add one more truncation to the $i$-th MEG in addition to the
$j$-th MEG.  We denote the vector of maximum copies
as $\mathcal{I}_{i,j}
= (\infty, \cdots, N_i, \cdots, N_j, \cdots, \infty)$,
with $N_i$ and $N_j$ the maximum copy numbers of the $i$-th and $j$-th MEG, 
respectively. All other MEGs can have infinite molecular copy numbers.  We denote the 
corresponding state space as $\Omega^{(\mathcal{I}_{i,j})}$, the transition 
rate matrix  $\bA^{(\mathcal{I}_{i,j})}$, the steady state probability 
distribution  $\bpi^{(\mathcal{I}_{i,j})}$.  At the steady state, we have 
$\bA^{(\mathcal{I}_{i,j})} \bpi^{(\mathcal{I}_{i,j})} = 0$. 

When all $w$ number of MEGs in the network are truncated using a vector of maximum copies  
$\mathcal{B} = (N_1, \cdots, N_i, \cdots, N_j, \cdots, N_w)$, we have a finite 
state space $\Omega^{(\mathcal{B})}$. 
Obviously, we have 
$\Omega^{(\mathcal{B})} \subseteq 
\Omega^{(\mathcal{I}_{i,j})} \subseteq 
\Omega^{(\mathcal{I}_j)} \subseteq 
\Omega^{(\mathcal{I})}$.

We have already shown that for each truncated MEG on the infinite
state space, the truncation error is bounded by the corresponding
boundary probability.  We now show that this error bound also holds
for the fully truncated state spaces $\Omega^{(\mathcal{B})}$. We
show first adding only one additional truncation at the $i$-th MEG
to the singularly truncated state space $\Omega^{(\mathcal{I}_j)}$,
and demonstrate that the probability of each state in the doubly
truncated state space $\Omega^{(\mathcal{I}_{i,j})}$ is no smaller
than the probability in singularly truncated state space
$\Omega^{(\mathcal{I}_j)}$, \textit{i.e.},
$\pi^{(\mathcal{I}_{i,j})}(\bx) \geq \pi^{(\mathcal{I}_{j})}(\bx)$ for
all $\bx \in \Omega^{(\mathcal{I}_{i,j})}$.

\begin{theorem}
\label{thm:Iij}
At steady state, $\bpi^{(\mathcal{I}_{i,j})} \geq \bpi^{(\mathcal{I}_{j})}$ and 
$\bpi^{(\mathcal{I}_{i,j})}$ approaches $\bpi^{(\mathcal{I}_{j})}$ component-wise 
for any state in $\Omega^{(\mathcal{I}_{i,j})}$ when the maximum net copy number limit for 
the $i$-th MEG $N_i$ goes to $\infty$. 
\end{theorem}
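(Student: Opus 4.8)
The plan is to compare the two stationary distributions on their common support by viewing the extra truncation of the $i$-th MEG as a reflecting boundary added to the chain $\bA^{(\mathcal{I}_j)}$, and to represent both stationary vectors through the Green's matrix of the sub-generator obtained by deleting the truncated states. First I would aggregate the state space $\Omega^{(\mathcal{I}_j)}$ by the net copy number of the $i$-th MEG. Because the $i$-th MEG is disjoint from the $j$-th MEG (and from all others) and its synthesis/degradation reactions change its net copy number by exactly one, the rate matrix is block-tridiagonal in these levels, which is precisely the structure used in Lemmas~\ref{lm:rma} and~\ref{lm:rmat}; as already noted in the text, those lemmas apply to the $i$-th MEG regardless of the truncation status of the $j$-th MEG. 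Truncating the $i$-th MEG at $N_i$ then amounts to deleting the single up-coupling out of level $N_i$ and reflecting it back.

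Next I would partition the states of $\Omega^{(\mathcal{I}_j)}$ into $S = \Omega^{(\mathcal{I}_{i,j})}$ (levels $0,\dots,N_i$) and $T$ (levels $>N_i$) and write $\bA^{(\mathcal{I}_j)}$ in the corresponding block form with blocks $\bA_{SS},\bA_{ST},\bA_{TS},\bA_{TT}$. The reflecting-boundary generator on $S$ is then $\bA_{SS}+\bD$, where $\bD\ge 0$ is the diagonal matrix that adds back the suppressed $i$-th-MEG synthesis rate on the boundary states so that columns again sum to zero, which is exactly the mass-conserving ``reflect to enumerated states'' prescription. Writing $\bpi^{(\mathcal{I}_j)}=(\bp_S,\bp_T)$ and $\tilde\bp_S=\bpi^{(\mathcal{I}_{i,j})}$, the two steady-state conditions read $(-\bA_{SS})\bp_S = \bA_{ST}\bp_T$ and $(-\bA_{SS})\tilde\bp_S = \bD\tilde\bp_S$. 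Since $-\bA_{SS}$ is an irreducible sub-generator with strictly positive leakage into $T$, it is a nonsingular M-matrix and its inverse $\bM=(-\bA_{SS})^{-1}$ is entrywise nonnegative. Hence $\bp_S=\bM\bc$ and $\tilde\bp_S=\bM\tilde\bc$, where the two source vectors $\bc=\bA_{ST}\bp_T$ (the down-flux re-entering the boundary from above) and $\tilde\bc=\bD\tilde\bp_S$ (the reflected up-flux at the boundary) are both supported on the level-$N_i$ states, with $\mathbbm{1}^T\tilde\bp_S=1>1-\mathbbm{1}^T\bp_T=\mathbbm{1}^T\bp_S$.

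The inequality $\bpi^{(\mathcal{I}_{i,j})}\ge\bpi^{(\mathcal{I}_j)}$ then reduces to the entrywise comparison $\bM\tilde\bc\ge\bM\bc$. The aggregate of this comparison over each level is governed by Theorem~\ref{thm:ipt}: grouping the entries of $\bM\tilde\bc$ and $\bM\bc$ by net copy number reproduces the truncated and infinite aggregated birth--death chains of Lemmas~\ref{lm:rma} and~\ref{lm:rmat}, whose level totals obey Eqns.~(\ref{eqn:pininf}) and~(\ref{eqn:pini}) and satisfy $\tilde\pi_k^{(\mathcal{I}_{i,j})}\ge\tilde\pi_k^{(\mathcal{I}_j)}$ for every $k\le N_i$. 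The step I expect to be the main obstacle is upgrading this level-aggregate domination to the claimed entrywise domination, because the conditional distribution within a level is in general not invariant under the added truncation: it is preserved when the $i$-th-MEG synthesis propensity is uniform across each level (for example zeroth-order synthesis), in which case the entrywise claim collapses onto Theorem~\ref{thm:ipt}, but it can shift when synthesis is copy-number-dependent. To close the general case I would compare the two boundary source shapes $\bc$ and $\tilde\bc$ directly through the matrix-geometric ($R$-matrix) structure of the block-tridiagonal chain, or construct a monotone coupling of the reflected dynamics on $S$ that keeps the two chains together on the interior and only redistributes mass at the boundary, so that every state's occupation can only increase; the re-entry mismatch at the boundary (the chain may return from $T$ to a different boundary state than the one it left) is the delicate point that such a coupling must absorb.

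Finally, the convergence $\bpi^{(\mathcal{I}_{i,j})}\to\bpi^{(\mathcal{I}_j)}$ as $N_i\to\infty$ I would obtain from Lemma~\ref{lm:fbs}: the eventual bound $\sup_{k>N_i}\alpha_k^{(\infty)}/\beta_{k+1}^{(\infty)}<1$ forces the tail mass $\mathbbm{1}^T\bp_T=\sum_{k>N_i}\tilde\pi_k^{(\infty)}$ to decay geometrically to $0$, so no probability escapes to infinity and the family $\{\tilde\bp_S\}$ is tight. Combined with the monotonicity in $N_i$ from the previous paragraph, $\tilde\bp_S$ decreases monotonically while remaining bounded below by $\bp_S$, hence converges; passing to the limit in the finite stationary equation $(-\bA_{SS})\tilde\bp_S=\bD\tilde\bp_S$, equivalently in the Green's representation $\tilde\bp_S=\bM\tilde\bc$, identifies the limit as $\bp_S$ and yields the component-wise convergence.
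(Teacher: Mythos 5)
Your setup coincides with the paper's own: you aggregate $\Omega^{(\mathcal{I}_j)}$ by the net copy number of the $i$-th MEG, split states into $S=\Omega^{(\mathcal{I}_{i,j})}$ and its complement $T$, and recognize that the reflecting truncated generator is $\bA_{SS}+\bD$ with $\bD=\diag(\mathbbm{1}^T\bA_{TS})\ge 0$, which is exactly the paper's identity $\tilde{\bA}^{(\mathcal{I}_{i,j})}=\bA_{\textbf{1},\textbf{1}}^{(\mathcal{I}_j)}+\bR_{\textbf{2},\textbf{1}}^{(\mathcal{I}_j)}$. But from there you steer toward a \emph{finite}-$N_i$ entrywise comparison $\bM\tilde{\bc}\ge\bM\bc$ through the nonnegative inverse of the sub-generator, and you yourself flag that you cannot upgrade the level-aggregate domination of Theorem~\ref{thm:ipt} to entrywise domination; the matrix-geometric and monotone-coupling routes are left as speculation, not executed. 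That is a genuine gap, and it infects both halves of your argument, because your convergence step also leans on the unproven entrywise monotonicity in $N_i$. Note also that the within-level distribution shift you worry about is a real phenomenon: the theorem's inequality is an asymptotic claim (it carries the qualifier ``when $N_i$ goes to $\infty$''), and the paper never proves --- and does not need --- the finite-$N_i$ entrywise bound you are attempting, so you are stuck trying to establish something stronger than what the statement requires.

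The idea you are missing is that no direct comparison of the two vectors is needed at all. The paper instead shows that the restriction $\bpi^{(\mathcal{I}_j)}_{\textbf{1}}$ is an \emph{asymptotic null vector} of the truncated generator: the stationary equation gives $\tilde{\bA}^{(\mathcal{I}_{i,j})}\bpi^{(\mathcal{I}_j)}_{\textbf{1}} = \bR^{(\mathcal{I}_j)}_{\textbf{2},\textbf{1}}\bpi^{(\mathcal{I}_j)}_{\textbf{1}} - \bA^{(\mathcal{I}_j)}_{\textbf{1},\textbf{2}}\bpi^{(\mathcal{I}_j)}_{\textbf{2}}$ (your two source terms), the steady-state flux balance across the boundary gives $\mathbbm{1}^T\bR^{(\mathcal{I}_j)}_{\textbf{2},\textbf{1}}\bpi^{(\mathcal{I}_j)}_{\textbf{1}}=\mathbbm{1}^T\bA^{(\mathcal{I}_j)}_{\textbf{1},\textbf{2}}\bpi^{(\mathcal{I}_j)}_{\textbf{2}}$, and hence Minkowski's inequality yields $\bigl\|\tilde{\bA}^{(\mathcal{I}_{i,j})}\bpi^{(\mathcal{I}_j)}_{\textbf{1}}\bigr\|_1\le 2\bigl\|\bR^{(\mathcal{I}_j)}_{\textbf{2},\textbf{1}}\bpi^{(\mathcal{I}_j)}_{\textbf{1}}\bigr\|_1$, which tends to $0$ as $N_i\to\infty$ because the boundary-level probability vanishes (Lemma~\ref{lm:fbs} with Eqn.~(\ref{eqn:pininf})) while zeroth-order synthesis keeps the reflected rates bounded. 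Irreducibility and Perron--Frobenius then give a one-dimensional null space for $\tilde{\bA}^{(\mathcal{I}_{i,j})}$, so asymptotically $\bpi^{(\mathcal{I}_{i,j})}=c\,\bpi^{(\mathcal{I}_j)}_{\textbf{1}}$, and normalization forces $c=1+\epsilon$ with $\epsilon=1-\mathbbm{1}^T\bpi^{(\mathcal{I}_j)}_{\textbf{1}}\ge 0$ and $\epsilon\to 0$. Proportionality with a factor at least $1$ delivers the entrywise inequality and the component-wise convergence in a single stroke, sidestepping precisely the obstacle at which your proposal stalls; if you want to salvage your Green's-function framing, the repair is to use it only to bound the residual $\tilde{\bA}^{(\mathcal{I}_{i,j})}\bpi^{(\mathcal{I}_j)}_{\textbf{1}}$ in $1$-norm and then invoke uniqueness of the stationary distribution, rather than to compare $\bM\tilde{\bc}$ and $\bM\bc$ entry by entry.
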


Proof can be found in the Appendix.

Theorem~\ref{thm:Iij} shows that introducing an additional truncation 
at the $i$-th MEG does not decrease the boundary probability of the $j$-th MEG. 
Therefore, the boundary probability from doubly truncated state space 
$\Omega^{(\mathcal{I}_{i,j})}$ can also be used to bound the 
true error after state truncations at both $i$-th and $j$-th MEG. 
Furthermore, we can show by induction that boundary probabilities computed 
from the fully truncated state space $\Omega^{(\mathcal{B})}$ can also be 
used to bound the truncation errors of each MEG, respectively.

\subsubsection*{Upper and Lower Bounds for Steady State Boundary Probability}

In this section, we introduce an efficient and easy-to-compute method
to obtain an upper- and lower-bound of the boundary probabilities
$\tilde{\pi}^{(N)}_N$ {\it a priori}\/ without the need to solving the
dCME.  The method can be used to rapidly determine if the maximum 
copy number limits to MEGs are adequate to obtain the direct solution
to dCME with a truncation error smaller than the predefined tolerance.
The optimal maximum copy number for each MEG can therefore be
estimated \textit{a priori}.

As a consequence of Theorem~(\ref{thm:Iij}) discussed above, the
boundary probability computed on the truncated state space
$\Omega^{(B)}$ can be used as an error bound. We now use the truncated
rate matrix to derive the upper- and lower-bounds.

Denote the maximum and minimum aggregated synthesis rates from the block 
sub-matrix $\bA_{i+1,\,i}$  as 
\begin{equation}
\overline{\alpha}^{(N)}_i = \max\{\mathbbm{1}^T \bA_{i+1,i}\} \quad \mbox{ and } \quad
\underline{\alpha}^{(N)}_i = \min\{\mathbbm{1}^T \bA_{i+1,i}\},
\label{eqn:alu}
\end{equation}
respectively, and the maximum and minimum aggregated degradation rates
from the block sub-matrix $\bA_{i,\,i+1}$ as
\begin{equation}
\overline{\beta}^{(N)}_{i+1} = \max\{\mathbbm{1}^T \bA_{i,i+1}\} \quad \mbox{ and } \quad
\underline{\beta}^{(N)}_{i+1} = \min\{\mathbbm{1}^T \bA_{i,i+1}\},
\label{eqn:blu}
\end{equation}
respectively.  Note that $\overline{\alpha}^{(N)}_i$,
$\underline{\alpha}^{(N)}_i$, $\overline{\beta}^{(N)}_{i+1}$, and
$\underline{\beta}^{(N)}_{i+1}$ can be easily calculated from the 
reaction rates in the network without need for generating and 
partitioning the dCME transition rate matrix $\tilde{\bA}$. 
As $\alpha^{(N)}_i$ and $\beta^{(N)}_{i+1}$ given in Eqn.~(\ref{eqn:abdef})
are weighted sums of vector $\mathbbm{1}^T \bA_{i+1,i}$ and
$\mathbbm{1}^T \bA_{i,i+1}$ with regard to the steady state
probability distribution $\tilde{\bpi}^{(N)}(\mathcal{G}_i)$, 
respectively, we have 
$$
\underline{\alpha}^{(N)}_i \leq \alpha^{(N)}_i \leq \overline{\alpha}^{(N)}_i \quad \mbox{ and } \quad
\underline{\beta}^{(N)}_{i+1} \leq \beta^{(N)}_{i+1} \leq \overline{\beta}^{(N)}_{i+1}. 
$$

We use results from the theory of stochastic ordering 
for comparing Markov processes to bound  $\pi^{(N)}_N$. Stochastic ordering 
``$\leq_{st} $''
between two infinitesimal generator 
matrices $\bP_{n \times n}$ and $\bQ_{n \times n}$ of 
Markov processes is defined as~\cite{Truffet1997,Irle2003} 
$$
\bP \leq_{st} \bQ \quad \text{if and only if } \sum_{k=j}^n P_{i,k} \leq \sum_{k=j}^n Q_{i,k} \text{ for all } i, j. 
$$ To derive an upper bound for $\tilde{\pi}^{(N)}_N$ in
Eqn.~(\ref{eqn:pinn}), we construct a new matrix $\overline{\bB}$ by
replacing $\alpha^{(N)}_{i}$ with the corresponding
$\overline{\alpha}^{(N)}_i$ and $\beta^{(N)}_{i+1}$ with the
corresponding $\underline{\beta}^{(N)}_{i+1}$ in the matrix $\bB$.
Similarly, to derive an lower bound for $\tilde{\pi}^{(N)}_N$, we construct the
matrix $\underline{\bB}$ by replacing $\alpha^{(N)}_{i}$ with the
corresponding $\underline{\alpha}^{(N)}_i$ and replace
$\beta^{(N)}_{i+1}$ with $\overline{\beta}^{(N)}_{i+1}$ in $\bB$.  
We then have the following stochastic ordering:
$$
\underline{\bB} \leq_{st} \bB \leq_{st} \overline{\bB}. 
$$ 
All three matrices $\underline{\bB}$, $\bB$, and $\overline{\bB}$
are ``$\leq_{st}-\rm{monotone}$'' according to the definitions in
Truffet~\cite{Truffet1997}.  The steady state probability
distributions of matrices $\underline{\bB}$, $\bB$, and
$\overline{\bB}$ maintain the same stochastic ordering (Theorem 4.1 of
Truffet~\cite{Truffet1997}):
$$
\bpi_{\underline{\bB}} \leq_{st} \bpi_{\bB} \leq_{st} \bpi_{\overline{\bB}}. 
$$ 
Therefore, we have the inequality:
$$
\underline{\tilde{\pi}}^{(N)}_N
\leq 
\tilde{\pi}^{(N)}_N
\leq 
\overline{\tilde{\pi}}^{(N)}_N. 
$$ 
Here 
the lower bound
$\underline{\tilde{\pi}}^{(N)}_N$ is the boundary probability from
$\bpi_{\underline{\bB}}$, 
$\tilde{\pi}^{(N)}_N$ is the boundary
probability from $\bpi_{\bB}$,
and 
the upper bound $\overline{\tilde{\pi}}^{(N)}_N$ is the boundary
probability computed from  $\bpi_{\overline{\bB}}$. 
From Eqn.~(\ref{eqn:pinn}), the upper
bound $\overline{\tilde{\pi}}^{(N)}_N$ can be calculated \textit{a priori}\/ 
from reaction rates:
\begin{equation}
\overline{\tilde{\pi}}^{(N)}_N = 
\frac{\prod\limits_{k = 0}^{N-1}
    \frac{\overline{\alpha}^{(N)}_{k}}{\underline{\beta}^{(N)}_{k+1}}} 
{1 + {\sum\limits_{j = 1}^{N} {\prod\limits_{k = 0}^{j-1}
     {\frac{\overline{\alpha}^{(N)}_{k}}{\underline{\beta}^{(N)}_{k+1}}}} 
    }
}, 
\label{eqn:upb}
\end{equation}
and the lower bound $\underline{\tilde{\pi}}^{(N)}_N$ can also be calculated as:
\begin{equation}
\underline{\tilde{\pi}}^{(N)}_N = 
\frac{\prod\limits_{k = 0}^{N-1}
    \frac{\underline{\alpha}^{(N)}_{k}}{\overline{\beta}^{(N)}_{k+1}}}
{1 + {\sum\limits_{j = 1}^{N} {\prod\limits_{k = 0}^{j-1}
     {\frac{\underline{\alpha}^{(N)}_{k}}{\overline{\beta}^{(N)}_{k+1}}}} 
    }
}. 
\label{eqn:lowerb}
\end{equation} 
These are general formula for upper and lower bounds of the boundary
probabilities of any MEG in a reaction network.
Note that while $\overline{\tilde{\pi}}^{(N)}_N$ is easy to compute,
it may not be a tight error bound when the MEG involves many molecular
species with overall complex interactions.  This will be shown in the
example of the phage lambda epigenetic switch model
(Fig.~\ref{fig:ph1}A and B).

For a reaction network
with multiple MEGs, we have 
$$
\sum_{i=1}^w \underline{\tilde{\pi}}^{(N_i)}_{N_i}
\leq 
\sum_{i=1}^w \tilde{\pi}^{(N_i)}_{N_i}
\leq 
\sum_{i=1}^w \overline{\tilde{\pi}}^{(N_i)}_{N_i},
$$ where $N_i$ is the maximum copy number for the $i$-th MEG.
The upper bounds for the total error 
$\Err^{(\Omega^{(\mathcal{B})})}$ 
can therefore 
 be obtained straightforwardly by taking
summation of upper bounds for each individual MEG:
\begin{equation}
\Err^{(\Omega^{(\mathcal{B})})} 
\leq
\sum_{i=1}^w \overline{\tilde{\pi}}^{(N_i)}_{N_i},
\end{equation}
This upper bound of
$\sum_{i=1}^w \overline{\tilde{\pi}}^{(N_i)}_{N_i}$ can therefore be used as an
\textit{a priori} estimated bound for the total truncation error 
$\Err^{(\Omega^{(\mathcal{B})})}$
for the state
space $
\Omega^{(\mathcal{B})} 
$
 using truncation of $\mathcal{B} = (N_1, \cdots, N_i, \cdots, N_j,
\cdots, N_w)$.

\newpage

\section*{Biological Examples}
\label{sec:ex}
Below we give examples on characterizing the truncation errors in the
steady state probability landscapes for four biological reaction
networks. We study the models of the birth and death process, the 
single gene expression, the model of genetic toggle switch, and the 
phage lambda epigenetic switch model.  We first show how each network
can be partitioned into MEGs, and how truncation errors for each MEG
can be estimated \textit{a priori}. By enumerating the state space and
directly computing the steady state probability landscapes of the
dCMEs using the fb-dCME method, we examine the true truncation errors,
the computed boundary probabilities, and the {\it a priori}\/
estimated truncation error. We demonstrate that indeed the truncation
error is bounded from above by the computed boundary probability, and 
by the \textit{a priori} error estimate according to theoretical
analyses described earlier, once the copy number limit is sufficiently
large for the MEG(s).

\subsection*{Birth-Death Process}
The birth-death process is a ubiquitous biochemical phenomenon.  In
its simplest form, it involves synthesis and degradation of only one
molecular species.  We study this simple birth-death process, whose
reaction scheme and rate constants are specified as follows:
\begin{equation}
\label{eqn:bdrxns}
\begin{split}
&R_1: \quad \emptyset \stackrel{k_s}{\rightarrow} X, \quad k_s = 1 /s, \\
&R_2: \quad X \stackrel{k_d}{\rightarrow} \emptyset, \quad k_d = 0.025 /s. \\
\end{split}
\end{equation}
The steady state probability landscape of the birth-death process is
well known~\cite{Taylor1998,Vellela2007}. This process has also been 
studied extensively as a problem of estimating rare event
probability~\cite{Daigle2011,Roh2011,Cao2013JCP}.

\paragraph{Molecular equivalent group (MEG). }
This single birth and death process is an open network because of the
presence of the synthesis reaction. There is only one molecular
equivalent group (MEG).  We truncate the state space at different
values of the maximum copy number of the MEG, ranging from $0$ to
$200$, and compute the boundary probabilities at each different
truncation.

\paragraph{Asymptotic convergence of errors (Theorem~\ref{thm:ace}). }
To numerically demonstrate Theorem~\ref{thm:ace}, we 
compute the true truncation error of the steady state solution to 
the dCME.  We use a large copy number of $MEG=200$, which 
gives an infinitesimally small boundary probability of $1.391 \times 10^{-72}$. 
Steady state solution obtained using this MEG number 
coincides with analytical solution, and is
therefore considered to be exact.  
With this exact steady state probability landscape, the true truncation error
$\Err^{(N)}$ at smaller MEG sizes can be computed using
Eqn.~(\ref{eqn:trueerr}) (Fig.~\ref{fig:bd1}A, blue dashed line and crosses).  
The corresponding boundary probabilities $\pi^{(\infty)}_N$ 
are computed from this exact steady state probability landscape 
(Fig.~\ref{fig:bd1}A, green dashed line and circles). 

Consistent with the statement in the Theorem~\ref{thm:ace}, 
we find here that the true error $\Err^{(N)}$ 
(Fig.~\ref{fig:bd1}A, blue dashed line and crosses) is 
bounded by the computed boundary probability $\pi^{(\infty)}_N$ 
(Fig.~\ref{fig:bd1}A, green dashed line and circles) 
when the size of the MEG is sufficiently large. 
The inset of Fig.~\ref{fig:bd1}A 
shows the ratio of the true errors to the computed errors at different 
sizes of the MEG, and the grey straight line marks the ratio one. 
The computed errors are larger than the true errors when the black line 
is below the grey straight line (Fig.~\ref{fig:bd1}A inset).
In this example, 
the computed boundary probability is greater than the 
true error when $N > 79$, as would
be expected from Theorem~\ref{thm:ace}.

\paragraph{\textit{A priori} estimated error bound. }
To examine the \textit{a priori} estimated upper bound for truncation
error, we follow Eqn.~(\ref{eqn:alu}) and (\ref{eqn:blu}) to assign
values of $\overline{\alpha}_{i} = k_s$ and $\underline{\beta}_{i+1} =
k_d (i+1)$ for this network.  We compute the \textit{a priori} upper error
bound for different truncations using Eqn.~(\ref{eqn:upb})
(Fig.~\ref{fig:bd1}A, red solid line).  For this simple network,
$\overline{\alpha}_i = \alpha_i = \underline{\alpha}_i$ and
$\overline{\beta}_i = \beta_i = \underline{\beta}_i$, therefore the
\textit{a priori} estimated error is exactly the same as the analytic
solution for the steady state distribution for this simple birth-death
network, and it coincides with the computed error (Fig.~\ref{fig:bd1}A
red and green lines).  The true error, computed error, and the
\textit{a priori} error bound all decrease monotonically with
increasing MEG size $N$ (Fig.~\ref{fig:bd1}A).

\paragraph{Increased probability after state space truncation (Theorem~\ref{thm:ipt}). }
According to Theorem~\ref{thm:ipt}, the probability of a state
increases upon state space truncation.  We compare the steady state
probability landscapes of $X$ computed using truncations at different
sizes ranging from 40 to 50 with the exact steady state landscape
(Fig.~\ref{fig:bd1}B, red line). Our results indeed show clearly that
all probabilities increase as more states are truncated
(Fig.~\ref{fig:bd1}B).  The probability landscape computed using $N =
50$ (Fig.~\ref{fig:bd1}B, yellow line) or larger is very close to the
exact landscape using $N =200$ (Fig.~\ref{fig:bd1}B, red line).
However, the probability landscapes computed using smaller $N$ deviate
significantly from the exact probability landscape. The smaller the
MEG size, the more significant the deviation is.  These results are
fully consistent with the statements of Theorem~\ref{thm:ipt}.

\subsection*{Single Gene Expression Model}
Transcription and translation are fundamental processes in gene
regulatory networks that often involve significant stochasticity.  The
abundance of mRNA and expressed proteins of a gene is usually 2--4
orders of magnitude apart in a cell. There are only a few or dozens of copies of mRNA molecules in
each cell for one gene, but the copy number of proteins can range from
hundreds to ten thousands~\cite{Taniguchi2010}. Here we study a model
of the fundamental process of single gene transcription and
translation using the following reaction scheme and rate constants:
\begin{equation}
\label{eqn:bdrxns2}
\begin{split}
&R_1: \quad Gene + \emptyset \stackrel{k_e}{\rightarrow} Gene + mRNA, \quad k_e = 1.0 /s, \\
&R_2: \quad mRNA + \emptyset \stackrel{k_t}{\rightarrow} mRNA + Protein, \quad k_t = 1.0 /s. \\
&R_3: \quad mRNA \stackrel{k_m}{\rightarrow} \emptyset, \quad k_m = 0.1 /s. \\
&R_4: \quad Protein \stackrel{k_d}{\rightarrow} \emptyset, \quad k_d = 0.01 /s. \\
\end{split}
\end{equation}

\paragraph{Molecular equivalent group (MEG). }
This single gene expression model is an open network. We can
participate this model into two molecular equivalent groups (MEG),
with MEG$_1$ consists of species $mRNA$, MEG$_2$ consists of
$Protein$.  Note that protein synthesis depends on the copy number
$mRNA$, despite the fact that $mRNA$ and $Protein$ are two independent
molecular species that cannot be transformed into each other.

\paragraph{Asymptotic convergence of errors (Theorem~\ref{thm:ace}). }
To numerically demonstrate Theorem~\ref{thm:ace}, we compute the true
error of the steady state solution to the dCME using sufficiently
large sizes of $MEG_1 = 64$ and $MEG_2 = 2,580$, which gives negligible
truncation error, with infinitesimally small boundary probabilities
$3.58 \times 10^{-30}$ for $MEG_1$ and $1.15 \times 10^{-32}$ for
$MEG_2$.  Solution obtained using these MEGs is therefore considered
to be exact.  With this exact steady state probability landscape, the
true truncation error $\Err^{(N)}$ at smaller sizes of $MEG_1$ and
$MEG_2$ can be computed using Eqn.~(\ref{eqn:trueerr})
(Fig.~\ref{fig:sge1}A and B, blue dashed lines and crosses).  The
corresponding boundary probabilities $\pi^{(\infty)}_N$ or computed
error are obtained from the exact steady state probability landscape
for both $MEG_1$ (Fig.~\ref{fig:sge1}A, green dashed line and circles)
and $MEG_2$ (Fig.~\ref{fig:sge1}B, green dashed line and circles).

Consistent with the statement in Theorem~\ref{thm:ace}, 
our results show that the true error $\Err^{(N)}$ 
is bounded by the computed boundary probability $\pi^{(\infty)}_N$ 
in the $MEG_1$ when $N_1 \geq 20$ (Fig.~\ref{fig:sge1}A, blue dashed lines and
crosses, green dashed lines and circles, and the inset). 
In the $MEG_2$, although the true errors are larger than computed errors 
even when the MEG size is large (Fig.~\ref{fig:sge1}A inset), 
the true error can be bounded by the computed error
when $N_2 \geq 5000$ by a multiplication factor of 6
(Fig.~\ref{fig:sge1}B, blue dashed lines and crosses, green dashed
lines and circles, and the inset).  This is expected from
Theorem~\ref{thm:ace}.

\paragraph{\textit{A priori} estimated error bound. }
To examine \textit{a priori} estimated upper bounds for the truncation
errors in MEG$_1$ and MEG$_2$, we follow Eqn.~(\ref{eqn:alu}) and
(\ref{eqn:blu}) to assign values of $\overline{\alpha}_{i}=k_{e}$ and
$\underline{\beta}_{(i+1)}=k_{m} (i+1)$ for the $MEG_1$.  Because of
the dependency of protein synthesis on the mRNA copy numbers, we set
$\overline{\alpha}_{i}=64 \cdot k_{t}$ and $\underline{\beta}_{i+1}=
k_{d} (i+1)$ following Eqn.~(\ref{eqn:alu}) and (\ref{eqn:blu}) for
the $MEG_2$, where the factor $64$ is the maximum copy number of mRNA
in the $MEG_1$.  We compute the \textit{a priori} estimated upper
bounds of errors for different truncations of MEG$_1$ and MEG$_2$
using Eqn.~(\ref{eqn:upb}) (Fig.~\ref{fig:sge1}A and B, red solid
lines).  The true truncation errors and the \textit{a priori}
estimated error bounds of MEG$_1$ and MEG$_2$ all decrease
monotonically with increasing MEG sizes (Fig.~\ref{fig:sge1}A and B).
The computed errors also monotonically decrease in both MEGs.  For
$MEG_1$, the {\it a priori}\/ estimated error bounds coincide with the
computed errors (Fig.~\ref{fig:sge1}A red and green lines).  For the
$MEG_2$, the {\it a priori}\/ estimated error bounds are larger than
computed errors at all MEG sizes.

\paragraph{Increased probability after state space truncation (Theorem~\ref{thm:ipt}). }
According to Theorem~\ref{thm:ipt}, the probability landscape
projected on the MEGs increase after state space truncation. We 
compute the steady state probability landscapes of $Protein$ obtained using
truncations at different sizes of the MEG, ranging from $0$ to $2,580$
for $MEG_2$ while $MEG_1$ is fixed at 64 (Fig.~\ref{fig:sge1}C).  The
results are compared with the exact steady state landscape computed
using $MEG_2 = 2,600$ (Fig.~\ref{fig:sge1}C, red line).

Our results show clearly that all probabilities in the landscapes
increase when more states are truncated at smaller MEG size
(Fig.~\ref{fig:sge1}C).  The probability landscapes computed using
larger size of the MEG (\textit{e.g.}, $MEG_2 = 1400$,
Fig.~\ref{fig:sge1}C, yellow line) are approaching the exact landscape
(Fig.~\ref{fig:sge1}C, red line).  The probability landscapes obtained
using smaller MEG sizes deviate significantly from the exact
probability landscape. The smaller the MEG size, the more pronounced
the deviation is.  These numerical results are fully consistent with
Theorem~\ref{thm:ipt}.

\paragraph{Truncating additional MEGs does not decrease probabilities (Theorem~\ref{thm:Iij}). }
We further examine Theorem~\ref{thm:Iij}, \textit{i.e.}, the
probability landscape projected on one MEG increase with state space
truncation at another MEG.  We compare the projected steady
state probability landscapes on $mRNA$ obtained using truncations of
different sizes of $MEG_2$ ranging from $0$ to $2580$ while
the $MEG_1$ is fixed at 64 (Fig.~\ref{fig:sge1}D). We compare the results
with the exact steady state landscape (Fig.~\ref{fig:sge1}D, red line).

Our results show that all probabilities on the landscapes 
of $mRNA$ are not affected by the truncations at the 
$MEG_2$ (Fig.~\ref{fig:sge1}D).  
The probability landscapes computed using different sizes of $MEG_2$ 
are the same (Fig.~\ref{fig:sge1}D).  
These numerical results are completely 
consistent with Theorem~\ref{thm:Iij}, because the probabilities of $mRNA$ 
are not decreased by the truncation at the MEG of $Protein$.

\subsection*{Genetic Toggle Switch}
The bistable genetic toggle switch consists of two genes repressing
each other through binding of their protein dimeric products on the
promoter sites of the other genes.  This genetic network has been
studied
extensively~\cite{Gardner2000,Kepler2001,Kim2007,Schultz_JCP07}.  We
follow references~\cite{Schultz_JCP07,CaoBMCSB08} and study a detailed
model of the genetic toggle switch with a more realistic
control mechanism of gene regulations.  Different from simpler toggle
switch models~\cite{Munsky2008,Deuflhard2008,Sjoberg2009,Kazeev2014},
in which gene binding and unbinding reactions are approximated by Hill
functions, here details of the gene binding and unbinding reactions
are modeled explicitly.  The molecular species, reactions, and their
rate constants are listed below:
\begin{equation}
\label{eqn:tgrxns}
\begin{split}
&R_1: GeneA \stackrel{k_1}{\rightarrow} GeneA + A, \quad k_{sA} = 40 \, s^{-1} \\
&R_2: GeneB \stackrel{k_2}{\rightarrow} GeneB + B, \quad k_{sB} = 20 \, s^{-1} \\
&R_3: A \stackrel{k_3}{\rightarrow} \emptyset, \quad k_{dA} = 1 \, s^{-1} \\
&R_4: B \stackrel{k_4}{\rightarrow} \emptyset, \quad k_{dB} = 1 \, s^{-1} \\
&R_5: 2A + GeneB \stackrel{k_5}{\rightarrow} bGeneB, \quad k_{bA} = 1 \times 10^{-5} \, nM^{-2} \cdot s^{-1} \\
&R_6: 2B + GeneA \stackrel{k_6}{\rightarrow} bGeneA, \quad k_{bB} = 3.5 \times 10^{-5} \, nM^{-2} \cdot s^{-1} \\
&R_7: bGeneB \stackrel{k_7}{\rightarrow} 2A + GeneB, \quad k_{uA} = 1 \, s^{-1} \\
&R_8: bGeneA \stackrel{k_8}{\rightarrow} 2B + GeneA, \quad k_{uB} = 1 \, s^{-1} \\
\end{split}
\end{equation}
Specifically, two genes $GeneA$ and $GeneB$ express protein products
$A$ and $B$, respectively.  Two protein monomers $A$ or $B$ can bind on
the promoter site of $GeneB$ or $GeneA$ to form protein-DNA complexes
$bGeneB$ or $bGeneA$, and turn off the expression of $GeneB$ or $GeneA$, 
respectively.

\paragraph{Molecular equivalent group (MEG). }
There are two MEGs in this network, MEG$_1$ consists of 
species $A$ and $bGeneB$, MEG$_2$ consists of $B$ and $bGeneA$. 

\paragraph{Asymptotic convergence of errors (Theorem~\ref{thm:ace}). }
To numerically demonstrate Theorem~\ref{thm:ace}, we compute the true
error of the steady state solution to the dCME using sufficiently
large sizes of $MEG_1 = 120$ and $MEG_2 = 80$, which gives negligible
truncation error, with infinitesimally small boundary probabilities
$5.275 \times 10^{-24}$ for $MEG_1$ and $2.561 \times 10^{-23}$ for
$MEG_2$.  Solution obtained using these MEGs is therefore considered
to be exact.  With this exact steady state probability landscape, the
true truncation error $\Err^{(N)}$ at smaller sizes of $MEG_1$ and
$MEG_2$ can both be computed using Eqn.~(\ref{eqn:trueerr})
(Fig.~\ref{fig:tg1}A and B, blue dashed lines and crosses).  The
corresponding boundary probabilities $\pi^{(\infty)}_N$ or computed
error are computed from the exact steady state probability landscape
for both $MEG_1$ (Fig.~\ref{fig:tg1}A, green dashed line and circles)
and $MEG_2$ (Fig.~\ref{fig:tg1}B, green dashed line and circles).

Consistent with the statement in Theorem~\ref{thm:ace}, 
our results show that the true error $\Err^{(N)}$ 
(Fig.~\ref{fig:tg1}A and B, blue dashed lines and crosses) is 
bounded by the computed boundary probability $\pi^{(\infty)}_N$ 
(Fig.~\ref{fig:tg1}A and B, green dashed lines and circles) 
when the size of the MEG is sufficiently large. 
The insets in Fig.~\ref{fig:tg1}A  and B
show the ratios of the true errors to the computed errors at different 
sizes of the MEG, and the grey straight line marks the ratio one. 
The computed errors are larger than the true errors when the black line 
is below the grey straight line (Fig.~\ref{fig:tg1}A and B, insets).
In this example, 
the computed boundary probability is greater than the 
true error when $MEG_1 > 82$ and $MEG_2 > 42$, 
as would be expected from Theorem~\ref{thm:ace}.

\paragraph{\textit{A priori} estimated error bound. }
To examine \textit{a priori} estimated upper bounds for 
the truncation errors in MEG$_1$ and MEG$_2$, 
we follow Eqn.~(\ref{eqn:alu}) and (\ref{eqn:blu}) to assign values of 
$\overline{\alpha}_{i}=k_{sA}$ and
$\underline{\beta}_{(i+1)}=[(i+1)-2]\cdot k_{dA}$ for the $MEG_1$,
where the subscript $(i+1)$ is the total copy number of species $A$ in
the system. The subtraction of $2$ is necessary because up to $2$
copies of $A$ can be protected from degradation by binding to $GeneB$.
This corresponds to the extreme case when $GeneA$ is constantly turned
on and $GeneB$ is constantly turned off. 
Similarly, we have
$\overline{\alpha}_{i}=k_{sB}$ and $\underline{\beta}_{i+1}=[(i+1)-2]
\cdot k_{dB}$ following Eqn.~(\ref{eqn:alu}) and (\ref{eqn:blu}) for the
$MEG_2$. This corresponds to the other extreme case when the
$GeneB$ is constantly turned on, and $GeneA$ is constantly turned off. 
We compute the \textit{a priori} estimated upper bounds of errors 
for different truncations of MEG$_1$ and MEG$_2$ using Eqn.~(\ref{eqn:upb})
(Fig.~\ref{fig:tg1}A and B, red solid lines).  
The true truncation errors and the \textit{a priori} estimated error bounds of MEG$_1$ and MEG$_2$ all 
decrease monotonically with increasing MEG sizes (Fig.~\ref{fig:tg1}A and B).  
The computed errors also monotonically decrease when the MEG sizes are larger 
than $40$ for MEG$_1$ and $20$ for MEG$_2$.  For both MEGs,
the {\it a priori}\/ estimated error bounds are larger
than computed errors at all MEG sizes.  They are also larger than the true
errors when the MEG sizes are sufficiently large.

\paragraph{Increased probability after state space truncation (Theorem~\ref{thm:ipt}). }
According to Theorem~\ref{thm:ipt}, the probability landscape
projected on the MEGs increase after state space truncation. We first
compute the steady state probability landscapes of $A$ obtained using
truncations at different sizes of the MEG ranging from $0$ to $119$
for $MEG_1$ while $MEG_2$ is fixed at 80 (Fig.~\ref{fig:tg1}C).  The
results are compared with the exact steady state landscape computed
using $MEG_1 = 120$ and $MEG_2 = 80$ (Fig.~\ref{fig:tg1}C, red line).
We then also similarly examine the steady state probability landscapes
of $B$ obtained using truncations at different sizes of $MEG_2$ from
$0$ to $79$ while $MEG_1$ is fixed at 120 (Fig.~\ref{fig:tg1}D).

Our results show clearly that all probabilities in the landscapes increase when more 
states are truncated at smaller MEG sizes (Fig.~\ref{fig:tg1}C and D).  The probability 
landscapes computed using larger sizes of MEGs (\textit{e.g.}, $MEG_1 = 50$, Fig.~\ref{fig:tg1}C, yellow line and 
$MEG_2 = 32$, Fig.~\ref{fig:tg1}D, yellow line)
are approaching the exact landscape (Fig.~\ref{fig:tg1}C and D, red line). 
The probability landscapes obtained using smaller MEGs deviate 
significantly from the exact probability landscape. The smaller the MEG size, the more 
significant the deviation is.  These numerical results are  completely 
consistent with Theorem~\ref{thm:ipt}.

\paragraph{Truncating additional MEGs does not decrease probabilities (Theorem~\ref{thm:Iij}). }
We further examine Theorem~\ref{thm:Iij}, \textit{i.e.}, the
probability landscape projected on one MEG increase with state space
truncation at another MEG.  We first compare the projected steady
state probability landscapes on $A$ obtained using truncations of
different sizes of MEGs ranging from $0$ to $80$ for $MEG_2$ while
$MEG_1$ is fixed at 120 (Fig.~\ref{fig:tg1}E) We compare the results
with the exact steady state landscape (Fig.~\ref{fig:tg1}E, red line).
We also similarly examine the projected steady state probability
landscapes of $B$ obtained using truncations at different sizes of
$MEG_1$ ranging from $0$ to $120$ while $MEG_2$ is fixed at 80
(Fig.~\ref{fig:tg1}F).

Our results clearly show that all probabilities on the landscapes 
of $MEG_1$ ($MEG_2$) increase when the state space is 
truncated at $MEG_2$ ($MEG_1$) (Fig.~\ref{fig:tg1}E and F).  
The probability landscapes computed using larger sizes of MEGs 
(\textit{e.g.}, $MEG_2 = 32$ in Fig.~\ref{fig:tg1}E, yellow line and 
$MEG_1 = 50$ in Fig.~\ref{fig:tg1}F, yellow line) are approaching 
the exact landscape using $MEG_1 = 120$ and $MEG_2 = 80$ 
(Fig.~\ref{fig:tg1}E and F, red line). 
However, the probability landscapes using smaller MEGs significantly 
deviate from the exact probability landscape. The smaller the MEG size, the more 
significant the deviation is.  These numerical results are completely 
consistent with Theorem~\ref{thm:Iij}.

\subsection*{Phage Lambda Bistable Epigenetic Switch}
The bistable epigenetic switch for lysogenic maintenance and lytic induction in 
phage lambda is one of the well-parameterized realistic gene regulatory system. 
The efficiency and stability of the switch have been extensively 
studied~\cite{Arkin1998,Aurell2002PRE,Aurell2002PRL,ZhuJBCB2004,Zhu2004}. 
Here we characterize the truncation error to the dCME solutions of the 
reaction network adapted from Cao {\it et al.}~\cite{Cao2010}. 
The network consists of $11$ different species and $50$ different reactions. The detailed reaction 
schemes and rate constants are shown in Table~\ref{tab:phage1}.

\paragraph{Molecular equivalent group (MEG). }
The network can be partitioned into two MEGs. The MEG$_1$ consists of the dimer
of CI protein $CI2$ and all complexes of operator sites bounded with $CI2$.
The MEG$_2$ consists of the dimer of Cro protein $Cro2$ and all complexes of
operator sites bounded with $Cro2$. 

\paragraph{Asymptotic convergence of errors (Theorem~\ref{thm:ace}). }
To numerically demonstrate Theorem~\ref{thm:ace}, we compute the true
error of the steady state solution to the dCME using sufficiently
large sizes of $MEG_1 = 80$ and $MEG_2 = 38$, which gives negligible
truncation error, with infinitesimally small boundary probabilities
$6.96 \times 10^{-31}$ for $MEG_1$ and $3.95 \times 10^{-32}$ for
$MEG_2$.  Solution obtained using these MEGs is therefore considered
to be exact.  With this exact steady state probability landscape, the
true truncation error $\Err^{(N)}$ at smaller sizes of $MEG_1$ and
$MEG_2$ can both be computed using Eqn.~(\ref{eqn:trueerr})
(Fig.~\ref{fig:ph1}A and B, blue dashed lines and crosses).  The
corresponding boundary probabilities $\pi^{(\infty)}_N$ or computed
error are computed from the exact steady state probability landscape
for both $MEG_1$ (Fig.~\ref{fig:ph1}A, green dashed line and circles)
and $MEG_2$ (Fig.~\ref{fig:ph1}B, green dashed line and circles).

Consistent with the statement in Theorem~\ref{thm:ace}, 
our results show that the true error $\Err^{(N)}$ 
(Fig.~\ref{fig:ph1}A and B, blue dashed lines and crosses) is 
bounded by the computed boundary probability $\pi^{(\infty)}_N$ 
(Fig.~\ref{fig:ph1}A and B, green dashed lines and circles) 
when the size of the MEG is sufficiently large. 
The insets in Fig.~\ref{fig:ph1}A  and B
show the ratios of the true errors to the computed errors at different 
sizes of the MEG, and the grey straight lines mark the ratio one. 
The computed errors are larger than the true errors when the black line 
is below the grey straight line (Fig.~\ref{fig:ph1}A and B, insets).
In this example, 
the computed boundary probability is greater than the 
true error when $MEG_1 \geq 24$ and $MEG_2 \geq 3$, 
as would be expected from Theorem~\ref{thm:ace}.

\paragraph{\textit{A priori} estimated error bound. }
To examine \textit{a priori} estimated upper bounds for 
the truncation errors in MEG$_1$ and MEG$_2$, 
we follow Eqn.~(\ref{eqn:alu}) and (\ref{eqn:blu}) to assign values of 
$\overline{\alpha}_{i}=k_{s1CI_{2}}$ and
$\underline{\beta}_{(i+1)}=[(i+1)-3]\cdot k_{dCI_{2}}$ for the $MEG_1$,
where the subscript $(i+1)$ is the total copy number of species $CI2$ in
the system. The subtraction of $3$ is necessary because up to $3$
copies of $CI2$ can be protected from degradation by binding to operator sites $OR1$, $OR2$, and $OR3$.
Similarly, we have
$\overline{\alpha}_{i}=k_{sCro_2}$ and $\underline{\beta}_{i+1}=[(i+1)-3]
\cdot k_{dCro_2}$ following Eqn.~(\ref{eqn:alu}) and (\ref{eqn:blu}) for the
$MEG_2$. 
We compute the \textit{a priori} estimated upper bounds of errors 
for different truncations of MEG$_1$ and MEG$_2$ using Eqn.~(\ref{eqn:upb})
(Fig.~\ref{fig:ph1}A and B, red solid lines).  
The true truncation errors and the \textit{a priori} estimated error bounds of MEG$_1$ and MEG$_2$ all 
decrease monotonically with increasing MEG sizes (Fig.~\ref{fig:ph1}A and B).  
The computed errors also monotonically decrease when the MEG sizes are larger 
than $13$ for MEG$_1$ and $4$ for MEG$_2$.  For both MEGs,
the {\it a priori}\/ estimated error bounds are larger
than computed errors at all MEG sizes.  They are also larger than the true
errors when the MEG sizes are sufficiently large.

\paragraph{Increased probability after state space truncation (Theorem~\ref{thm:ipt}). }
According to Theorem~\ref{thm:ipt}, the probability landscape
projected on the MEGs increase after state space truncation. We first
compute the steady state probability landscapes of $CI2$ obtained by
truncating $MEG_1$ at different sizes ranging from $0$ to $80$
while $MEG_2$ is fixed at $38$ (Fig.~\ref{fig:ph1}C).  The
results are compared with the exact steady state landscape computed
using $MEG_1 = 80$ and $MEG_2 = 38$ (Fig.~\ref{fig:ph1}C, red line).
We then also similarly examine the steady state probability landscapes
of $Cro2$ obtained by truncating at different sizes of $MEG_2$ from
$0$ to $38$ while $MEG_1$ is fixed at $80$ (Fig.~\ref{fig:ph1}D).

Our results show clearly that all probabilities in the landscapes increase when more 
states are truncated at smaller MEG sizes (Fig.~\ref{fig:ph1}C and D).  The probability 
landscapes computed using larger sizes of MEGs (\textit{e.g.}, $MEG_1 = 30$, Fig.~\ref{fig:ph1}C, yellow line and 
$MEG_2 = 8$, Fig.~\ref{fig:ph1}D, yellow line)
are approaching the exact landscape (Fig.~\ref{fig:ph1}C and D, red line). 
The probability landscapes obtained using smaller MEGs deviate 
significantly from the exact probability landscape. The smaller the MEG size, the more 
significant the deviation is.  These numerical results are completely 
consistent with Theorem~\ref{thm:ipt}.

\paragraph{Truncating additional MEGs does not decrease probabilities (Theorem~\ref{thm:Iij}). }
We further examine Theorem~\ref{thm:Iij}, \textit{i.e.}, the
probability landscape projected on one MEG increase with state space
truncation at another MEG.  We first compare the projected steady
state probability landscapes on $CI2$ obtained by truncating $MEG_2$ at 
different sizes ranging from $0$ to $38$ while
$MEG_1$ is fixed at $80$ (Fig.~\ref{fig:ph1}E).  We compare the results
with the exact steady state landscape (Fig.~\ref{fig:ph1}E, red line).
We also similarly examine the projected steady state probability
landscapes of $Cro2$ obtained by truncating at different sizes of
$MEG_1$ ranging from $0$ to $80$ while $MEG_2$ is fixed at $38$
(Fig.~\ref{fig:ph1}F).

Our results show that all probabilities on the landscapes 
of $MEG_1$ ($MEG_2$) increase when the state space is 
truncated at $MEG_2$ ($MEG_1$) (Fig.~\ref{fig:ph1}E and F).  
The probability landscapes computed using larger sizes of MEGs 
(\textit{e.g.}, $MEG_2 = 8$ in Fig.~\ref{fig:ph1}E, yellow line and 
$MEG_1 = 30$ in Fig.~\ref{fig:ph1}F, yellow line) are approaching 
the exact landscape using $MEG_1 = 80$ and $MEG_2 = 38$ 
(Fig.~\ref{fig:ph1}E and F, red line). 
However, the probability landscapes using smaller MEGs significantly 
deviate from the exact probability landscape. The smaller the MEG size, the more 
significant the deviation is.  These numerical results are completely 
consistent with Theorem~\ref{thm:Iij}.

\section*{Discussions and Conclusions}

Solving the discrete chemical master equation (dCME) is of fundamental
importance for studying stochasticity in reaction networks.  The main
challenges are the discrete nature of the states and the difficulty in
enumerating these states, as the size of the state space expands
rapidly when the network becomes more complex.  In this study, we
describe a novel approach for state space truncation.  Instead of
taking a high dimensional hypercube as the truncated state space, we
introduce the concept of molecular equivalence group (MEG), and
truncate the state space into the same or lower dimensional simplexes,
with the same effective copy number of molecules in each dimension by
taking advantage of the principle of mass conservation.  For complex
networks, the reduction of the size of the state space can be
dramatic.

Our study addresses a key issue in obtaining direct solution to
the dCME.  As state space truncation is inevitable, it is important to
quantify the errors of such truncations, so the accuracy of the dCME 
solutions can be assessed and
managed.  We have developed a general theoretical framework for
quantifying the errors of state space truncation on the steady state
probability landscape.  By decomposing the reaction network into MEGs,
the error contribution from each individual MEG is quantified.  This
critically important task is made possible through analyzing the
states on the reflecting boundary and their associated steady state
probabilities. The boundary probability analysis has been based on 
the construction of an aggregated continuous-time
Markov process by factoring the state space according to the total
numbers of molecules in each MEG.  With explicit formulas for
calculating conservative error bounds for the steady state, one can
easily calculate the {\it a priori}\/ error bounds for any given size 
of a MEG.  Furthermore, our theory allows the determination of
the minimally required sizes of MEGs if a predefined error
tolerance is to be satisfied.  
As shown in the examples, to determine the appropriate MEG sizes {\it
  a priori}, one can first calculate the estimated errors at different
sizes of each MEG, and  choose the minimal MEG sizes that
satisfies the overall error tolerance.
This eliminates the need of multiple
iterations of costly trial computations to solve the dCME for
determining the appropriate total copy numbers necessary to ensure
small truncation errors.  This is advantageous over
conventional numerical techniques, where errors are typically assessed
through post processing of trial solutions.

In complex networks, state truncation in one molecular group may
affect the errors of other molecular groups. By partitioning the
network into separate molecular equivalent groups (MEGs), the mutual
influence of the effects of state truncations in different groups can
be reduced.  In such cases, we have proved that the asymptotic errors in any
truncated MEG will not be under-estimated by the state truncations in
other MEGs.  Based on this conclusion, one can increase the size of
each particular MEG in order to achieve a small truncation error of
that MEG.  When the truncation error for every MEG is below the
prescribed threshold of error tolerance, the total truncation error of
the whole state space will be guaranteed to be bounded by the sum of
individual truncation errors in each MEG.

While our method ensures that there is no mass exchange between
different MEGs and often couplings between MEGs are weak, it does not
rule out the existence of possible strong couplings among MEGs. In the
example of the single gene expression model, there is a strong
coupling between mass-isolated mRNA MEG and the protein MEG. In this
case, protein synthesis strongly depends on the amount of available
mRNA. As a result, the protein probability distribution can be heavily
influenced by the choices of the mRNA MEG size, and its peak is
shifted when the size of mRNA MEG is near exhaustion (data not
shown). This issue rapidly disappears when MEG sizes become
sufficiently large to ensure that the truncation error to be smaller
than the specified error tolerance  (Fig.~\ref{fig:sge1}).

Our method differs from the finite state projection (FSP)
method~\cite{Munsky2006,Munsky2007}, which employs an absorbing
boundary state to calculate the truncation error.  Transitions from
any states in the available finite state space to any outside state
are send to the absorbing state, and the reactions are made
irreversible.  The truncation error in the FSP method is taken as the
probability mass on the absorbing boundary state.  It has two
components: one from the lost probability mass due to the state
truncation, the other from the trapped probability mass due to the
absorbing nature of the boundary state.  As time proceeds, the trapped
probability mass on the absorbing state will grow and dominate. At the
steady state, all probability mass will be trapped in the absorbing
state, which can no longer reflect the truncated probability mass.
Therefore, the FSP method cannot be used to study the long-term as well
as the steady state behavior of a stochastic network.

In contrast, our method employs a reflecting boundary and can
characterize the truncation errors in the steady state.  All
transitions between boundary and non-boundary states are retained
after state space truncation, and the reversible nature of 
transitions unaltered.  The reflecting boundaries allow analysis of
the steady state truncation error of each MEG.  Our method can be used
to study the steady state probability landscape.  Furthermore, our
method also allows direct computation of the distribution of first
passage time, an important problem in studying rare events in
biological networks currently relies heavily on sampling techniques.

We have also provided computational results of four stochastic networks,
namely, the birth-death process consisting of one MEG, the single gene
expression model, the genetic toggle switch model, and the phage lambda
epigenetic switch model, each consisting of two MEGs, respectively.  By
comparing true errors, computed errors, and \textit{a priori} estimated errors
at different truncation sizes, we have numerically verified the theorems
presented in this study: First, the true error for truncating a MEG is bounded
by the total probability mass on the reflecting boundary of the MEG
(Theorem~\ref{thm:ace}).  Second, the projected probability on one MEG
increases upon the state space truncation at this MEG (Theorem~\ref{thm:ipt}).
Third, the projected probability on one MEG  also increases when the state
space is truncated at another MEG (Theorem~\ref{thm:Iij}).  Furthermore, we
show that the \textit{a priori} estimated error bound are effective when the
network is truncated at a sufficiently large size of MEG.

Recent studies based on tensor representation of the transition rate matrices
show that the storage requirement of solving CME can be significantly reduced
and computational time improved~\cite{Kazeev2014,Liao2015}.  However, accurate
tensor representation and tensor-based approximation strongly depend on the
separability of system states, that is, whether the system can be decomposed
into a number of relatively independent smaller
sub-systems~\cite{Verstraete2006,Kazeev2014}.  While complete separability can
be achieved in some cases, {\it e.g.} the one-dimensional quantum spin
system~\cite{Verstraete2006}, errors are generally unknown for biological
networks that are not fully separable.

The tensor method of Liao {\it et al} can reduce the state space dramatically
for a number of networks~\cite{Liao2015}.  For example, the size of the state
space of the Fokker-Planck equation of the Schl\"ogl model is reduced from
$2.74 \times 10^{11}$ to $4.01\times 10^3 + 2.07 \times 10^5$, with a reduction
factor of $10^6$.  It will be interesting to further assess the reduction
factor if the full discrete CMEs instead of the Fokker-Planck equations of these
network models are solved so a direct comparison can be carried out.

Our finite buffer approach compares favorably with the tensor train
method of~\cite{Kazeev2014} for the network of enzymatic futile
cycles~\cite{Cao2013JCP}. This network is a closed system and
technically no finite buffer is required when the enumerated states
can fit into the computer memory, therefore analysis of truncation
error would be unnecessary.  Regardless, our approach of state
enumeration leads to a state space of only $1,071$ microstates, a
reflection of the $O(n!)$ order of reduction.  In contrast, the tensor
train method is based on a state space of a size of $2^{22} = 4.19
\times 10^6$.  Using our finite buffer method, both the time-evolving
and the steady state probability landscapes can be computed
efficiently in $<10$ seconds (data not shown), but the tensor-train
method requires $1.52 \times 10^4$ seconds for the time evolution of
$t=1$ to be computed as reported in~\cite{Kazeev2014}.  For the model
of toggle switch, computing the time-evolution of the probability
landscape up to $t=30$ seconds requires $14,541$ seconds or 4 hours of
wall clock time using the tensor-train method~\cite{Kazeev2014}.  
Our method completes the computation of the steady state probability 
landscape in {\it ca.} $3,300$ seconds or $55$ minutes of wall clock time.

We further note that our work complements tensor-based
methods~\cite{Kazeev2014,Liao2015}. Tensor-based methods directly
reduce the storage of the transition rate matrices~\cite{Kazeev2014},
without altering the hypercubic nature of the underlying state space.
In contrast, our method first reduces the state space by a factor of
$O(n!)$, leading to a dramatically reduced transition rate matrix.  It
is possible that there exist alternative approaches to construct
tensors of the transition rate matrix without assuming that the
truncated state space is a hypercube as is the case
in~\cite{Kazeev2014}. Whether our approach can be useful for further
reduction of storage and computational speed-up is a possible
direction for future exploration.

Overall, we have introduced an efficient method for state space
truncation and have developed theory to quantify the errors of state
space truncations. Results presented here provide a general framework
for high precision numerical solutions to a dCME.  It is envisioned
that the approach of direct solution of a dCME can be broadly applied
to many stochastic reaction networks, such as those found in systems
biology and in synthetic biology.

\section*{ACKNOWLEDGMENTS}
This work is supported by NIH grant GM079804, NSF grant MCB1415589,
and the Chicago Biomedical Consortium with support from the Searle
Funds at The Chicago Community Trust. We thank Dr. Ao Ma for helpful 
discussions and comments. YC is also supported by the LDRD program of CNLS at LANL.

\section*{APPENDIX}

\subsection*{\textbf{Proof of Lemma~\ref{lm:rma}}}

\begin{proof}
By sorting the state space according to the partition $\tilde{\Omega}^{(\infty)}$ 
and re-constructing the transition rate matrix $\tilde{\bA}$ in Eqn.~(\ref{eqn:Aaggreg1}),
the dCME can be re-written
as $\frac{d\tilde{\bp}^{(\infty)}(t)}{dt} = \tilde{\bA}
\tilde{\bp}^{(\infty)}(t)$, where $\tilde{\bp}^{(\infty)}$ is the
probability distribution on the partitioned state space. We sum up the
master equations over all microstates in each group $\mathcal{G}_i$
and obtain a separate aggregated equation for each group. As the re-ordered
matrix $\tilde{\bA}$ is a block tri-diagonal matrix, the summed discrete
chemical master equation is reduced to:
\begin{equation}
\begin{split}
\frac{d p^{(\infty)}(\mathcal{G}_0, t)}{dt} &= \frac{d\sum_{\bx \in \mathcal{G}_0} p(\bx,t)}{dt} = \left( \mathbbm{1}^T \bA_{0,0} \right) \tilde{\bp}^{(\infty)}(\mathcal{G}_{0},t) + \left( \mathbbm{1}^T \bA_{0,1} \right) \tilde{\bp}^{(\infty)}(\mathcal{G}_{1},t), \\
\frac{d p^{(\infty)}(\mathcal{G}_i, t)}{dt} &= \frac{d\sum_{\bx \in
\mathcal{G}_i} p(\bx,t)}{dt} = \left( \mathbbm{1}^T \bA_{i,i-1} \right) \tilde{\bp}^{(\infty)}(\mathcal{G}_{i-1},t) + \left( \mathbbm{1}^T \bA_{i,i} \right) \tilde{\bp}^{(\infty)}(\mathcal{G}_{i},t) + \left( \mathbbm{1}^T \bA_{i,i+1} \right) \tilde{\bp}^{(\infty)}(\mathcal{G}_{i+1},t), \\
\text{for } i=1, \cdots, \infty. \\
\end{split}
\label{eqn:cmeagg}
\end{equation}

The overall probability change of each group $\mathcal{G}_i$ depends
on the probability vector $\tilde{\bp}^{(\infty)}(\mathcal{G}_{i},t)$ itself,
as well as the probability vector $\tilde{\bp}^{(\infty)}(\mathcal{G}_{i-1},t)$ 
and the probability vector $\tilde{\bp}^{(\infty)}(\mathcal{G}_{i+1},t)$ 
of the immediate neighboring groups.
It also depends on the rates of synthesis and degradation reactions in
elements of $\bA_{i,i-1}$ and $\bA_{i,i+1}$, respectively, as well as rates of
coupling reactions in $\bA_{i,i}$. From the definition of transition
rate matrix given in Eqn.~(\ref{eqn:matA}), we have:
\begin{equation}
\begin{split}
\mathbbm{1}^T \bA_{0,0} &= - \mathbbm{1}^T \bA_{1,0},\\
\mathbbm{1}^T \bA_{i-1,i}  + \mathbbm{1}^T \bA_{i,i} &= -\mathbbm{1}^T \bA_{i+1,i}, 
\quad
\text{for } i=1, \cdots, \infty. \\
\end{split}
\label{eqn:stOnPartition}
\end{equation}

At the steady state when all $\frac{d p^{(\infty)}(\mathcal{G}_i)}{dt}
= 0$, we combine line 1 of Eqn.~(\ref{eqn:cmeagg}) and line 1 of
Eqn.~(\ref{eqn:stOnPartition}), and obtain:
$$
\left( \mathbbm{1}^T \bA_{1,0} \right) \tilde{\bpi}^{(\infty)}(\mathcal{G}_{0}) = \left( \mathbbm{1}^T \bA_{0,1} \right) \tilde{\bpi}^{(\infty)}(\mathcal{G}_{1}).
$$
From line 2 of Eqn.~(\ref{eqn:cmeagg})
 at steady state and after incorporating 
line 1 of Eqn.~(\ref{eqn:stOnPartition}), we have:
$
\left( \mathbbm{1}^T \bA_{1,2} \right) \tilde{\bpi}^{(\infty)}(\mathcal{G}_{2}) = 
  \left( \mathbbm{1}^T \bA_{0,0} \right) \tilde{\bpi}^{(\infty)}(\mathcal{G}_{0})
      -
  \left( \mathbbm{1}^T \bA_{1,1} \right) \tilde{\bpi}^{(\infty)}(\mathcal{G}_{1}).
$
After further incorporating  line 1 of Eqn.~(\ref{eqn:cmeagg})  
at steady state, we have
$ 
\left( \mathbbm{1}^T \bA_{1,2} \right) \tilde{\bpi}^{(\infty)}(\mathcal{G}_{2}) = 
   -  \left( \mathbbm{1}^T \bA_{0,1} \right) \tilde{\bpi}^{(\infty)}(\mathcal{G}_{1})
   -  \left( \mathbbm{1}^T \bA_{1,1} \right) \tilde{\bpi}^{(\infty)}(\mathcal{G}_{1}).
$
Incorporating line 2 of Eqn.~(\ref{eqn:stOnPartition}), we have:
$$
\left( \mathbbm{1}^T \bA_{2,1} \right) \tilde{\bpi}^{(\infty)}(\mathcal{G}_{1}) = 
   \left( \mathbbm{1}^T \bA_{1,2} \right) \tilde{\bpi}^{(\infty)}(\mathcal{G}_{2}).
$$
Assume
$
\left( \mathbbm{1}^T \bA_{i,\,i-1} \right) \tilde{\bpi}^{(\infty)}(\mathcal{G}_{i-1}) = 
   \left( \mathbbm{1}^T \bA_{i-1,\,i} \right) \tilde{\bpi}^{(\infty)}(\mathcal{G}_{i}),
$
we have from the $i$-the line of Eqn.~(\ref{eqn:cmeagg})
at the steady state
\begin{equation}
\begin{split}
\left( \mathbbm{1}^T \bA_{i,\,i+1} \right) \tilde{\bpi}^{(\infty)}(\mathcal{G}_{i+1}) 
&= 
  - \left( \mathbbm{1}^T \bA_{i,\,i-1} \right) \tilde{\bpi}^{(\infty)}(\mathcal{G}_{i-1})
  - \left( \mathbbm{1}^T \bA_{i,\,i} \right) \tilde{\bpi}^{(\infty)}(\mathcal{G}_{i})\\
&= 
  - \left( \mathbbm{1}^T \bA_{i-1,\,i} \right) \tilde{\bpi}^{(\infty)}(\mathcal{G}_{i})
  - \left( \mathbbm{1}^T \bA_{i,\,i} \right) \tilde{\bpi}^{(\infty)}(\mathcal{G}_{i})
.
\end{split}
\end{equation}
With the $i$-th line of Eqn.~(\ref{eqn:stOnPartition}), we further have:
$$
\left( \mathbbm{1}^T \bA_{i,\,i+1} \right) \tilde{\bpi}^{(\infty)}(\mathcal{G}_{i+1}) 
=
\left( \mathbbm{1}^T \bA_{i+1,\,i} \right) \tilde{\bpi}^{(\infty)}(\mathcal{G}_{i}). 
$$
Overall, we have:
\begin{equation}
\begin{split}
\left( \mathbbm{1}^T \bA_{1,0} \right) \tilde{\bpi}^{(\infty)}(\mathcal{G}_{0}) &= \left( \mathbbm{1}^T \bA_{0,1} \right) \tilde{\bpi}^{(\infty)}(\mathcal{G}_{1}), \\
\left( \mathbbm{1}^T \bA_{i+1,i} \right) \tilde{\bpi}^{(\infty)}(\mathcal{G}_{i}) &= \left( \mathbbm{1}^T \bA_{i,i+1} \right) \tilde{\bpi}^{(\infty)}(\mathcal{G}_{i+1}), \\
\text{for } i=1, \cdots, \infty. \\
\end{split}
\label{eqn:ss1}
\end{equation}
As both sides are constants, we can find $\alpha_i$ and $\beta_{i+1}$ such that:
\begin{equation}
\begin{split}
\left( \mathbbm{1}^T \bA_{i+1,i} \right) \tilde{\bpi}^{(\infty)}(\mathcal{G}_{i}) = \mathbbm{1}^T \alpha_{i} \tilde{\bpi}^{(\infty)}(\mathcal{G}_{i}) &= \alpha_{i} \mathbbm{1}^T \tilde{\bpi}^{(\infty)}(\mathcal{G}_{i}),\\
\left( \mathbbm{1}^T \bA_{i,i+1} \right) \tilde{\bpi}^{(\infty)}(\mathcal{G}_{i+1}) = \mathbbm{1}^T \beta_{i+1} \tilde{\bpi}^{(\infty)}(\mathcal{G}_{i+1}) &= \beta_{i+1} \mathbbm{1}^T \tilde{\bpi}^{(\infty)}(\mathcal{G}_{i+1}), 
\end{split}
\label{eqn:abdef1}
\end{equation}
for all $i = 0, 1, \cdots$, where $i$ is the total copy number of the MEG. 
We obviously have:
$$
\alpha_i = \left( \mathbbm{1}^T \bA_{i+1,i} \right) \cdot 
\frac{\tilde{\bpi}^{(\infty)}(\mathcal{G}_{i})}{\mathbbm{1}^T \tilde{\bpi}^{(\infty)}(\mathcal{G}_{i})} \quad \text{and} \quad
\beta_{i+1} = \left( \mathbbm{1}^T \bA_{i,i+1} \right) \cdot \frac{\tilde{\bpi}^{(\infty)}(\mathcal{G}_{i+1})}{\mathbbm{1}^T \tilde{\bpi}^{(\infty)}(\mathcal{G}_{i+1})}, 
$$
where $\alpha_i$ is the sum of column-sums of sub-matrix
$\bA_{i+1,i}$ weighted by the steady state probability distribution
$\tilde{\bpi}^{(\infty)}$ on group $\mathcal{G}_i$,
$\beta_{i+1}$ is the sum of column-summation of sub-matrix $\bA_{i,i+1}$
weighted by the steady state probability distribution on group
$\mathcal{G}_{i+1}$.

As $\mathbbm{1}^T \tilde{\bpi}^{(\infty)}(\mathcal{G}_{i})$ is the total steady
state probability mass over states in group $\mathcal{G}_i$, we
substitute Eqn.~(\ref{eqn:abdef1}) back into Eqn.~(\ref{eqn:ss1}) and
obtain the following relationship of steady state distribution on the
partitions of $\tilde{\Omega}^{\infty}$:
\begin{equation}
\begin{split}
\alpha_{0} \mathbbm{1}^T \tilde{\bpi}^{(\infty)}(\mathcal{G}_{0}) &= \beta_{1} \mathbbm{1}^T \tilde{\bpi}^{(\infty)}(\mathcal{G}_{1}), \\
\alpha_{i} \mathbbm{1}^T \tilde{\bpi}^{(\infty)}(\mathcal{G}_{i}) &= \beta_{i+1} \mathbbm{1}^T \tilde{\bpi}^{(\infty)}(\mathcal{G}_{i+1}), \\
\text{for } i=1, \cdots, \infty. \\
\end{split}
\label{eqn:abe}
\end{equation}
The steady state solution to Eqn.~(\ref{eqn:abe}) is equivalent to the
steady state solution of a dCME with the transition
rate matrix $\bB$ defined as in Eqn.~(\ref{eqn:bdmatinf}). 

\end{proof}

\subsection*{\textbf{Proof of Lemma~\ref{lm:fbs}}}

\begin{proof}
If $\lim_{N \rightarrow \infty} \sup\limits_{i > N}
\frac{\alpha^{(\infty)}_{i}}{\beta^{(\infty)}_{i+1}} \geq 1$ held,
then there would be an infinite number of terms
$\frac{\alpha^{(\infty)}_{i}}{\beta^{(\infty)}_{i+1}} > 1$.  There
should exist an integer $N'$ such that for all $i > N'$, we have
$\beta^{(\infty)}_{i+1} \leq \alpha^{(\infty)}_{i}$.  According to
Eqn.~(\ref{eqn:abe}), we would have 
$\tilde{\pi}^{(\infty)}_{i+1} \geq \tilde{\pi}^{(\infty)}_{i}$ 
in the steady state for all $i > N'$.
This contradicts with the assumption of a finite system, as the total
probability mass on boundary states increases monotonically as the net
molecular copy number of the network increases after $N'$.  This makes
the overall system a pure-birth process.  Therefore, for a finite
biological system, we have Eqn.~(\ref{eqn:fbs}).
\end{proof}

\subsection*{\textbf{Proof of Theorem~\ref{thm:ace}}}

\begin{proof}
From Eqn.~(\ref{eqn:pininf}), we can first derive an explicit expression of
the true error $\Err^{(N)}$ using the aggregated synthesis and
degradation rates $\alpha^{(\infty)}_{k}$ and $\beta^{(\infty)}_{k+1}$
given in Eqn.~(\ref{eqn:abdef}): 
\begin{equation}
\begin{aligned}
\Err^{(N)} &= 1 - \sum_{\bx \in \Omega^{(N)}} \pi^{(\infty)}(\bx) 
= 1 - \sum_{i = 0}^{N} \mathbbm{1}^T \tilde{\bpi}^{(\infty)}(\mathcal{G}_i) 
= 1 - \tilde{\pi}_0^{(\infty)} (1+\sum_{j=1}^{N} \prod_{k=0}^{j-1} \frac{\alpha^{(\infty)}_{k}}{\beta^{(\infty)}_{k+1}}) \\ 
&= 1 - \frac{1+\sum_{j=1}^{N} \prod_{k=0}^{j-1} \frac{\alpha^{(\infty)}_{k}}{\beta^{(\infty)}_{k+1}}}{1+\sum_{j=1}^{\infty} \prod_{k=0}^{j-1} \frac{\alpha^{(\infty)}_{k}}{\beta^{(\infty)}_{k+1}}} 
= \frac{\sum_{j=N+1}^{\infty} \prod_{k=0}^{j-1} \frac{\alpha^{(\infty)}_{k}}{\beta^{(\infty)}_{k+1}}}{1+\sum_{j=1}^{\infty} \prod_{k=0}^{j-1} \frac{\alpha^{(\infty)}_{k}}{\beta^{(\infty)}_{k+1}}} 
\label{eqn:err1}
\end{aligned}
\end{equation}
From Eqn.~(\ref{eqn:err1}), Eqn.~(\ref{eqn:pininf}), and Lemma~\ref{lm:fbs}, we have: 
\begin{equation}
\begin{aligned}
\frac{\Err^{(N)}}{\tilde{\pi}^{(\infty)}_N} &=
\frac{\sum_{j=N+1}^{\infty} \prod_{k=0}^{j-1}
  \frac{\alpha^{(\infty)}_{k}}{\beta^{(\infty)}_{k+1}}}{\prod_{k=0}^{N-1}
  \frac{\alpha^{(\infty)}_{k}}{\beta^{(\infty)}_{k+1}}} =
\frac{(\prod_{k=0}^{N-1}
  \frac{\alpha^{(\infty)}_{k}}{\beta^{(\infty)}_{k+1}})
  (\sum_{j=N+1}^{\infty} \prod_{k=N}^{j-1}
  \frac{\alpha^{(\infty)}_{k}}{\beta^{(\infty)}_{k+1}})}{\prod_{k=0}^{N-1}
  \frac{\alpha^{(\infty)}_{k}}{\beta^{(
\infty)}_{k+1}}} \\
&= \sum_{j=N+1}^{\infty} \prod_{k=N}^{j-1}
  \frac{\alpha^{(\infty)}_{k}}{\beta^{(\infty)}_{k+1}} 
  \leq
  \sum_{j=N+1}^{\infty}
   \left[ \sup\limits_{k \geq N} \left\{ \frac{\alpha^{(\infty)}_{k}}{\beta^{(\infty)}_{k+1}} \right\} \right]^{j-N} 
= \sum_{j=1}^{\infty}
   \left[ 
     \sup\limits_{k \geq N} \left\{ \frac{\alpha^{(\infty)}_{k}}{\beta^{(\infty)}_{k+1}} \right\} \right]^{j},
\end{aligned}
\end{equation}
When $N$ is sufficiently large, 
$ \sup\limits_{k \geq N} \left\{ \frac{\alpha^{(\infty)}_{k}}{\beta^{(\infty)}_{k+1}} \right\} < 1$ 
from Lemma~\ref{lm:fbs}, the terms in the infinite series 
$\sum_{j=1}^{\infty}
   \left[ \sup\limits_{k \geq N} \left\{ \frac{\alpha^{(\infty)}_{k}}{\beta^{(\infty)}_{k+1}} \right\} \right]^{j}
$
then forms a converging geometric series. Therefore, we have
$$
\sum_{j=1}^{\infty}
   \left[ 
          \sup\limits_{k \geq N} \left\{ \frac{\alpha^{(\infty)}_{k}}{\beta^{(\infty)}_{k+1}} \right\} 
   \right]^{j}
= \frac{
        \sup\limits_{k \geq N} \left\{ \frac{\alpha^{(\infty)}_{k}}{\beta^{(\infty)}_{k+1}} \right\} 
       }
       {1-  \sup\limits_{k \geq N} \left\{ \frac{\alpha^{(\infty)}_{k}}{\beta^{(\infty)}_{k+1}} \right\} },
$$
and the following inequality holds:
$$
\lim_{N \rightarrow \infty}  \frac{\Err^{(N)}}{\bar{\pi}^{(\infty)}_N}
\leq 
\lim_{N \rightarrow \infty} 
  \frac{
        \sup\limits_{k \geq N} \left\{ \frac{\alpha^{(\infty)}_{k}}{\beta^{(\infty)}_{k+1}} \right\} 
       }
       {1-  \sup\limits_{k \geq N} \left\{ \frac{\alpha^{(\infty)}_{k}}{\beta^{(\infty)}_{k+1}} \right\} }.
$$

Let $M \in \{N, \cdots, \infty \} $ be the integer such that
$ \frac{\alpha^{(\infty)}_{M}}
     {\beta^{(\infty)}_{M+1}}
=\mathop {\sup }\limits_{k \ge N}
\left\{
   \frac{\alpha^{(\infty)}_{k}}
        {\beta^{(\infty)}_{k+1}}
\right\}$,
we have the following inequality equivalent to Inequality (\ref{eqn:ace}): 
$$
\lim_{N \rightarrow \infty} 
\frac{\Err^{(N)}}{\bar{\pi}^{(\infty)}_N}\leq \lim_{N \rightarrow \infty} \frac{\frac{\alpha^{(\infty)}_{M}}{\beta^{(\infty)}_{M+1}}}{1-\frac{\alpha^{(\infty)}_{M}}{\beta^{(\infty)}_{M+1}}}.
$$
\end{proof}

\subsection*{\textbf{Proof of Theorem~\ref{thm:ipt}}}

\begin{proof}
We first consider two truncated state spaces $\tilde{\Omega}^{(N)}$
and $\tilde{\Omega}^{(N+1)}$. Following Eqn.~(\ref{eqn:cmeagg}), two
finite sets of the block chemical master equation can be constructed
for these two state spaces.  The first set containing $N$ equations is
built on the state space $\tilde{\Omega}^{(N)}$.
\begin{equation}
\begin{split}
\frac{d p^{(N)}(\mathcal{G}_0, t)}{dt} &= \left( \mathbbm{1}^T \bA_{0,0} \right) \tilde{\bp}^{(N)}(\mathcal{G}_{0},t) + \left( \mathbbm{1}^T \bA_{0,1} \right) \tilde{\bp}^{(N)}(\mathcal{G}_{1},t), \\
\frac{d p^{(N)}(\mathcal{G}_{i}, t)}{dt} &= \left( \mathbbm{1}^T \bA_{i,i-1} \right) \tilde{\bp}^{(N)}(\mathcal{G}_{i-1},t) + \left( \mathbbm{1}^T \bA_{i,i} \right) \tilde{\bp}^{(N)}(\mathcal{G}_{i},t) + \left( \mathbbm{1}^T \bA_{i,i+1} \right) \tilde{\bp}^{(N)}(\mathcal{G}_{i+1},t), \\
\text{for } i=1, \cdots, N-1, \\
\frac{d p^{(N)}(\mathcal{G}_N, t)}{dt} &= \left( \mathbbm{1}^T \bA_{N,N-1} \right) \tilde{\bp}^{(N)}(\mathcal{G}_{N-1},t) + \left( \mathbbm{1}^T \bA_{N,N} \right) \tilde{\bp}^{(N)}(\mathcal{G}_{N},t). \\
\end{split}
\label{eqn:cmeaggn}
\end{equation}
The second set is built on the state space $\tilde{\Omega}^{(N+1)}$ 
containing $N+1$ equations. 
\begin{equation}
\begin{split}
\frac{d p^{(N+1)}(\mathcal{G}_0, t)}{dt} &= \left( \mathbbm{1}^T \bA_{0,0} \right) \tilde{\bp}^{(N+1)}(\mathcal{G}_{0},t) + \left( \mathbbm{1}^T \bA_{0,1} \right) \tilde{\bp}^{(N+1)}(\mathcal{G}_{1},t), \\
\frac{d p^{(N+1)}(\mathcal{G}_{i}, t)}{dt} &= \left( \mathbbm{1}^T \bA_{i,i-1} \right) \tilde{\bp}^{(N+1)}(\mathcal{G}_{i-1},t) + \left( \mathbbm{1}^T \bA_{i,i} \right) \tilde{\bp}^{(N+1)}(\mathcal{G}_{i},t) + \left( \mathbbm{1}^T \bA_{i,i+1} \right) \tilde{\bp}^{(N+1)}(\mathcal{G}_{i+1},t), \\
\text{for } i=1, \cdots, N-1, \\
\frac{d p^{(N+1)}(\mathcal{G}_{N}, t)}{dt} &= \left( \mathbbm{1}^T \bA_{N,N-1} \right) \tilde{\bp}^{(N+1)}(\mathcal{G}_{N-1},t) + \left( \mathbbm{1}^T \bA_{N,N} \right) \tilde{\bp}^{(N+1)}(\mathcal{G}_{N},t) + \left( \mathbbm{1}^T \bA_{N,N+1} \right) \tilde{\bp}^{(N+1)}(\mathcal{G}_{N+1},t), \\
\frac{d p^{(N+1)}(\mathcal{G}_{N+1}, t)}{dt} &= \left( \mathbbm{1}^T \bA_{N+1,N} \right) \tilde{\bp}^{(N+1)}(\mathcal{G}_{N},t) + \left( \mathbbm{1}^T \bA_{N+1,N+1} \right) \tilde{\bp}^{(N+1)}(\mathcal{G}_{N+1},t). \\
\end{split}
\label{eqn:cmeaggn1}
\end{equation}
At steady state, the left-hand side of the equations are zeros.  
For the first $N$ equations, the corresponding block matrices are the same for 
both state spaces $\tilde{\Omega}^{(N)}$ and $\tilde{\Omega}^{(N+1)}$. 
We can then subtract the right-hand side of 
Eqn.~(\ref{eqn:cmeaggn1}) from Eqn.~(\ref{eqn:cmeaggn}) and  obtain the following steady state equations: 
\begin{equation}
\label{eqn:CMEAgr1}
\begin{split}
 \mathbbm{1}^T \bA_{0,0} {\Delta\bpi}_{0} + \mathbbm{1}^T \bA_{0,1} {\Delta\bpi}_{1}=0, \\
 \mathbbm{1}^T \bA_{i,i-1} {\Delta\bpi}_{i-1} + \mathbbm{1}^T \bA_{i,i} {\Delta\bpi}_{i} + \mathbbm{1} \bA_{i,i+1} {\Delta\bpi}_{i+1}=0, \\
\text{for } i=1, \cdots, N-1, \\
\end{split}
\end{equation}
where ${\Delta\bpi}_{i}={\bpi}^{(N)}_i - {\bpi}^{(N+1)}_i$ is the
steady state probability difference between the state group $\mathcal{G}_i$ in the dCME 
on $\tilde{\Omega}^{(N)}$ and $\tilde{\Omega}^{(N+1)}$.
However, the block sub-matrix $\bA_{N,N}$ of the boundary group $\mathcal{G}_N$ is 
different between the two state spaces.  
From the construction of the 
aggregated dCME matrix $\tilde{\bA}$, 
columns of the full matrices $\tilde{\bA}^{(N+1)}$ over $\tilde{\Omega}^{(N+1)}$  
and $\tilde{\bA}^{N}$ over $\tilde{\Omega}^{N}$ all sum to 0 (see Eqn~\ref{eqn:stOnPartition}).    
We use $\bA^{(N)}_{i,\,j}$ to denote the block sub-matrix 
of the group $\mathcal{G}_N$ for the state space $\tilde{\Omega}^{(N)}$, and use 
$\bA^{(N+1)}_{i,\,j}$ to denote the corresponding block sub-matrix 
for the state space $\tilde{\Omega}^{(N+1)}$.  
From the $N$-th line of the truncated version of Eqn~({\ref{eqn:stOnPartition}}), we have 
$
\mathbbm{1}^T{\bA_{N-1,\,N}^{(N+1)}} + \mathbbm{1}^T{\bA_{N,\,N}^{(N+1)}} + 
                                         \mathbbm{1}^T{\bA_{N+1,\, N}^{(N+1)}} = 0
$
for $\tilde{\Omega}^{(N+1)}$
and
$
\mathbbm{1}^T{\bA_{N-1,\,N}^{(N)}} + \mathbbm{1}^T{\bA_{N,\,N}^{(N)}} = 0
$
for $\tilde{\Omega}^{(N)}$. 
Since ${\bA_{N-1,N}^{(N)}} = {\bA_{N-1,N}^{(N+1)}}$,  
we have the following property 
\begin{equation}
\mathbbm{1}^T{\bA_{N,N}^{(N+1)}}=\mathbbm{1}^T{\bA_{N,N}^{(N)}}-\mathbbm{1}^T{\bA_{N+1,N}^{(N+1)}},
\label{eqn:prpt1}
\end{equation}
We also have
\begin{equation}
\mathbbm{1}^T{\bA_{N+1,N+1}^{(N+1)}}=-\mathbbm{1}^T{\bA_{N,N+1}^{(N+1)}}. 
\label{eqn:prpt2}
\end{equation}

From Eqn.~(\ref{eqn:cmeaggn}), we have for the steady state the probability of the state group 
$\mathcal{G}_N$ over the state space $\tilde{\Omega}^{(N)}$ as:
\begin{equation}
\mathbbm{1}^T{\bA_{N,N-1}^{(N)}} \bpi_{N-1}^{(N)}  + \mathbbm{1}^T{\bA_{N,N}^{(N)}} \bpi_{N}^{(N)} = 0,
\label{eqn:nssn}
\end{equation}
From Eqn.~(\ref{eqn:cmeaggn1}), we  have for the steady state the
probability of the state group $\mathcal{G}_N$ and $\mathcal{G}_{N+1}$
over the state space $\tilde{\Omega}^{(N+1)}$ as:
\begin{equation}
\mathbbm{1}^T{\bA_{N,N-1}^{(N+1)}} \bpi_{N-1}^{(N+1)}  + \mathbbm{1}^T{\bA_{N,N}^{(N+1)}} \bpi_{N}^{(N+1)} + \mathbbm{1}^T{\bA_{N,N+1}^{(N+1)}} \bpi_{N+1}^{(N+1)} = 0, 
\label{eqn:n1ssn}
\end{equation}
and 
\begin{equation}
\mathbbm{1}^T{\bA_{N+1,N}^{(N+1)}} \bpi_{N}^{(N+1)}  + \mathbbm{1}^T{\bA_{N+1,N+1}^{(N+1)}} \bpi_{N+1}^{(N+1)} = 0,
\label{eqn:n1ssn1}
\end{equation}
respectively. 

As ${\bA_{N,N-1}^{(N+1)}} = {\bA_{N,N-1}^{(N)}}$,
we subtract Eqn.~(\ref{eqn:n1ssn}) from Eqn.~(\ref{eqn:nssn}), and obtain: 
$$
\mathbbm{1}^T{\bA_{N,N-1}} \Delta \bpi_{N-1} + \mathbbm{1}^T{\bA_{N,N}^{(N)}} \bpi_{N}^{(N)} - \mathbbm{1}^T{\bA_{N,N}^{(N+1)}} \bpi_{N}^{(N+1)} - \mathbbm{1}^T{\bA_{N,N+1}^{(N+1)}} \bpi_{N+1}^{(N+1)} = 0.
$$
It can be re-written by applying the matrix property of Eqn.~(\ref{eqn:prpt1}) as: 
$$
\mathbbm{1}^T{\bA_{N,N-1}} \Delta \bpi_{N-1} + \mathbbm{1}^T{\bA_{N,N}^{(N)}} \Delta \bpi_{N} + \mathbbm{1}^T{\bA_{N+1,N}^{(N+1)}} \bpi_{N}^{(N+1)} - \mathbbm{1}^T{\bA_{N,N+1}^{(N+1)}} \bpi_{N+1}^{(N+1)} = 0.
$$
By using the matrix property in Eqn.~(\ref{eqn:prpt2}), we can further re-write it as: 
$$
\mathbbm{1}^T{\bA_{N,N-1}} \Delta \bpi_{N-1} + \mathbbm{1}^T{\bA_{N,N}^{(N)}} \Delta \bpi_{N} + \mathbbm{1}^T{\bA_{N+1,N}^{(N+1)}} \bpi_{N}^{(N+1)} + \mathbbm{1}^T{\bA_{N+1,N+1}^{(N+1)}} \bpi_{N+1}^{(N+1)} = 0.
$$
From Eqn.~(\ref{eqn:n1ssn1}), the last two terms sum to 0. Therefore,
we obtain the ($N+1$)-st equation of the steady state probability difference as: 
$$
\mathbbm{1}^T{\bA_{N,N-1}} \Delta \bpi_{N-1} + \mathbbm{1}^T{\bA_{N,N}^{(N)}} \Delta \bpi_{N} = 0.
$$

Taken together, we have the set of equations for steady state probability differences for all $N+1$ blocks as: 
\begin{equation}
\begin{split}
 \mathbbm{1}^T \bA_{0,0} {\Delta\bpi}_{0} + \mathbbm{1}^T \bA_{0,1} {\Delta\bpi}_{1}=0, \\
 \mathbbm{1}^T \bA_{i,i-1} {\Delta\bpi}_{i-1} + \mathbbm{1}^T \bA_{i,i} {\Delta\bpi}_{i} + \mathbbm{1}^T \bA_{i,i+1} {\Delta\bpi}_{i+1}=0, \\
 \text{for } i=1, \cdots, N-1, \\
 \mathbbm{1}^T{\bA_{N,N-1}} \Delta \bpi_{N-1} + \mathbbm{1}^T{\bA_{N,N}} \Delta \bpi_{N} = 0,
\end{split}
\label{eqn:CMEAgr2}
\end{equation}
where all block sub-matrices are identical between those over the
state spaces $\tilde{\Omega}^{(N)}$ and $\tilde{\Omega}^{(N+1)}$. 
We therefore obtain the set of equations of differences in steady state probability equivalent to Eqn.~(\ref{eqn:ss1}):
\begin{equation}
\begin{split}
\mathbbm{1}^T \bA_{i,i-1} {\Delta\bpi}_{i-1} &= \mathbbm{1}^T \bA_{i-1,i} {\Delta\bpi}_{i}, 
\text{for } i=1, \cdots, N, \\
\end{split}
\end{equation}
which produces the same steady state solution as that of
Eqn.~(\ref{eqn:ss1}) after scaling by a constant. 
As probability vector solution to Eqn.~(\ref{eqn:ss1}) has non-negative elements,
this equivalence implies that all
elements in each ${\Delta \bpi}_{i}$ have the same sign. 
As the total steady state probability mass in both state spaces sum up
to $1$,
$$\sum_{i=1}^N \tilde{\pi}_i^{(N)} = \sum_{i=1}^{N+1} \tilde{\pi}_i^{(N+1)} = 1,$$
we therefore know that the total probability differences is non-negative: 
$$
\sum_{i = 1}^N \Delta \tilde{\pi}_i = \sum_{i = 1}^N \tilde{\pi}_i^{(N)} - \sum_{i = 1}^N \tilde{\pi}_i^{(N+1)} = 1 - (1 - \tilde{\pi}_{N + 1}^{(N+1)}) = \tilde{\pi}_{N + 1}^{(N+1)} \geq 0. 
$$ 
Therefore, the probability difference of each
individual $\mathcal{G}_i$ between two state spaces must be
non-negative: 
$$
\Delta \tilde{\pi}_i = \tilde{\pi}_i^{(N)}-\tilde{\pi}_i^{(N+1)} \geq
0, \quad i = 0, 1, \cdots, N. 
$$ 
This can be generalized. As $N$ increases to infinity,  
we have:
$$
\tilde{\pi}_i^{(N)} \geq \tilde{\pi}_i^{(N+1)} \geq \cdots \geq \tilde{\pi}_i^{(\infty)}, 
\quad i = 0, 1, \cdots, N. 
$$

\end{proof}

\subsection*{\textbf{Proof of Theorem~\ref{thm:Iij}}}

\begin{proof}
For convenience, we use $M = N_i$ to denote the maximum net copy number 
in the truncated  $i$-th MEG. 
We first aggregate the state space $\Omega^{(\mathcal{I}_j)}$ into infinitely many 
groups $\{ \mathcal{G}_0, \mathcal{G}_1, \cdots, \mathcal{G}_M, \mathcal{G}_{M+1}, \cdots \}$ 
according to the net copy number in the $i$-th MEG. 
We then re-construct the permuted matrix $\tilde{\bA}^{(\mathcal{I}_j)}$ 
according to this aggregation. 
We have: 
\begin{equation}
\tilde{\bA}^{(\mathcal{I}_j)} = \left( {\begin{array}{c|c}
   {\bA_{g,h}^{(\mathcal{I}_j)}} & {\bA_{g,l}^{(\mathcal{I}_j)}} \\
	\hline 
   {\bA_{k,h}^{(\mathcal{I}_j)}} & {\bA_{k,l}^{(\mathcal{I}_j)}} \\
\end{array}} \right) 
= 
\left( {\begin{array}{c|c}
   {\bA_{\textbf{1},\textbf{1}}^{(\mathcal{I}_j)}} & {\bA_{\textbf{1},\textbf{2}}^{(\mathcal{I}_j)}} \\
	 \hline
   {\bA_{\textbf{2},\textbf{1}}^{(\mathcal{I}_j)}} & {\bA_{\textbf{2},\textbf{2}}^{(\mathcal{I}_j)}} \\
\end{array}} \right), \text{for } 0 \leq g,h \leq M, \text{ and } k,l \geq M+1,
\label{eqn:Aaggi}
\end{equation}
where the subscripts $m$ and $n$ of each block matrix $\bA_{m,n}^{(\mathcal{I}_j)}$ 
indicate the actual net copy numbers of 
the corresponding aggregated states of the $i$-th MEG.
Next, 
we further partition the matrix into four blocks by 
truncating  the $i$-th MEG at the maximum copy number of $M$. 
Specifically, $\bA_{\textbf{1},\textbf{1}}^{(\mathcal{I}_j)}$ in the right-hand side  of
Eqn.~(\ref{eqn:Aaggi}) is the 
north-west corner sub-matrix of $\tilde{\bA}^{(\mathcal{I}_j)}$, 
which contains all transitions between microstates in the state space $\Omega^{(\mathcal{I}_{i,j})}$: 
\begin{equation}
\bA_{\textbf{1},\textbf{1}}^{(\mathcal{I}_j)} = \left( {\begin{array}{c}
   {\bA_{g,h}^{(\mathcal{I}_j)}} 
\end{array}} \right) = \{A_{\bx_m, \bx_n}\}, \; \bx_m, \, \bx_n \in \Omega^{(\mathcal{I}_{i,j})}, 
\text{ and } 0 \leq g,h \leq M. 
\label{eqn:Aaggi11}
\end{equation}
$\bA_{\textbf{1},\textbf{2}}^{(\mathcal{I}_j)}$ is the north-east corner sub-matrix of $\tilde{\bA}^{(\mathcal{I}_j)}$, 
which contains all transitions from microstates in state space 
$\Omega^{(\mathcal{I}_{j})} / \Omega^{(\mathcal{I}_{i,j})}$ to
microstates in state space $\Omega^{(\mathcal{I}_{i,j})}$: 
\begin{equation}
\bA_{\textbf{1},\textbf{2}}^{(\mathcal{I}_j)} = \left( {\begin{array}{c}
   {\bA_{g,l}^{(\mathcal{I}_j)}}
\end{array}} \right) = \{A_{\bx_m, \bx_n}\}, \; \bx_m \in \Omega^{(\mathcal{I}_{i,j})}, \, \bx_n \in \Omega^{(\mathcal{I}_{j})} / \Omega^{(\mathcal{I}_{i,j})}, 
\text{ and } 0 \leq g \leq M, l \geq M+1. 
\label{eqn:Aaggi12}
\end{equation}
$\bA_{\textbf{2},\textbf{1}}^{(\mathcal{I}_j)}$ is the south-west corner sub-matrix of $\tilde{\bA}^{(\mathcal{I}_j)}$, 
which contains all transitions from microstates in state space 
$\Omega^{(\mathcal{I}_{i,j})}$ to microstates in state space 
$\Omega^{(\mathcal{I}_{j})} / \Omega^{(\mathcal{I}_{i,j})}$: 
\begin{equation}
\bA_{\textbf{2},\textbf{1}}^{(\mathcal{I}_j)} = \left( {\begin{array}{c}
	 {\bA_{k,h}^{(\mathcal{I}_j)}} 
\end{array}} \right) = \{A_{\bx_m, \bx_n}\}, \; \bx_m \in \Omega^{(\mathcal{I}_{j})} / \Omega^{(\mathcal{I}_{i,j})}, \, \bx_n \in \Omega^{(\mathcal{I}_{i,j})}, 
\text{ and } 0 \leq h \leq M, k \geq M+1. 
\label{eqn:Aaggi21}
\end{equation}
and $\bA_{\textbf{2},\textbf{2}}^{(\mathcal{I}_j)}$ is the south-east corner sub-matrix of $\tilde{\bA}^{(\mathcal{I}_j)}$, 
which contains all transitions between microstates in state space 
$\Omega^{(\mathcal{I}_{j})} / \Omega^{(\mathcal{I}_{i,j})}$: 
\begin{equation}
\bA_{\textbf{2},\textbf{2}}^{(\mathcal{I}_j)} = \left( {\begin{array}{cc}
	 {\bA_{k,l}^{(\mathcal{I}_j)}} 
\end{array}} \right) = \{A_{\bx_m, \bx_n}\}, \quad \mbox{ with  }\bx_m \mbox{ and } \bx_n \in \Omega^{(\mathcal{I}_{j})} / \Omega^{(\mathcal{I}_{i,j})}, 
\text{ and } k,l \geq M+1. 
\label{eqn:Aaggi22}
\end{equation}

We now truncate the state space at the maximum copy number $M$ of the
$i$-th MEG.  A matrix $\bA^{(\mathcal{I}_{i,j})}$ on the truncated
state space $\Omega^{(\mathcal{I}_{i,j})}$ using the same partition
$\{ \mathcal{G}_0, \mathcal{G}_1, \cdots, \mathcal{G}_M \}$ can be
constructed as:
\begin{equation}
\tilde{\bA}^{(\mathcal{I}_{i,j})} = \left( {\begin{array}{c}
   {\bA_{g,h}^{(\mathcal{I}_{i,j})}} 
\end{array}} \right), 
\text{ and } 0 \leq g,h \leq M. 
\label{eqn:Aaggij}
\end{equation}

Similar to the matrix $\tilde{\bA}$ in Eqn.~(\ref{eqn:Aaggreg1}), 
both matrices $\tilde{\bA}^{(\mathcal{I}_{i,j})}$
and $\tilde{\bA}^{(\mathcal{I}_{j})}$ are tri-diagonal matrix with ${\bA_{m,n}^{(\mathcal{I}_{i,j})}}=0$ 
and ${\bA_{m,n}^{(\mathcal{I}_{j})}}=0$ for any $|m - n| > 1$. 

 Matrix $\tilde{\bA}^{(\mathcal{I}_{i,j})}$ and sub-matrix
  $\bA_{\textbf{1},\textbf{1}}^{(\mathcal{I}_j)}$ reside on the same
  state space $\Omega^{(\mathcal{I}_{i,j})}$ and have exactly the same
  permutation, \textit{i.e.}, the matrix element
  $A^{(\mathcal{I}_{i,j})}_{\bx_m,\bx_n} \in
  \tilde{\bA}^{(\mathcal{I}_{i,j})}$ and
  $A^{(\mathcal{I}_{j})}_{\bx_m,\bx_n} \in
  \bA_{\textbf{1},\textbf{1}}^{(\mathcal{I}_j)}$ describes the same
  transitions between microstates $\bx_m, \, \bx_n \in
  \Omega^{(\mathcal{I}_{i,j})} \subset \Omega^{(\mathcal{I}_{j})}$.
  Only diagonal elements in $ {\bA_{M,M}^{(\mathcal{I}_{i,j})}} $ have
  different rates.  By construction, $\mathcal{G}_M$ and
  $\mathcal{G}_{M+1}$ are the only two aggregated groups that are
  involved in transition between states across the boundary of
  $\Omega^{(\mathcal{I}_{i,j})}$.  The sub-matrix
  $\bA_{M+1,M}^{(\mathcal{I}_j)}$ is the only nonzero sub-matrix in
  $\bA_{\textbf{2},\textbf{1}}^{(\mathcal{I}_j)}$, which forms the
  reflection boundary and is involved in synthesis reactions from microstates
  in group $\mathcal{G}_M$ to microstates in $\mathcal{G}_{M+1}$.  As
  a property of the rate matrix, we have 
\begin{equation}
\mathbbm{1}^T \bA_{M-1,M}^{(\mathcal{I}_j)}
+\mathbbm{1}^T \bA_{M,M}^{(\mathcal{I}_j)}
+\mathbbm{1}^T \bA_{M+1,M}^{(\mathcal{I}_j)} = \textbf{0}^T, 
\label{eqn:z-M-1.M}
\end{equation}
and 
\begin{equation}
\mathbbm{1}^T \bA_{M-1,M}^{(\mathcal{I}_{i,j})}
+\mathbbm{1}^T \bA_{M,M}^{(\mathcal{I}_{i,j})} = \textbf{0}^T. 
\label{eqn:i-j-M}
\end{equation}
Since
$
\bA_{M-1,M}^{(\mathcal{I}_j)} = \bA_{M-1,M}^{(\mathcal{I}_{i,j})},
$
we have from Eqn~(\ref{eqn:i-j-M})
$
\mathbbm{1}^T \bA_{M,M}^{(\mathcal{I}_{i,j})} =
- \mathbbm{1}^T \bA_{M-1,M}^{(\mathcal{I}_{i,j})} = 
- \mathbbm{1}^T \bA_{M-1,M}^{(\mathcal{I}_{i,j})}. 
$
With Eqn~(\ref{eqn:z-M-1.M}), we further have
$$\mathbbm{1}^T \bA_{M,M}^{(\mathcal{I}_{i,j})} = 
 \mathbbm{1}^T \bA_{M,M}^{(\mathcal{I}_j)}
+\mathbbm{1}^T \bA_{M+1,M}^{(\mathcal{I}_j)}.$$
By  construction,  the only differences between 
the sub-matrix $\bA_{M,M}^{(\mathcal{I}_{i,j})}$ and 
$\bA_{M,M}^{(\mathcal{I}_j)}$ are in the diagonal elements. 
Therefore, we have 
$$\bA_{M,M}^{(\mathcal{I}_{i,j})} = 
 \bA_{M,M}^{(\mathcal{I}_j)} + \diag(\mathbbm{1}^T
 \bA_{M+1,M}^{(\mathcal{I}_j)}).$$
That is: $$\tilde{\bA}^{(\mathcal{I}_{i,j})} = 
\bA_{\textbf{1},\textbf{1}}^{(\mathcal{I}_j)} + 
\diag(\mathbbm{1}^T \bA_{\textbf{2},\textbf{1}}^{(\mathcal{I}_j)}).$$ 
For convenience, we use the notation $\bR_{\textbf{2},\textbf{1}}^{(\mathcal{I}_j)} = \diag(\mathbbm{1}^T \bA_{\textbf{2},\textbf{1}}^{(\mathcal{I}_j)})$, 
and have $\tilde{\bA}^{(\mathcal{I}_{i,j})} = 
\bA_{\textbf{1},\textbf{1}}^{(\mathcal{I}_j)} + \bR_{\textbf{2},\textbf{1}}^{(\mathcal{I}_j)}$. 
We partition the steady state vector $\bpi^{(\mathcal{I}_{j})}$ accordingly into 
two sub-vectors: $\bpi^{(\mathcal{I}_{j})} = 
(\bpi^{(\mathcal{I}_{j})}_{\textbf{1}},\, \bpi^{(\mathcal{I}_{j})}_{\textbf{2}})$, 
where $\bpi^{(\mathcal{I}_{j})}_{\textbf{1}}$ corresponds to states in $\Omega^{(\mathcal{I}_{i,j})}$, 
and $\bpi^{(\mathcal{I}_{j})}_{\textbf{2}}$ corresponds to states 
in $\Omega^{(\mathcal{I}_{j})} / \Omega^{(\mathcal{I}_{i,j})}$. 
As $\tilde{\bA}^{(\mathcal{I}_{j})} \bpi^{(\mathcal{I}_{j})} = \textbf{0}$, we have: 
$$
\bA^{(\mathcal{I}_{j})}_{\textbf{1},\textbf{1}} \bpi^{(\mathcal{I}_{j})}_{\textbf{1}} + 
\bA^{(\mathcal{I}_{j})}_{\textbf{1},\textbf{2}} \bpi^{(\mathcal{I}_{j})}_{\textbf{2}} = \textbf{0}, 
$$
therefore 
$$
\left[ \tilde{\bA}^{(\mathcal{I}_{i,j})} - \bR^{(\mathcal{I}_{j})}_{\textbf{2},\textbf{1}} \right] 
\bpi^{(\mathcal{I}_{j})}_{\textbf{1}} + \bA^{(\mathcal{I}_{j})}_{\textbf{1},\textbf{2}} 
\bpi^{(\mathcal{I}_{j})}_{\textbf{2}} = \textbf{0}. 
$$
Hence, we have:
\begin{equation}
\tilde{\bA}^{(\mathcal{I}_{i,j})} \bpi^{(\mathcal{I}_{j})}_{\textbf{1}} = 
\bR^{(\mathcal{I}_{j})}_{\textbf{2},\textbf{1}} \bpi^{(\mathcal{I}_{j})}_{\textbf{1}} - 
\bA^{(\mathcal{I}_{j})}_{\textbf{1},\textbf{2}} \bpi^{(\mathcal{I}_{j})}_{\textbf{2}}. 
\label{eqn:bndry}
\end{equation}
As all off-diagonal entries of transition rate matrix $\tilde{\bA}^{(\mathcal{I}_{j})}$ 
are non-negative, we know that $\bA_{\textbf{1},\textbf{2}}^{(\mathcal{I}_j)} \geq \textbf{0}$, 
and $\bR_{\textbf{2},\textbf{1}}^{(\mathcal{I}_j)} \geq \textbf{0}$. 
Since $\bpi^{(\mathcal{I}_{j})} = 
(\bpi^{(\mathcal{I}_{j})}_{\textbf{1}}, \bpi^{(\mathcal{I}_{j})}_{\textbf{2}})$ 
is the steady state distribution of the rate matrix $\tilde{\bA}^{(\mathcal{I}_j)}$ 
with $\bpi^{(\mathcal{I}_{j})}_{\textbf{1}} \geq \textbf{0}$ 
and $\bpi^{(\mathcal{I}_{j})}_{\textbf{2}} \geq \textbf{0}$, 
we have $\bA_{\textbf{1},\textbf{2}}^{(\mathcal{I}_j)} 
\bpi^{(\mathcal{I}_{j})}_{\textbf{2}} \geq \textbf{0}$, and 
$\bR_{\textbf{2},\textbf{1}}^{(\mathcal{I}_j)} \bpi^{(\mathcal{I}_{j})}_{\textbf{1}} \geq \textbf{0}$. 
As all columns of matrix $\tilde{\bA}^{(\mathcal{I}_{i,j})}$ sum to zero, 
{\it i.e.},
$\mathbbm{1}^T \tilde{\bA}^{(\mathcal{I}_{i,j})} =  \textbf{0}^T, 
$
we have:
$$
\mathbbm{1}^T \bR^{(\mathcal{I}_{j})}_{\textbf{2},\textbf{1}} \bpi^{(\mathcal{I}_{j})}_{\textbf{1}} 
- \mathbbm{1}^T \bA^{(\mathcal{I}_{j})}_{\textbf{1},\textbf{2}} \bpi^{(\mathcal{I}_{j})}_{\textbf{2}}
= \mathbbm{1}^T \tilde{\bA}^{(\mathcal{I}_{i,j})} \bpi^{(\mathcal{I}_{j})}_{\textbf{1}} 
=  \textbf{0}^T. 
$$
Therefore, we have: 
$$
\mathbbm{1}^T \bR^{(\mathcal{I}_{j})}_{\textbf{2},\textbf{1}} \bpi^{(\mathcal{I}_{j})}_{\textbf{1}} 
= \mathbbm{1}^T \bA^{(\mathcal{I}_{j})}_{\textbf{1},\textbf{2}} \bpi^{(\mathcal{I}_{j})}_{\textbf{2}}.
$$
As all entries in vector 
$\bR_{\textbf{2},\textbf{1}}^{(\mathcal{I}_j)} 
\bpi^{(\mathcal{I}_{j})}_{\textbf{1}}$ 
and 
$\bA_{\textbf{1},\textbf{2}}^{(\mathcal{I}_j)} 
\bpi^{(\mathcal{I}_{j})}_{\textbf{2}}$ are non-negative, 
we have the following equality of $1$-norms, \textit{i.e.} the summation
of absolute values of vector elements: 
\begin{equation}
\left\| \bR^{(\mathcal{I}_{j})}_{\textbf{2},\textbf{1}} \bpi^{(\mathcal{I}_{j})}_{\textbf{1}} \right\|_1
= \mathbbm{1}^T \bR^{(\mathcal{I}_{j})}_{\textbf{2},\textbf{1}} \bpi^{(\mathcal{I}_{j})}_{\textbf{1}} 
= \mathbbm{1}^T \bA^{(\mathcal{I}_{j})}_{\textbf{1},\textbf{2}} \bpi^{(\mathcal{I}_{j})}_{\textbf{2}}
= \left\| \bA^{(\mathcal{I}_{j})}_{\textbf{1},\textbf{2}} \bpi^{(\mathcal{I}_{j})}_{\textbf{2}} \right\|_1.
\label{eqn:bndryef}
\end{equation}
From Minkowski inequality of vector norm and Eqn.~(\ref{eqn:bndry}), we have:
\begin{equation}
\left\| \tilde{\bA}^{(\mathcal{I}_{i,j})} \bpi^{(\mathcal{I}_{j})}_{\textbf{1}} \right\|_1 
= \left\| \bR^{(\mathcal{I}_{j})}_{\textbf{2},\textbf{1}} \bpi^{(\mathcal{I}_{j})}_{\textbf{1}} - 
\bA^{(\mathcal{I}_{j})}_{\textbf{1},\textbf{2}} \bpi^{(\mathcal{I}_{j})}_{\textbf{2}} \right\|_1
\leq \left\| \bR^{(\mathcal{I}_{j})}_{\textbf{2},\textbf{1}} \bpi^{(\mathcal{I}_{j})}_{\textbf{1}} \right\|_1 + \left\| \bA^{(\mathcal{I}_{j})}_{\textbf{1},\textbf{2}} \bpi^{(\mathcal{I}_{j})}_{\textbf{2}} \right\|_1. 
\label{eqn:bndrynb}
\end{equation}
From Eqn.~(\ref{eqn:bndryef}), we have: 
\begin{equation}
\left\| \tilde{\bA}^{(\mathcal{I}_{i,j})} \bpi^{(\mathcal{I}_{j})}_{\textbf{1}} \right\|_1 \leq 2 \left\| \bR^{(\mathcal{I}_{j})}_{\textbf{2},\textbf{1}} \bpi^{(\mathcal{I}_{j})}_{\textbf{1}} \right\|_1. 
\label{eqn:bndrynb2}
\end{equation}

Now we show that the norm of 
$\left\| \bR_{\textbf{2},\textbf{1}}^{(\mathcal{I}_j)} \bpi^{(\mathcal{I}_{j})}_{\textbf{1}} \right\|_1$ 
converge to zero when the maximum copy number $M$ of the $i$-th MEG goes to infinity. 
In the block tri-diagonal matrix $\tilde{\bA}^{(\mathcal{I}_j)}$, only the boundary block 
$\bA_{M+1,\,M}^{(\mathcal{I}_j)}$ contains nonzero elements in sub-matrix 
$\bA_{\textbf{2},\,\textbf{1}}^{(\mathcal{I}_j)}$, 
and all other blocks in 
$\bA_{\textbf{2},\,\textbf{1}}^{(\mathcal{I}_j)}$ 
contain only zero entries.  
From Cauchy--Schwarz inequality, we have: 
$$
\left\| \bR_{\textbf{2},\textbf{1}}^{(\mathcal{I}_j)} \bpi^{(\mathcal{I}_{j})}_{\textbf{1}} \right\|_1
= 
\left\| \bigl[ \diag(\mathbbm{1}^T \bA_{M+1,M}^{(\mathcal{I}_j)})   \bigr]
        \bigl[ \bpi^{(\mathcal{I}_{j})}_{\textbf{1}}(\mathcal{G}_M) \bigr] \right\|_1  
\leq
\left\| \diag(\mathbbm{1}^T \bA_{M+1,M}^{(\mathcal{I}_j)}) \right\|_1 \cdot \left\| \bpi^{(\mathcal{I}_{j})}_{\textbf{1}}(\mathcal{G}_M) \right\|_1, 
$$
where $\bpi^{(\mathcal{I}_{j})}_{\textbf{1}}(\mathcal{G}_M)$ is the sub-vector corresponding to the 
state partition $\mathcal{G}_M$. 
Furthermore, according to Lemma~\ref{lm:fbs}
and Eqn.~(\ref{eqn:pininf}) after replacing the subscript $i$ in
$\tilde{\pi}_i^{(\infty)}$ with $M$ and taking into consideration of the equivalence of the infinite
space $\Omega^{(\mathcal{I}_{j})}$
and $\Omega^{(\infty)}$ in regard to truncation at $\mathcal{I}_{i}$,
 we have the probability of the boundary block $\mathcal{G}_M$: 
$\left\| \bpi^{(\mathcal{I}_{j})}_{\textbf{1}} (\mathcal{G}_M) \right\|_1 \rightarrow 0$ 
when $M \rightarrow \infty$. 
When synthesis reactions are concentration
independent (zero order reactions) as usually the case~\cite{Nelson2015}, the norm 
$\left\| \diag(\mathbbm{1}^T \bA_{M+1,M}^{(\mathcal{I}_j)}) \right\|_1$ is a constant 
representing the total synthesis rates over states in $\mathcal{G}_M$.  We have: 
$\left\| \bR_{\textbf{2},\textbf{1}}^{(\mathcal{I}_j)} \bpi^{(\mathcal{I}_{j})}_{\textbf{1}} \right\|_1 \rightarrow 0$ 
when $M \rightarrow \infty$. Therefore with Eqn.~(\ref{eqn:bndrynb2}),
we have: 
$$
\lim_{M \rightarrow \infty} \left\| \tilde{\bA}^{(\mathcal{I}_{i,j})} \bpi^{(\mathcal{I}_{j})}_{\textbf{1}} \right\|_1 = 0. 
$$
Hence, 
$$
\lim_{M \rightarrow \infty} \tilde{\bA}^{(\mathcal{I}_{i,j})} \bpi^{(\mathcal{I}_{j})}_{\textbf{1}} = \textbf{0}. 
$$
That is, when the maximum copy number limit of the $i$-th MEG 
is sufficiently large, both $\bpi^{(\mathcal{I}_{i,j})}$ and 
$\bpi^{(\mathcal{I}_{j})}_{\textbf{1}}$ are the steady state 
solutions of $\tilde{\bA}^{(\mathcal{I}_{i,j})} \by = 0$. 
According to Perron--Frobenius theorem for the transition rate 
matrix of continuous-time Markov chains~\cite{Meyer2000}, the dCME governed by
$\tilde{\bA}^{(\mathcal{I}_{i,j})}$ has a globally unique stationary
distribution.  In addition, 
by construction of the matrix, 
via enumeration of the state space,
 matrix $\tilde{\bA}^{(\mathcal{I}_{i,j})}$ is
irreducible, as all microstates in the state space can be 
reached from the initial state.  Therefore, 
the matrix $\tilde{\bA}^{(\mathcal{I}_{i,j})}$ 
has only one zero eigenvalue~\cite{Meyer2000}, both $\bpi^{(\mathcal{I}_{i,j})}$ and 
$\bpi^{(\mathcal{I}_{j})}_{\textbf{1}}$ are eigenvectors corresponding
to the eigenvalue $0$.  Therefore, we have the relationship 
$\bpi^{(\mathcal{I}_{i,j})} = c \bpi^{(\mathcal{I}_{j})}_{\textbf{1}}$, 
where $c$ is an arbitrary real number. 
As both vectors are non-negative, and 
$\mathbbm{1}^T \bpi^{(\mathcal{I}_{j})}_{\textbf{1}} \leq 1 = \mathbbm{1}^T \bpi^{(\mathcal{I}_{i,j})}$, 
there must exist an $\epsilon = 1 - \mathbbm{1}^T \bpi^{(\mathcal{I}_{j})}_{\textbf{1}} \geq 0$, 
such that $\bpi^{(\mathcal{I}_{i,j})} = (1 + \epsilon) \bpi^{(\mathcal{I}_{j})}_{\textbf{1}}$. 
According to Lemma~\ref{lm:fbs}, $\mathbbm{1}^T \bpi^{(\mathcal{I}_{j})}_{\textbf{1}} \rightarrow 1$, 
when the maximum copy number limit of the $i$-th MEG goes to infinity. 
Therefore we have $\epsilon \rightarrow 0$ when $M \rightarrow \infty$. 
Therefore, we have shown both $\bpi^{(\mathcal{I}_{i,j})} \geq \bpi^{(\mathcal{I}_{j})}_{\textbf{1}}$ 
and $\bpi^{(\mathcal{I}_{i,j})} \rightarrow \bpi^{(\mathcal{I}_{j})}_{\textbf{1}}$ component-wise, 
when the maximum copy number limit of the $i$-th MEG goes to infinity.

\end{proof}

\bibliographystyle{unsrt}      
\bibliography{dcme}

\newpage
\clearpage

\section*{Figure Legends}

\begin{figure}[ht]
\centering{\includegraphics[scale=0.8]{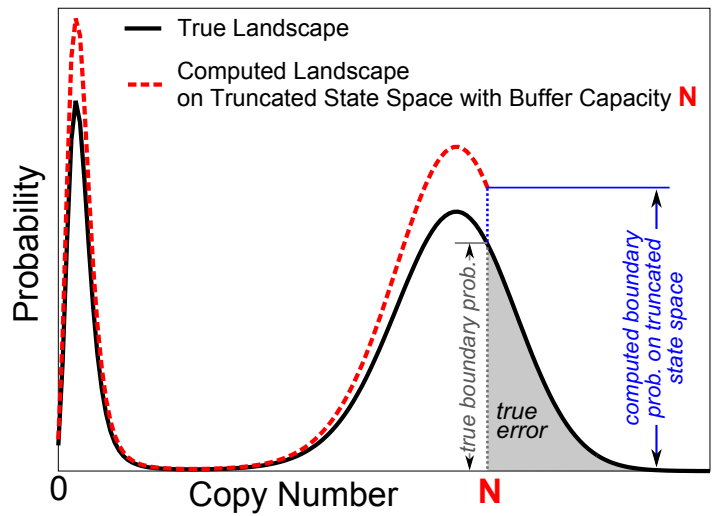}}
\caption{ An illustration for the boundary probabilities. The solid black 
line represents the true probability landscape on the exact 
state space. The dashed red line represents the probability landscape 
computed from the truncated state space with buffer capacity $N$. 
The gray shaded area represents the true error due to state space truncation 
with buffer capacity $N$. The probability of copy number $N$ on the true 
landscape is the true boundary probability, and the probability of $N$ 
on the computed landscape is the boundary probability on the 
truncated state space. In this study, we show that the computed 
boundary probability on the truncated state space can be used to 
bound the true error from the above. }
\label{fig:BN}
\end{figure}

\begin{figure}[ht]
\centering{\includegraphics[scale=0.5]{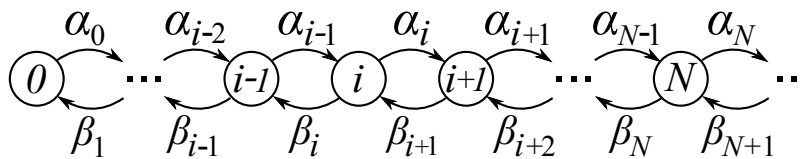}}
\caption{The birth-death system associated with the aggregated rate
  matrix $\bB$. Each circle represents an aggregated state consisting of
  all microstates with the same copy number of elementary molecules in the MEG. 
	These aggregated states are connected by aggregated birth and
  death reactions, with apparent synthesis rates $\alpha_i$ and
  degradation rates $\beta_{i+1}$ (see Lemma~\ref{lm:rma}).  }
\label{fig:bfbd}
\end{figure}

\begin{figure}[ht]
\centering{
\includegraphics[scale=0.4]{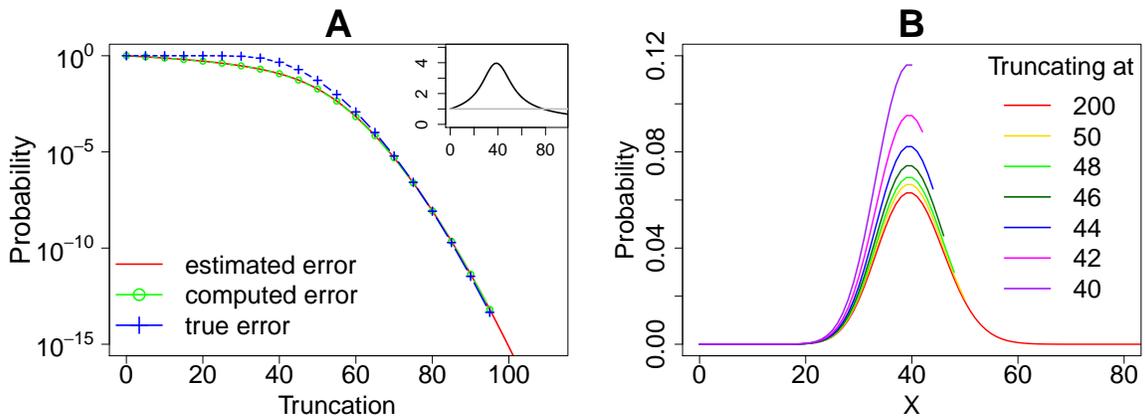}
}
\caption{Error quantification and comparisons for the birth-death model. 
  (A): The {\it a priori}\/ estimated error (red solid line), the
  computed error (green line and circles), and the true error
  (blue line and crosses) of the steady state probability
  landscape.  
The inset shows the ratio of the true errors to the computed errors at different 
sizes of the MEG, and the grey straight line marks the ratio one. 
The computed errors are larger than the true errors when the black line 
is below the grey straight line. 
(B): The steady state probability landscapes of $X$ obtained  
	with different truncations of net molecular number in the MEG. 
	Note that probability distributions end at $X$ where truncation occurs. 
	The probabilities in the landscapes are inflated when truncating the 
	state space at smaller net molecular numbers of the MEG. }
\label{fig:bd1}
\end{figure}

\begin{figure}[ht]
\centering{
\includegraphics[scale=0.5]{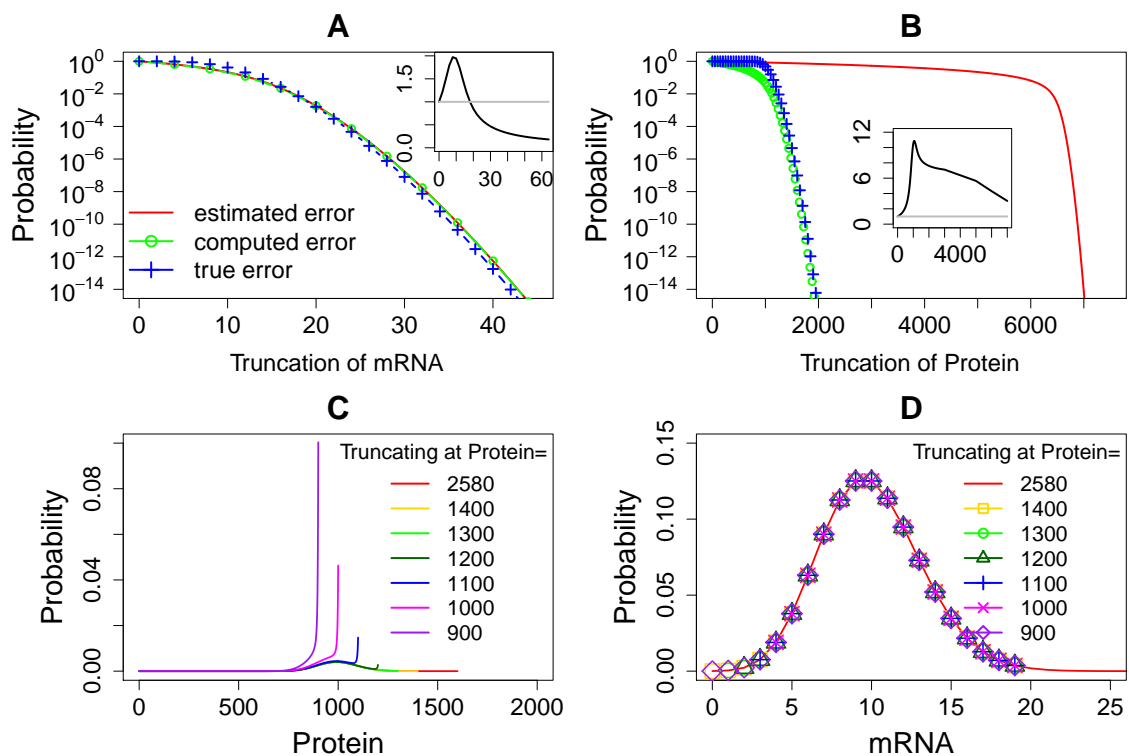}
}
\caption{Error quantification and comparisons for the single gene expression model. 
  (A) and (B): The {\it a priori}\/ estimated error (red solid lines), the
  computed error (green lines and circles), and the true error
  (blue lines and crosses) of the steady state probability
  landscapes of $mRNA$ and $Protein$ at different sizes of truncations.  
The insets in (A) and (B) show the ratio of the true errors to the computed errors at different 
sizes of the MEG, and the grey straight line marks the ratio one. 
The computed errors are larger than the true errors when the black line 
is below the grey straight line. 
	(C): The steady state probability landscapes of $Protein$ 
	solved using different truncations of net molecular number in the MEG$_2$. 
	Note that probability distributions end at where truncation occurs. 
	The probabilities in the landscapes are significantly inflated when truncating the 
	state space at smaller net molecular numbers of the corresponding MEG. 
        (D): The steady state probability landscapes of $mRNA$ 
	solved using different truncations of net molecular number in the MEG$_2$. 
	In this cases, the probabilities in the landscapes are not affected by the 
        truncation of the opposite MEG. } 
\label{fig:sge1}
\end{figure}

\begin{figure}[ht]
\centering{
\includegraphics[scale=0.5]{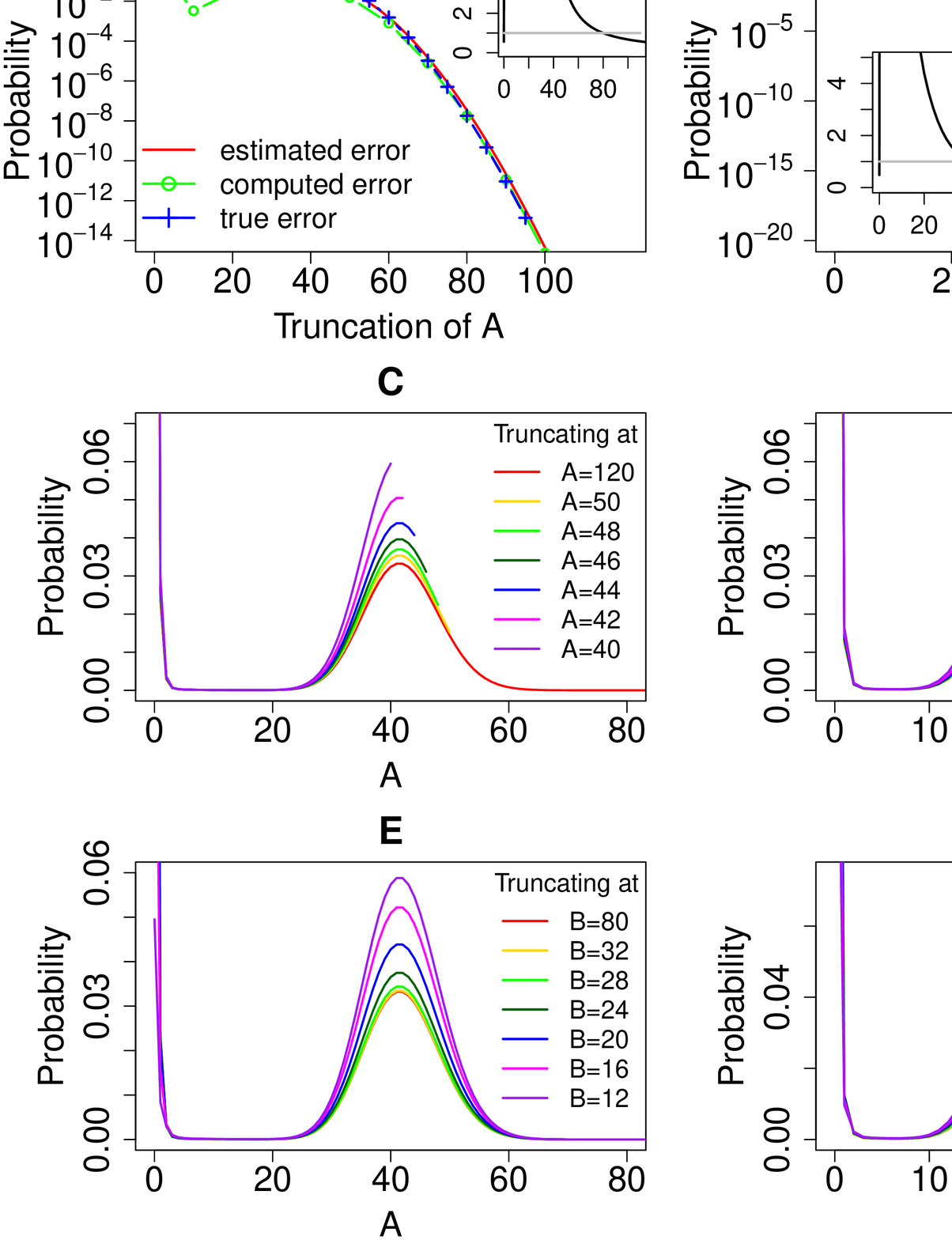} 
}
\caption{Error quantification and comparisons for the genetic toggle switch model. 
  (A) and (B): The {\it a priori}\/ estimated error (red solid lines), the
  computed error (green lines and circles), and the true error
  (blue lines and crosses) of the steady state probability
  landscapes of $A$ and $B$ at different sizes of truncations.  
The insets in (A) and (B) show the ratio of the true errors to the computed errors at different 
sizes of the MEG, and the grey straight line marks the ratio one. 
The computed errors are larger than the true errors when the black line 
is below the grey straight line. 
	(C) and (D): The steady state probability landscapes of $A$ and $B$ 
	solved using different truncations of net molecular number in the MEG$_1$ and MEG$_2$, respectively. 
	Note that probability distributions end at where truncation occurs. 
	The probabilities in the landscapes are significantly inflated when truncating the 
	state space at smaller net molecular numbers of the corresponding MEG. 
  (E) and (F): The steady state probability landscapes of $A$ and $B$ 
	solved using different truncations of net molecular number in the MEG$_2$ and MEG$_1$, respectively. 
	The probabilities in the landscapes are also significantly inflated when truncating the 
	state space at smaller net molecular numbers of the opposite MEG. }
\label{fig:tg1}
\end{figure}

\begin{figure}[ht]
\centering{
\includegraphics[scale=0.5]{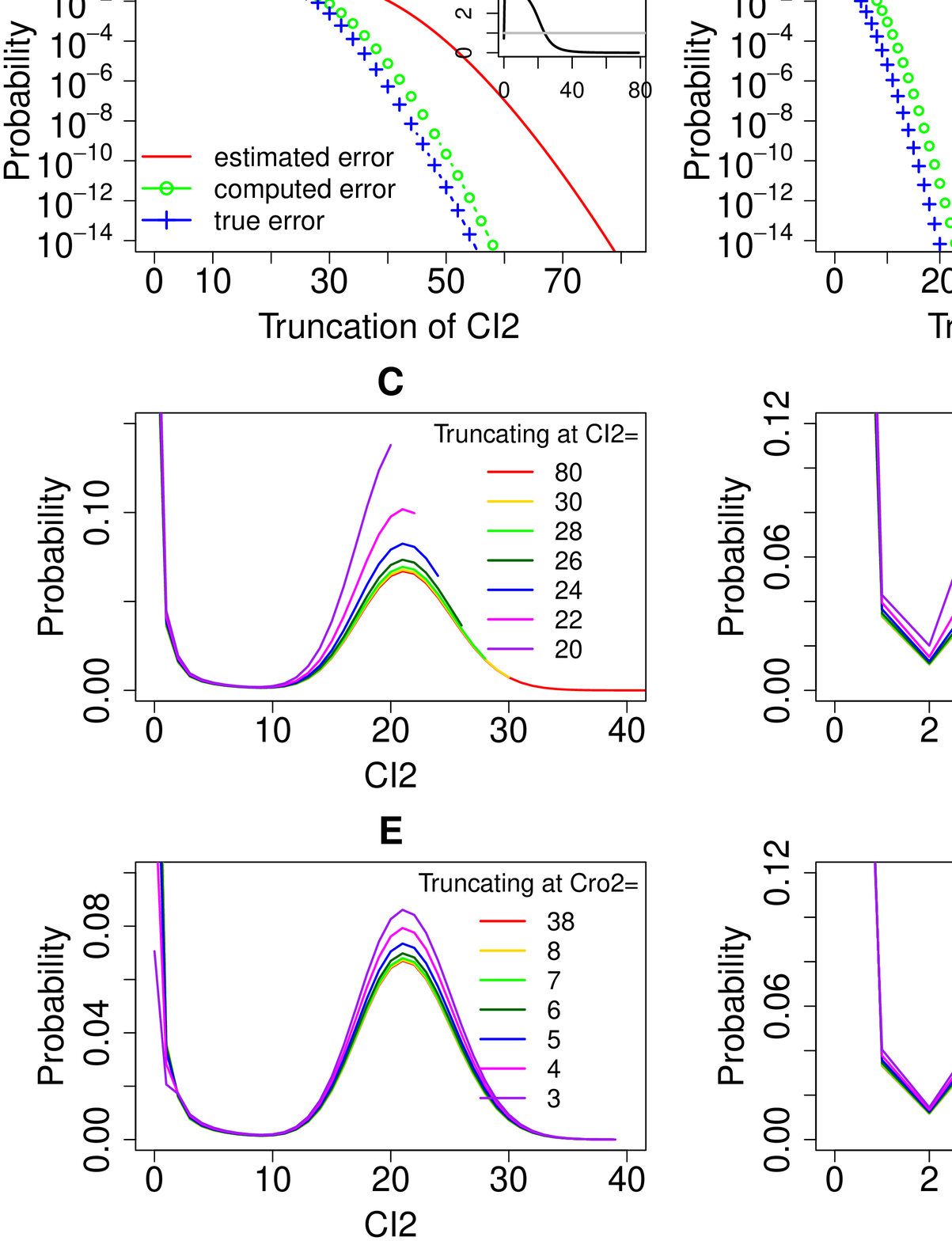} 
}
\caption{Error quantification and comparisons for the phage lambda bistable epigenetic switch model. 
  (A) and (B): The {\it a priori}\/ estimated error (red solid lines), the
  computed error (green lines and circles), and the true error
  (blue lines and crosses) of the steady state probability
  landscapes of $CI$ and $Cro$ dimers at different sizes of truncations.  
The insets in (A) and (B) show the ratio of the true errors to the computed errors at different 
sizes of the MEG, and the grey straight line marks the ratio one. 
The computed errors are larger than the true errors when the black line 
is below the grey straight line. 
	(C) and (D): The steady state probability landscapes of $CI$ and $Cro$ dimers 
	solved using different truncations of net molecular number in the MEG$_1$ and MEG$_2$, respectively. 
	Note that probability distributions end at where truncation occurs. 
	The probabilities in the landscapes are significantly inflated when truncating the 
	state space at smaller net molecular numbers of the corresponding MEG. 
  (E) and (F): The steady state probability landscapes of $CI$ and $Cro$ dimers 
	solved using different truncations of net molecular number in the MEG$_2$ and MEG$_1$, respectively. 
	The probabilities in the landscapes are also significantly inflated when truncating the 
	state space at smaller net molecular numbers of the opposite MEG. }
\label{fig:ph1}
\end{figure}

\begin{table}[!ht]
\caption{Reaction scheme and rate constants in phage lambda epigenetic switch mode. 
We use $CORn$ denotes $Cro_2$ bound operator site $ORn$, $RORn$ denotes $CI_2$
bound $ORn$, where $n$ can be 1, 2, and 3.  Note that molecular species enclosed in
parenthesis are those whose presence is required for the specific
reactions to occur, but their copy numbers do not influence the
transition rates between microstates. } 
\label{tab:phage1}
\begin{scriptsize}
\begin{center}
\begin{tabular}{|ll|}
  \hline
  Reactions & Rate constants \\
  \hline
  \multicolumn{2}{|l|}{Synthesis reactions \cite{Arkin1998,Li_PNASUSA97,Hawley_PNASUSA80,Hawley_JMB82}} \\
  \hline
  $\emptyset + (OR3 + OR2) \rightarrow CI_{2} + (OR3 + OR2)$ & $k_{sCI_{2}}=0.0069/s$ \\
  $\emptyset + (OR3 + COR2) \rightarrow CI_{2} + (OR3 + COR2)$ &  $k_{sCI_{2}}=0.0069/s$\\
  $\emptyset + (OR3 + ROR2) \rightarrow CI_{2} + (OR3 + ROR2)$ &  $k_{s1CI_{2}}=0.069/s$\\
  $\emptyset + (OR1 + OR2) \rightarrow Cro_{2} + (OR1 + OR2)$ &  $k_{sCro_{2}}=0.0929/s$ \\
  \hline
  \multicolumn{2}{|l|}{Degradation reactions \cite{Shea1985,Arkin1998}} \\
  \hline
  $CI_{2} \rightarrow \emptyset$ & $k_{dCI_{2}}=0.0026/s$ \\ 
  $Cro_{2} \rightarrow \emptyset$ & $k_{dCro_{2}}=0.0025/s$\\
  \hline
  \multicolumn{2}{|l|}{Association rate of binding reactions~\cite{Kuttler2006}} \\
  \hline
  $CI_{2} + OR1 \rightarrow ROR1$ & $k_{bOR1CI_{2}}=0.021/s$ \\
  $CI_{2} + OR2 \rightarrow ROR2$ & $k_{bOR2CI_{2}}=0.021/s$ \\
  $CI_{2} + OR3 \rightarrow ROR3$ & $k_{bOR3CI_{2}}=0.021/s$ \\
  $Cro_{2} + OR1 \rightarrow COR1$ & $k_{bOR1Cro_{2}}=0.021/s$ \\
  $Cro_{2} + OR2 \rightarrow COR2$ & $k_{bOR2Cro_{2}}=0.021/s$ \\
  $Cro_{2} + OR3 \rightarrow COR3$ & $k_{bOR3Cro_{2}}=0.021/s$ \\
  \hline
  \multicolumn{2}{|l|}{Dissociation reactions - $CI_{2}$ dissociation from OR1} \\
  \hline
  $ROR1 + (OR2) \rightarrow CI_{2} + OR1 + (OR2)$ & $0.00898/s$ \\
  $ROR1 + (ROR2 + OR3) \rightarrow CI_{2} + OR1 + (ROR2 + OR3)$ & $0.00011/s$ \\
  $ROR1 + (ROR2 + ROR3) \rightarrow CI_{2} + OR1 + (ROR2 + ROR3)$ & $0.01242/s$ \\
  $ROR1 + (ROR2 + COR3) \rightarrow CI_{2} + OR1 + (ROR2 + COR3)$ & $0.00011/s$ \\
  $ROR1 + (COR2) \rightarrow CI_{2} + OR1 + (COR2)$ & $0.00898/s$ \\
  \hline
  \multicolumn{2}{|l|}{Dissociation reactions - $CI_{2}$ dissociation from OR2} \\
  \hline
  $ROR2 + (OR1 + OR3) \rightarrow CI_{2} + OR2 + (OR1 + OR3)$ & $0.2297/s$ \\
  $ROR2 + (ROR1 + OR3) \rightarrow CI_{2} + OR2 + (ROR1 + OR3)$ & $0.0029/s$ \\
  $ROR2 + (OR1 + ROR3) \rightarrow CI_{2} + OR2 + (OR1 + ROR3)$ & $0.0021/s$ \\
  $ROR2 + (ROR1 + ROR3) \rightarrow CI_{2} + OR2 + (ROR1 + ROR3)$ & $0.0029/s$ \\
  $ROR2 + (COR1 + OR3) \rightarrow CI_{2} + OR2 + (COR1 + OR3)$ & $0.2297/s$ \\
  $ROR2 + (OR1 + COR3) \rightarrow CI_{2} + OR2 + (OR1 + COR3)$ & $0.2297/s$ \\
  $ROR2 + (COR1 + COR3) \rightarrow CI_{2} + OR2 + (COR1 + COR3)$ & $0.2297/s$ \\
  $ROR2 + (ROR1 + COR3) \rightarrow CI_{2} + OR2 + (ROR1 + COR3)$ & $0.0029/s$ \\
  $ROR2 + (COR1 + ROR3) \rightarrow CI_{2} + OR2 + (COR1 + ROR3)$ & $0.0021/s$ \\
  \hline
  \multicolumn{2}{|l|}{Dissociation reactions - CI dissociation from OR3} \\
  \hline
  $ROR3 + (OR2) \rightarrow CI_{2} + OR3 + (OR2)$ & $1.13/s$ \\
  $ROR3 + (ROR2 + OR1) \rightarrow CI_{2} + OR3 + (ROR2 + OR1)$ & $0.0106/s$ \\
  $ROR3 + (ROR2 + ROR1) \rightarrow CI_{2} + OR3 + (ROR2 + ROR1)$ & $0.0106/s$ \\
  $ROR3 + (ROR2 + COR1) \rightarrow CI_{2} + OR3 + (ROR2 + COR1)$ & $0.0106/s$ \\
  $ROR3 + (COR2) \rightarrow CI_{2} + OR3 + (COR2)$ & $1.13/s$ \\
  \hline
  \multicolumn{2}{|l|}{Dissociation reactions - Cro dissociation from OR1} \\
  \hline
  $COR1 + (OR2) \rightarrow Cro_{2} + OR1 + (OR2)$ & $0.0202/s$ \\
  $COR1 + (ROR2) \rightarrow Cro_{2} + OR1 + (ROR2)$ & $0.0202/s$ \\
  $COR1 + (COR2 + OR3) \rightarrow Cro_{2} + OR1 + (COR2 + OR3)$ & $0.0040/s$ \\
  $COR1 + (COR2 + ROR3) \rightarrow Cro_{2} + OR1 + (COR2 + ROR3)$ & $0.0040/s$ \\
  $COR1 + (COR2 + COR3) \rightarrow Cro_{2} + OR1 + (COR2 + COR3)$ & $0.0040/s$ \\
  \hline
  \multicolumn{2}{|l|}{Dissociation reactions - Cro dissociation from OR2} \\
  \hline
  $COR2 + (OR1 + OR3) \rightarrow Cro_{2} + OR2 + (OR1 + OR3)$ & $0.1413/s$ \\
  $COR2 + (ROR1 + OR3) \rightarrow Cro_{2} + OR2 + (ROR1 + OR3)$ & $0.1413/s$ \\
  $COR2 + (OR1 + ROR3) \rightarrow Cro_{2} + OR2 + (OR1 + ROR3)$ & $0.1413/s$ \\
  $COR2 + (ROR1 + ROR3) \rightarrow Cro_{2} + OR2 + (ROR1 + ROR3)$ & $0.1413/s$ \\
  $COR2 + (COR1 + OR3) \rightarrow Cro_{2} + OR2 + (COR1 + OR3)$ & $0.0279/s$ \\
  $COR2 + (OR1 + COR3) \rightarrow Cro_{2} + OR2 + (OR1 + COR3)$ & $0.053/s$ \\
  $COR2 + (COR1 + COR3) \rightarrow Cro_{2} + OR2 + (COR1 + COR3)$ & $0.0328/s$ \\
  $COR2 + (ROR1 + COR3) \rightarrow Cro_{2} + OR2 + (ROR1 + COR3)$ & $0.053/s$ \\
  $COR2 + (COR1 + ROR3) \rightarrow Cro_{2} + OR2 + (COR1 + ROR3)$ & $0.0279/s$ \\
  \hline
  \multicolumn{2}{|l|}{Dissociation reactions - Cro dissociation from OR3} \\
  \hline
  $COR3 + (OR2) \rightarrow Cro_{2} + OR3 + (OR2)$ & $0.0022/s$ \\
  $COR3 + (ROR2) \rightarrow Cro_{2} + OR3 + (ROR2)$ & $0.0022/s$ \\
  $COR3 + (COR2 + OR1) \rightarrow Cro_{2} + OR3 + (COR2 + OR1)$ & $0.0008/s$ \\
  $COR3 + (COR2 + ROR1) \rightarrow Cro_{2} + OR3 + (COR2 + ROR1)$ & $0.0008/s$ \\
  $COR3 + (COR2 + COR1) \rightarrow Cro_{2} + OR3 + (COR2 + COR1)$ & $0.003/s$ \\
  \hline
\end{tabular}
\end{center}
\end{scriptsize}
\end{table}

\end{document}